\providecommand{\tabularnewline}{\\}
 \renewenvironment{proof}[1][\proofname]
{\par\pushQED{\qed}\normalfont\topsep6\p@\@plus6\p@\relax\trivlist\item[\hskip\labelsep\bfseries#1\@addpunct{.}]\ignorespaces}{\popQED\endtrivlist\@endpefalse} 
\theoremstyle{plain}
\newtheorem{thm}{Theorem}
\newtheorem{lem}{Lemma}
\newtheorem{defn}{Definition}
\newtheorem{exmp}{Example}
\newtheorem{cor}{Corollary}
\newtheorem{prop}{Proposition}
\date{August 31, 2015}
\begin{document}

\author{Yangbo Song\thanks{Department of Economics, UCLA. Email: darcy07@ucla.edu.}
\and Mihaela van der Schaar\thanks{Department of Electrical Engineering, UCLA, and Director of UCLA Center for Engineering Economics, Learning, and Networks. Email: mihaela@ee.ucla.edu.}}

\title{Dynamic Network Formation with Foresighted Agents\thanks{We are grateful to William Zame, Simpson Zhang, Kartik Ahuja and a number of seminar audiences for suggestions which have significantly improved the paper. This work was partially funded by ONR Grant N00014-15-1-2038.}}
\maketitle
\begin{abstract}
What networks can form and persist when agents are self-interested?
Can such networks be efficient? A substantial theoretical literature
predicts that the only networks that can form and persist must have
very special shapes and that such networks cannot be efficient, but
these predictions are in stark contrast to empirical findings. In
this paper, we present a new model of network formation. In contrast
to the existing literature, our model is dynamic (rather than static),
we model agents as \textit{foresighted} (rather than myopic) and we
allow for the possibility that agents are \textit{heterogeneous} (rather
than homogeneous). We show that a very wide variety of networks can
form and persist; in particular, efficient networks \textit{can} form
and persist if they provide every agent a strictly positive payoff.
For the widely-studied connections model, we provide a full characterization
of the set of efficient networks that can form and persist. Our predictions
are consistent with empirical findings.

\begin{flushleft}
\textbf{Keywords:} Network formation, Foresight, Heterogeneity, Incomplete
Information, Efficiency 
\par\end{flushleft}

\begin{flushleft}
\textbf{JEL Classification:} A14, C72, D62, D83, D85 
\par\end{flushleft}
\end{abstract}
\newpage{}

\section{Introduction}

Much of society is organized in networks and networks are important
because individuals typically interact largely or perhaps entirely
with those to whom they are closest in the network (not necessarily
physically). Examples include social networks (Facebook), professional
networks (LinkedIn), trading networks (Tesfatsion\cite{Tesfatsion}),
channels of information sharing (Chamley and Gale\cite{CG}), buyer-seller
networks (Kranton and Minehart\cite{KM}), etc. The questions which
this paper addresses are: What networks form and emerge at equilibrium
if agents are foresighted? Are these networks efficient? Are the emerging
networks robust to various types of deviations?

Network formation has been widely studied both in theoretical and
empirical settings. Starting from Jackson and Wolinsky\cite{JW} and
Bala and Goyal\cite{BG}, a large and growing theoretical literature
in economics has studied what networks form when self-interested and
strategic agents make decisions about which links to establish or
sever with other agents. Past research has mainly focused on characterizing
the networks emerging at equilibrium and determining whether socially-efficient
networks can be supported in equilibrium.

In the existing literature, network formation has been either modeled
in a static setting (e.g. Jackson and Wolinsky\cite{JW}), where agents
take actions only once and simultaneously, or a dynamic setting (e.g.
Watts\cite{Watts}) where agents meet other agents randomly over time
and choose their actions (on whether to form or break a link with
another agent) whenever they are allowed to do so. Both these model
have two important limitations: i) agents are \textit{myopic}, i.e.
their actions at any point in time are solely guided by their current
payoffs, without consideration of future consequences; and ii) agents
are \textit{homogeneous,} i.e. an agent's payoff depends only on the
network topology and her position in the network, but not on her own
characteristics or the characteristics of her peers. Such limitations
make this literature unable to model and characterize real-world social
and economic interactions. One of the key conclusions of this branch
of literature is that efficiency cannot be attained in equilibrium
and the set of achievable network topologies and that of efficient
network topologies often differ. The existing limited work on network
formation with foresighted agents (see for example Dutta et al.\cite{Dutta})
shares this negative result: they show that there are various valuation
structures in which no equilibrium can sustain efficient network topologies.
These findings are in stark contrast to empirical findings which suggest
that efficient network topologies often emerge in practice.

In this paper, we provide a first model and comprehensive analysis
of dynamic network formation that does not suffer from the abovementioned
limitations. We adopt a standard dynamic network formation game, where
agents meet one another randomly at discrete times, and they choose
whether to form or break links with other agents. As it is standard
in the network formation literature, we assume that link formation
requires bilateral consent while link severance is unilateral. We
relax two key limitations of the existing work. In our model, agents
are \textit{heterogeneous,} i.e. their payoffs depend on their own
characteristics and the characteristics of their peers as well as
the topology of the network and their position in the network. More
importantly, agents are \textit{foresighted}: in making decisions
they take into account both the current consequences of their decisions
and the future consequences of these decisions. Foresight plays an
important role in virtually all environments and would seem to play
an especially important role in network formation: agents may incur
substantial costs to form or maintain links that yield small current
benefits because they (correctly) foresee that forming these links
will encourage linking behavior of others in a way that yields large
future benefits. Analysis that ignores the effects of foresightedness
(and treats behavior as myopic) misses an important piece of agents'
cost-benefit analysis. With hindsight, it is perhaps not surprising
that such analysis -- while simpler than the analysis carried out
here -- leads to predictions that are less consistent with empirical
observations of real-life networks.

The dynamic game we analyze is a \textit{stochastic game} in which
states are determined by the random selection process as well as the
agents' actions. When types are private knowledge, this game is also
a \textit{Bayesian game,} in which agents learn dynamically and update
their beliefs based on observation of the formation history. As is
common in the literature on dynamic games, our focus is characterizing
the equilibrium behavior and outcomes when agents are patient. However,
instead of characterizing the set of achievable \textit{payoffs} as
in the repeated-game literature, we aim to characterize the set of
\textit{networks} which persist (do not change) in equillibrium (forever).

Our main findings are presented in three theorems. Theorem 1 is a
Network Convergence Theorem for the setting where agents have complete
information about characteristics of others. It demonstrates that
networks that yield each agent a positive one-period payoff can persist
in the long-run. The equilibrium strategies we construct are Markov,
and robust in several dimensions: to initial configurations, to agent
trembles, and to group deviations.

For a widely studied special case of our model, the connections model,
our results are directly comparable to previous theoretical and empirical
work, especially with respect to the sustainability of efficient networks.
Theorem 2 extends the setting of Jackson and Wolinsky\cite{JW} to
heterogeneous agents who act foresightedly. This extension yields
predictions that are much more consistent with empirical findings
than previous work. In particular, we predict that efficient networks
may be obtained in equilibrium, in contrast to \cite{JW} which predicts
that efficient networks almost never obtain. (See below for further
discussion.)

Theorem 3 is a Network Convergence Theorem for the setting in which
information is incomplete: agents begin with prior beliefs about the
characteristics of others and perform Bayesian updates based on observed
history. Under natural assumptions about the valuation structure,
there exists an equilibrium in which patient agents are incentivized
to reveal their types by making connections. In such an equilibrium,
information becomes ultimately complete and again, any network yielding
a positive one-period payoff for every agent can be sustained. This
result points to a tractable equilibrium strategy profile that covers
the range of sustainable network topologies and involves a simple
updating process.

In summary, our results yield a new and positive basis for the sustainability
of efficiency in networks: in settings in which efficient networks
provide every agent with a positive payoff, these networks \textit{can}
be sustained in equilibrium as long as agents are patients. This is
true for both the complete and incomplete information case. Nevertheless,
we find that in typical cases the agents need to be more patient under
incomplete information than under complete information to achieve
efficiency.

The remainder of the paper is organized as follows: Section 2 provides
a review of the related literature. Section 3 introduces the model.
Section 4 presents the Network Convergence Theorem under complete
information on types, with an explicit construction of equilibrium
strategies and a discussion on robustness. Section 5 analyzes the
connections model and characterizes the generically unique efficient
network topology. Section 6 introduces the Network Convergence Theorem
under incomplete information and illustrates the contrast between
complete and incomplete information. Section 7 concludes the paper.

\section{Literature Review}

The existing economics literature on network formation can be generally
categorized in two main classes: settings in which the agents link
formation is bilateral or unilateral, i.e. whether the creation of
a link requires bilateral consent of both agents involved, or can
be done unilaterally by an agent. Numerous social networks applications
such as Facebook, Google+ etc. are best modeled using models appertaining
to the first category, while Twitter is best modeled in the second
category. The well-known \textit{connections model} by Jackson and
Wolinsky\cite{JW} falls into the first category as do the models
considered in this paper. Considering static strategic environment
with homogeneous and myopic agents, \cite{JW} argued that strongly
efficient networks are necessarily either empty or a star or a clique
and that in generic cases strongly efficient networks cannot be sustained
via agents' self-interested behavior. In the second category, Bala
and Goyal\cite{BG} provided a comprehensive analysis which yields
quite different predictions on the efficient network topologies and
the equilibrium network topologies. However, we will not elaborate
on their results in this paper since our models falls into the first
category.

A more recent development in static network formation is to introduce
heterogeneity among agents. Heterogeneity takes different forms in
different branches in the literature, but it can be divided into two
main categories. The first type of heterogeneity is \textit{exogenous},
such as different failure probabilities for different links (Haller
and Sarangi\cite{HS}) and agent-specific values and costs (Galeotti\cite{Galeotti1},
Galeotti et al.\cite{Galeotti2}). The second type is \textit{endogenous}
heterogeneity, often represented as the amount of valuable resource
produced by the agents themselves, as in Galeotti and Goyal\cite{GG}.
In our paper, we adopt the first approach since the heterogeneity
we focus on is an agent's endowed individual characteristic. We assume
a more general framework than most existing literature by assigning
each agent a \textit{type} that may affect others' payoffs as well
as her own. We conduct our analysis in both cases where types are
common knowledge (complete information) and where they are private
knowledge (incomplete information).

Another strand of literature describes network formation as an interactive
process over time, instead of a one-shot, static action profile. Again,
various methods have been proposed. For instance, Johnson and Gilles\cite{JG}
and Deroian\cite{Deroian} analyze variations of the connections model
in a finite sequential game, and Konig et al.\cite{Konig} models
network formation as a continuous-time Markov chain with random arrival
of link creation opportunities. In this paper, we follow the network
formation game introduced by Watts\cite{Watts}, in which pairs of
agents are selected randomly on a discrete and infinite time line
to update the potential link between them. A link is formed or maintained
with bilateral consent, and not formed or severed if either agent
chooses to do so. This framework and variations of it have been widely
adopted to analyze strategic interactions in social and economic networks
(Jackson and Watts\cite{JWa}, Skyrms and Pemantle\cite{SP}, Song
and van der Schaar\cite{SV}).

In works adopting this dynamic model, agent myopia is a common assumption,
which means that agents only take into account their current payoffs
at every point of decision; another prevalent assumption is agent
homogeneity, with the exception of our prior work in \cite{SV}, where
we analyzed a variation of the connections model where an agent's
payoff is affected by others' types but not his own. In terms of sustaining
efficient networks, predictions made are similar to Jackson and Wolinsky
\cite{JW}: the strongly efficient network cannot be sustained at
all times in the formation process. The formation and persistence
of the strongly efficient network is random -- it depends on the realized
selection of agent pairs in the early state -- and as a result the
probability of sustaining the efficient network decreases as the number
of agents increases.

There have been a few attempts to introduce foresightedness into the
dynamic network formation, but overall this topic remains understudied.
This paper is related to Dutta et al.\cite{Dutta}, who also adopted
the model of Watts\cite{Watts} and assumed that agents take future
payoffs into account. Their main result once again points to the impossibility
of sustaining efficient networks in equilibrium by constructing a
representative example. The major difference between our paper and
this paper is that we allow for a \textit{public signal} in the definition
of a state in a Markov strategy profile -- in this way, the agents
may have only limited knowledge about the past formation history but
will still be able to cooperate in achieving efficiency. Our positive
result on sustaining efficient networks holds for a more general valuation
structure than most existing frameworks. Alternative models on foresightedness
in network formation include Page Jr. et al.\cite{Page} and Herings
et al.\cite{Herings}, whose solution concept is a pairwise stable
network instead of equilibrium. Yet again the efficiency-related results
point to cases where the strongly efficient network cannot be sustained
even if it provides each player a positive payoff.

Our approach to the analysis and in particular our construction of
equilibrium strategies owes a great deal to the work of Dutta\cite{Dutta2}
and Forges\cite{Forges}. In particular, we share the general notion
of characterizing patient agents' behavior (though in the sense of
network topologies formed instead of payoffs attained) and our construction
of equilibrium strategy profiles benefit from the existence of a ``uniform
punishment strategy'' mentioned in Forges\cite{Forges}.

There are numerous empirical studies characterizing the properties
of real-world networks. The major properties identified by these works
are: short diameter (Albert and Barabasi\cite{BA}), high clustering
(Watts and Strogatz\cite{WS}), positive assortativity (Newman\cite{Newman1}\cite{Newman2}),
and inverse relation between clustering coefficient and degree (Goyal
et al.\cite{Goyal}). Moreover, experimental studies such as Falk
and Kosfeld\cite{FK}, Corbae and Duffy\cite{CD}, Goeree et al.\cite{Goeree}
and Rong and Houser\cite{RH} have indicated that typical equilibrium
network topologies predicted by the existing theoretical analysis,
especially the star network, only emerge in a small fraction of experimental
outcomes. Last but not least, Mele\cite{Mele} and Leung\cite{Leung}
show that networks formed in large social communities, where agents
are heterogeneous and withhold certain private information, often
exhibit patterns not predicted by existing theoretical literature.
In the subsequent analysis, we will discuss most of the above properties
and illustrate how they can be accounted for in our framework.

\section{Model}

\subsection{Network Topology}

Consider a group of agents $I=\{1,2,...,N\}$. We consider undirected
networks. Thus, a $\textit{network}$ is a collection of unordered
pairs of distinct elements of \textit{I}: ${\bf {g}}\subset\mathcal{{G}}(I)=\{ij:i,j\in I,i\neq j\}$.
$ij$ is called a $\textit{link}$ between agents $i$ and $j$. A
network ${\bf {g}}$ is \textit{empty} if ${\bf {g}}=\varnothing$.
(Agents who are not linked to anyone are \textit{singletons}; in the
empty network, all agents are singletons.) Let $G(I)=2^{\mathcal{{G}}(I)}$
denote the set of all possible networks. Given a subset of agents
$I'$, let $G_{I'}$ denote a network that is formed within $I'$.

Given a network ${\bf {g}}$ we say that agents $i$ and $j$ are
$\textit{connected}$, denoted $i\overset{{\bf {g}}}{\leftrightarrow}j$,
if there exist $j_{1},j_{2},...,j_{n}$ for some $n$ such that $ij_{1},j_{1}j_{2},...,j_{n}j\in{\bf {g}}$.
Let $d_{ij}$ denote the $\textit{distance}$, or the smallest number
of links between $i$ and $j$. If $i$ and $j$ are not connected,
define $d_{ij}:=\infty$.

Let $N({\bf {g}})=\{i|\exists j\text{ s.t. }ij\in{\bf {g}}\}$ be
the set of non-singletons, and let $N_{i}({\bf {g}})=\{j:ij\in{\bf {g}}\}$
be the set of \textit{neighbors} of \textit{i}. A $\textit{component}$
of network ${\bf {g}}$ is a maximal connected sub-network, i.e. a
set $C\subset{\bf {g}}$ such that for all $i\in N(C)$ and $j\in N(C)$,
$i\neq j$, we have $i\overset{C}{\leftrightarrow}j$, and for all
$i\in N(C)$ and $j\in N({\bf {g}})$, $ij\in{\bf {g}}$ implies that
$ij\in C$. Let $C_{i}$ denote the component that contains link $ij$
for some $j\neq i$. Unless otherwise specified, in the remaining
parts of the paper we use the word ``component'' to refer to any
$\textit{non-empty}$ component.

A network ${\bf {g}}$ is said to be \textit{empty} if ${\bf {g}}=\varnothing$,
and \textit{connected} if ${\bf {g}}$ has only one component which
is itself. ${\bf {g}}$ is \textit{minimal} if for every component
$C\subset{\bf {g}}$ and every link $ij\in C$, the absence of $ij$
would disconnect at least one pair of formerly connected agents. ${\bf {g}}$
is \textit{minimally connected} if it is minimal and connected.

\subsection{Dynamic Network Formation Game}

We adopt the framework by Watts\cite{Watts} to formulate the network
formation game. Time is discrete and the horizon is infinite: $t=1,2,...$.
We assume an initial network ${\bf {g}}(0)$; this is a parameter.
The game is played as follows: 
\begin{itemize}
\item {1.} In each period, a pair of agents $(i,j)$ is randomly selected
with equal probabilities to update the link between them. 
\item {2.} The two selected agents (each knowing the identity of the other)
then play a simultaneous move game: if there is a link between them,
each can choose to sever the link or not; if there is no link between
them, each can choose to form a link or not. An existing link can
be severed unilaterally, whereas formation of a link requires mutual
consent. 
\item {3.} In addition, in each period every agent (whether or not she
is selected in the current period) can choose to sever any of her
existing links. 
\end{itemize}
It is convenient not to distinguish between severing a link and not
forming a link. Hence, for each agent \textit{i} and each agent \textit{j},
\textit{i} has two possible actions with respect to \textit{j}: $a_{ij}=1$
denotes the action that $i$ agrees to form a link with $j$ (if there
is no existing link) or not to sever the link (if there is an existing
one), and $a_{ij}=0$ otherwise. We emphasize that a link is formed
or maintained after bilateral consent (i.e. $a_{ij}=a_{ji}=1$). Write
$\mathcal{A}=\{0,1\}$.

Let $\phi(t)$ be the pair of agents selected in period $t$ and let
$\sigma(t):=\{\phi(\tau),{\bf {g}}(\tau)\}_{\tau=1}^{t}$ denote a
\textit{formation history} or a \textit{formation path} up to time
$t$, with the initial condition that $\sigma(0)=(\varnothing,\varnothing)$.
Let $\Sigma=\{\sigma(t):t\in\mathbb{N}\}$ denote the set of all possible
formation histories. It is important to note that the formation history
is different from the sequence of actions taken in two aspects. Conceptually,
the formation history is a record of the evolution of the network
from an outsider's point of view. In other words, it is the set of
all possible \textit{public information}, whereas the actions are
part of the agents' \textit{private information}. Technically, even
though the formation history is determined by actions taken over time,
it does not perfectly reveal every action. For instance, seeing a
link broken or not formed in the formation history only implies that
at least one of the two related agents chose action 0 in that period,
but it does not identify the agent(s) who did so.

Agents may not observe the entire formation history, but in each period
every agent $i$ knows its neighbors $N_{i}({\bf {g}})$, i.e. the
set of agents she links to. In addition, in each period the agents
observe a \textit{public signal} which is generated by a signal device
$y:\Sigma\rightarrow Y$, where $Y$ is the set of signal realizations.
We sometimes refer to $y$ as the \textit{monitoring structure} in
the remainder of this paper. In general, the signal generated may
depend on the entire formation history and not only on the current
actions. (Of course, the latter is a special case.) We assume that
$y$ and $Y$ are common knowledge.

The signal device determines what agents know about the formation
history. For instance, if $Y=\{0\}$ and $y(\sigma(t))=0$ then agents
have no knowledge of the formation history whatsoever; if $Y=\Sigma$
and $y(\sigma(t))=\sigma(t)$ then agents have complete knowledge
of the formation history. If agents observe the events of each period,
i.e. $(\phi(\tau),{\bf {g}}(\tau))$ for each $\tau$, then they implicitly
observe the full history $\sigma(t)$. In general however, some incomplete
monitoring structure cannot be generated by single period reports;
see our discussion in Section 5. Intermediate signals structures represent
incomplete observability. Note however, that $y$ is deterministic
instead of random, so our notion of incomplete observability is different
from the perhaps more familar notion of imperfect monitoring.

In the various applications of this model, especially social networks,
the signal device can be interpreted as a news media, e.g. a newspaper,
a television program or a website. It will not record everything in
the past for its audience, but it broadcasts important events that
attract public attention or irregular or inappropriate activities
by certain individuals. As we will see even incomplete knowledge about
the formation history is sufficient to sustain efficient networks
in equilibrium.

\subsection{Payoff Structure}

Each agent has a type, denoted by $\theta_{i}$ for agent $i$. Let
$\Theta\subset\mathbb{R}$ denote the set of possible types. Let $\bar{\theta}=(\theta_{1},\theta_{2},\cdots)$
denote the type vector for the whole group of agents. Given a subset
of agents $I'$, let $\bar{\theta}_{I'}$ denote the associated type
vector.

The one-period payoff of agent $i$ depends on the network structure
and the type vector. Specifically, this payoff is a function $u_{i}:\Theta^{N}\times G(I)\rightarrow\mathbb{R}$.
We assume that the payoff to agent $i$ is zero whenever agent $i$
is a singleton: $u_{i}(\bar{\theta},{\bf {g}})=0$ for all ${\bf {g}}$
in which $i$ is a singleton, regardless of $\bar{\theta}$ and ${\bf {g}}$.
Also, we assume that each agent's payoff satisfies \textit{component
independence}: $u_{i}(\bar{\theta},{\bf {g}})=u_{i}(\bar{\theta},C_{i})$.

For each agent, her payoff is realized in \textit{every period}, though
payoffs in different periods may well be different according to the
network topology. A payoff that realizes $t$ periods from now is
discounted by $\gamma^{t}$, where $\gamma\in(0,1)$ is the \textit{time
discount factor}. Hence, if the vector of networks that form over
time is $\bar{g}=\{{\bf {g}}(\tau)\}_{\tau=1}^{\infty}$, agent $i$'s
total (discounted) payoff evaluated at period $t$ is 
\begin{align*}
U_{i}(\bar{\theta},\bar{{\bf {g}}},t)=\sum_{\tau=0}^{\infty}\gamma^{\tau}u_{i}(\bar{\theta},{\bf {g}}(t+\tau)).
\end{align*}

If the network is constant from time $t$ onward, this reduces to
$(1-\gamma)^{-1}u_{i}(\bar{\theta},{\bf {g}}(t))$.

In our analysis, we will discuss the possibility of converging to
an efficient network structure. Following the convention in the literature,
our benchmark for efficiency will be the \textit{strongly efficient
network}, i.e. the network that yields the largest sum of one-period
payoffs. We provide a formal definition below.

\begin{defn}[Strong efficiency]Given $\bar{\theta}$, a network ${\bf {g}}$
is \textbf{strongly efficient} if $\sum_{i=1}^{N}u_{i}(\bar{\theta},{\bf {g}})\geq\sum_{i=1}^{N}u_{i}(\bar{\theta},{\bf {g}}')$
for every ${\bf {g}}'$. \end{defn}

Since the number of possible network topologies is finite, a strongly
efficient network always exists.

Another type of network we identify is a \textit{core-stable network}.
In a later section we will demonstrate that such a network entails
important additional properties of the formation process.

\begin{defn}[Core-Stable network]A network ${\bf {g}}$ is \textbf{core-stable}
if there exists no subgroup of agents $I'\subset I$ and network ${\bf {g}}'$
among $I'$ (that is, there is no link $ij\in{\bf {g}'}$ with $i\in I'$
and $j\in I\setminus I'$) such that 
\begin{align*}
u_{i}(\bar{\theta},{\bf {g}}')\geq u_{i}(\bar{\theta},{\bf {g}}),
\end{align*}
for every agent $i\in I'$, and the inequality is strict for some
agent $i\in I'$. If ${\bf {g}}$ is not core-stable, we say that
${\bf {g}}'$ \textbf{blocks} ${\bf {g}}$ and call $I'$ a \textbf{blocking
group}. \end{defn}

We use the term core-stability because this criterion discourages
any subgroup of agents to break away from the network and form a sub-network
on their own. Note that this is different from pairwise stability
defined in Jackson and Wolinsky\cite{JW}. Pairwise stability of a
network means that between any two agents, forming a new link cannot
benefit both and severing an existing link must hurt at least one,
\textit{holding other links in the network constant}. It is a widely
used solution concept for the static analysis of network formation.
Core-stability is more suitable in our dynamic setting because foresighted
agents will look ahead to the possibility of cooperation with a group
of other agents, not just a single other agent.

\subsection{Example: Connections Model}

We use the connections model in Jackson and Wolinsky\cite{JW} to
illustrate how the network topology itself, the agents' positions
and the type vector affect an agent's payoff. The connections model
is widely applied in the network formation literature.

The payoff structure in this model is described as follows. There
is a mapping $f$ from $\Theta$ to $\mathbb{R}^{++}$ that specifies
payoffs from direct connections: if agent $i$ is directly connected
with agent $j$ ($ij\in{\bf {g}}$), then agent $i$ gets payoff $f(\theta_{j})$
and agent $j$ gets payoff $f(\theta_{i})$ from this connection.
In addition, if agent $i$ is indirectly connected to agent $j$,
then $i$ obtains the payoff $f(\theta_{j})$ discounted by $\delta^{d_{ij}-1}$,
where $\delta\in(0,1)$ is the \textit{spatial discount factor}, and
$d_{ij}$ is the \textit{distance} between $i$ and $j$ measured
in the number of links. Finally, agent $i$ pays a cost of $c>0$
per period for every link that $i$ has. Hence, in a single period
with network ${\bf {g}}$, agent $i$'s current payoff is 
\begin{align*}
u_{i}(\bar{\theta},{\bf {g}})=\sum_{j:i\overset{{\bf {g}}}{\leftrightarrow}j}\delta^{d_{ij}-1}f(\theta_{j})-\sum_{j:ij\in{\bf {g}}}c.
\end{align*}

It is easy to see that the above payoff structure satisfies the assumptions
we made in the previous section. In the original and widely adopted
version of the model, agents are assumed to be homogeneous, i.e. $f(\theta_{j})$
is a constant independent of $j$. In the above formulation, the agents
are heterogeneous: an agent's payoff obtained from a connection depends
on the type of the agent it connects to. Note that the payoff structure
exhibits non-local externalities: though an agent gets a positive
payoff from each agent she connects to, she only pays a cost for each
link she maintains. Moreover, an agent's payoff depends both on the
network topology as well as an agent's position. In particular, agents
who are distantly connected obtain lower payoffs from their connection
than agents who are closely connected. In various applications, this
spatial discount can be regarded as the decay of a valuable resource
or information due to increased noise or risk. In a later section,
we will discuss the connections model in more details and present
important related results.

\section{Network Convergence Theorem with Complete Information}

In this section, we characterize the set of networks that can persist
in equilibrium when agents are patient, assuming that the type vector
is commonly known. We start by defining strategies in this environment
and the concept of an equilibrium. In particular, we are interested
in equilibria in which the network formation process converges, i.e.
over time the network rests on a specific topology which then persists
forever.

\subsection{Strategy, Equilibrium and Convergence}

Fix the signal structure $Y$. A (pure) \textit{strategy} of agent
$i$ is a mapping that assigns, following every history, an action
in $\{0,1\}$ to every other agent $j$. The constraint on this mapping
is that if $i$ and $j$ are not linked and the pair $(i,j)$ is not
selected in the current period, then agent $i$'s action towards agent
$j$ has to be $0$. Formally, let $\omega_{ij}\in\Omega=\{0,1\}$
denote the state of whether the pair $(i,j)$ is selected, and let
$\zeta_{ij}\in Z=\{0,1\}$ denote the state of whether $i$ and $j$
are linked in the current period. Write $\mathcal{H}$ for the set
of histories of public signals. \begin{defn}[Strategy]A (pure) \textbf{public
strategy} of agent $i$ is a mapping $s_{i}$: 
\begin{align*}
s_{i}:(I-i)\times\mathcal{{H}}\times\Omega\times Z\rightarrow\mathcal{A}
\end{align*}
such that $s_{ij}^{y}(\cdot,\cdot,0,0)\equiv0$. \end{defn} Let $S$
denote the set of all public strategies. (As is customary, we assume
that agents condition only on the public signal.)

Throughout the paper, we will focus on \textit{Markov strategies},
which by definition depend not on the entire history of signals but
only on the current signal. Hence, a Markov strategy is a mapping
$s_{i}:(I-i)\times Y\times\Omega\times Z\rightarrow\mathcal{A}$.

Associated with the device for public signals, the interpretation
of a strategy in this game is rather straightforward. For every agent
$i$, the \textit{state} in a Markov strategy at a given time period
is represented by her knowledge about the game at that period, which
is the combination of two elements: her knowledge about every other
agent $j$, which includes $j$'s type and whether $j$ is linked
to herself; and her knowledge about the formation history $\sigma(t)$.
$i$'s information on the former is complete since she knows both
the identity of $j$ and $j$'s type. The precision of her information
on the latter, on the other hand, may vary according to the public
signal generating function $y$. Note that strategies thus defined
include strategies that assign actions only based on the network formed
in the previous period (so that $Y=G$) as in some existing literature,
for instance Dutta et al.\cite{Dutta}. A profile of strategies and
a history define a probability distribution on future histories assuming
agents follow the given strategies. (Randomness arises because the
selection process is random.) When we take expectations we implicitly
mean expectations with respect to this probability distribution.

Now we are ready to define the equilibrium.

\begin{defn}[Equilibrium]A (pure strategy) \textbf{public perfect
Markov equilibrium} is a vector of public Markov strategies $s^{*}=(s_{1}^{*},\cdots,s_{N}^{*})$
such that: for each agent $i$, every period $t$ and every possible
history of the public signals, $s_{i}^{*}$ maximizes agent $i$'s
expected discounted total payoff at period $t$ given $s_{-i}^{*}$.
\end{defn}

For the remainder of the paper, we simply refer to a public perfect
Markov equilibrium as an equilibrium. It is easy to see that a pure
strategy equilibrium for the game always exists, regardless of the
type vector and the specific payoff structure. Indeed, since link
formation and maintenance requires bilateral consent, the strategy
profile that every agent always chooses action $0$ (sever/not form
a link) already constitutes an equilibrium. We note the existence
of an equilibrium below.

\begin{prop} There exists a pure strategy equilibrium. \end{prop}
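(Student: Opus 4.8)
The plan is to exhibit a single explicit strategy profile and verify directly that it is a public perfect Markov equilibrium, with no fixed-point machinery. The natural candidate is the profile $s^{0}$ in which every agent $i$, after every history and in every state, sets $a_{ij}=0$ for every $j\neq i$ --- that is, each agent always refuses to form a new link and always severs any existing one. This is trivially a well-defined public Markov strategy: it depends on nothing at all, so in particular it satisfies the constraint $s_{ij}^{y}(\cdot,\cdot,0,0)\equiv 0$ imposed in the definition of a strategy, and it is Markov (indeed constant).

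Next I would trace the network path induced by $s^{0}$. Since forming or maintaining a link requires bilateral consent ($a_{ij}=a_{ji}=1$), and under $s^{0}$ every agent always plays $0$, no link can survive: whatever the initial network ${\bf g}(0)$, we have ${\bf g}(t)=\varnothing$ for every $t\geq 1$. Hence from period $1$ onward every agent is a singleton, and by the standing assumption that a singleton's one-period payoff is zero, each agent's discounted continuation payoff evaluated at any period $t\geq 1$ equals $\sum_{\tau=0}^{\infty}\gamma^{\tau}\cdot 0 = 0$.

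The key --- and essentially only --- step is the best-response check. Fix an agent $i$, a period $t$, and a history of public signals, and consider any alternative strategy $s_{i}$ for $i$ while the other agents use $s^{0}_{-i}$. Because every $j\neq i$ plays $a_{ji}=0$ no matter what, agent $i$ can neither create nor retain any link, so $i$ is a singleton in every period from $t$ on (and in period $t$ the current network already contains no link incident to $i$). Therefore $i$'s expected discounted total payoff at period $t$ is identically $0$ under \emph{any} strategy, so in particular $s^{0}_{i}$ attains the maximum. Since this holds for every $i$, every $t$, and every history, $s^{0}$ is an equilibrium, and existence follows. I do not expect any genuine obstacle: the only point requiring care is the bookkeeping of who may unilaterally sever versus who must consent to form, but that very asymmetry is what makes ``everyone isolates'' self-enforcing, and no one-shot-deviation or patience argument is needed because the deviator's payoff is constant at zero.
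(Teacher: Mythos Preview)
Your proposal is correct and is exactly the argument the paper gives: the authors simply note that, because link formation and maintenance require bilateral consent, the profile in which every agent always plays $0$ is an equilibrium. Your write-up spells out the best-response check more carefully than the paper does, but the approach is identical.
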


We focus on equilibria in which the network formation process converges
(after a finite number of periods) and so leads to a persisting network.
Before convergence occurs, the evolution of the network is random
because the selection process is random; after convergence occurs,
randomness has no further effect and so the limit network is a random
function of the initial network and the strategies. We believe that
this notion provides an appropriate account for what is to be expected
in the formation process in various applications such as social circles.
People tend to form and sever links constantly in the starting phase
of building their social milieu, but over time they maintain a relatively
fixed circle of acquaintances (Kossinets and Watts\cite{KW}). We
formally describe such convergence in our model below.

Given a realized formation history $\sigma(t)$ , the network topology
thereafter $\{{\bf {g}}(\tau)\}_{\tau=t+1}^{\infty}$ is a stochastic
process. We denote the probability measure generated by this stochastic
process as $\mathcal{Q}_{s^{*},\sigma(t)}$.

\begin{defn}[Convergence]Given a realized formation history $\sigma(t)$
we say that the network formation process \textbf{converges weakly}
to network ${\bf {g}}$ in equilibrium $s^{*}$ if 
\begin{align*}
\lim_{T\rightarrow\infty}\mathcal{Q}_{s^{*},\sigma(t)}({\bf {g}}(T')={\bf {g}}\text{ }\forall T'\geq T)=1.
\end{align*}

We say that the network formation process \textbf{converges strongly}
to network ${\bf {g}}$ if it converges following \textbf{\textit{every}}
(finite) history. \end{defn}

Notice that convergence entails that the network converges in finite
time with probability 1.

In what follows we focus on strong convergence rather than weak convergence
for 2 reasons. The first is that strong convergence implies that if
the evolution of the network is disturbed by some exogenous process
then it eventually returns to the same limit. The second is that strong
convergence guarantees robustness with respect to small errors and
with respect to coalitional deviations, not just individual deviations.
We will discuss these points in more detail below.

\subsection{Informative Monitoring Structures}

We will explicitly construct equilibrium strategy profiles that yield
strong convergence to a given network provided that the monitoring
is ``sufficiently informative''. We begin by describing what this
entails.

Fix a network ${\bf {g}}$ and integer $K\geq1$. We begin by defining
a particular monitoring structure $y_{{\bf {g}},K}$. 
\begin{itemize}
\item {1.} $Y=\{C,P\}$, where $C$ represents the \textit{cooperation
phase} and $P$ represents the \textit{punishment phase}. 
\item {2.} $y{}_{{\bf {g}},K}(\sigma(0))=C$. 
\item {3.} In period $t\geq1$: if $y{}_{{\bf {g}},K}(\sigma(t-1))=C,$
we distinguish 2 cases:

\begin{itemize}
\item case 1: for every pair of agents $ij\in{\bf {g}}$, $a_{ij}=a_{ji}=1$
and for every pair of agents $ij\notin{\bf {g}}$ , $a_{ij}=0$ or
$a_{ji}=0$ (or both). 
\item case 2: otherwise (i.e. case 1 fails for some pair of agents $ij$) 
\item In case 1, we define $y_{{\bf {g}},K}(\sigma(t))=C$ and in case 2,
we define $y{}_{{\bf {g}},K}(\sigma(t))=P$. 
\end{itemize}
\item {4.} In period $t\geq1$, if $y{}_{{\bf {g}},K}(\sigma(t-1))=P$:
we again distinguish 2 cases:

\begin{itemize}
\item case 1: $y(\sigma(t-2))=y(\sigma(t-3))=.....y(\sigma(t-K))=P$ 
\item case 2: otherwise 
\item In case 1, we define $y{}_{{\bf {g}},K}(\sigma(t))=C$ and in case
2, $y{}_{{\bf {g}},K}(\sigma(t))=P$. 
\end{itemize}
\end{itemize}
As we will see in the proof of Theorem 1 below access to the information
provided by $\ensuremath{y_{{\bf {g}},K}}$ allows the agents to divide
the formation process into two phases: the cooperation phase which
continues forever if agents choose their actions in order to form
or maintain the network ${\bf {g}}$, and the punishment phase that
starts when agents depart from the cooperation phase and continues
for $K$ periods. From the public signal $C$ or $P$, each agent
knows what phase she should currently be in, but not how long that
phase has lasted or how many times the same phase has occurred before.
As we will also see in the proof, the parameter $K$ plays a crucial
role in guaranteeing that convergence is strong rather than weak.

Consider any other signal structure $\hat{y}$ with signal space $\hat{Y}$.
$\hat{y}$ is as informative as $y_{{\bf {g}},K}$ if there is a mapping
$\eta:\hat{Y}\to Y$ such that $y_{{\bf {g}},K}(\sigma(t))=\eta(\hat{y}(\sigma(t))$.
That is, $\hat{y}$ reveals at least as much about the history as
$y_{{\bf {g}},K}$ (and perhaps more). Notice that complete information
is always as informative as $y_{{\bf {g}},K}$, no matter what ${\bf {g}\ensuremath{}}$
and $K$ are.

\subsection{Construction of Equilibrium Strategies}

It is useful to give an explicit description of the strategies we
will use in the proof. We assume that the monitoring structure y is
as informative as $y_{{\bf {g}},K}$. Hence, agents always know what
they would know if the monitoring structure were exactly $y_{{\bf {g}},K}$;
the strategies we describe use only this information, so there is
no loss in assuming that the monitoring structure is exactly $y_{{\bf {g}},K}$.

Consider the following strategy profile, denoted $\hat{s}_{c}$:

\begin{center}
$s_{ij}=\left\{ \begin{array}{c}
1,\text{ if \ensuremath{ij\in{\bf {g}}}, \ensuremath{y_{{\bf {g}},K}=C}, and \ensuremath{\max\{\omega_{ij},\zeta_{ij}\}=1};}\\
0,\text{ otherwise.}
\end{array}\right.$ 
\par\end{center}

$\hat{s}_{c}$ can be interpreted as the following pattern of behavior:
the agents start by cooperating towards building a designated network.
They form or maintain a link if and only if that link belongs to the
specific network ${\bf {g}}$. If a ``deviation'' - a link in ${\bf {g}}$
is not formed or a link not in ${\bf {g}}$ is formed - is detected
all agents leave the social circle (break all links) for $K$ periods
before starting cooperation again.

\subsection{The Network Convergence Theorem}

We begin with a simple observation.

\begin{prop}If there exists an equilibrium in which the formation
process converges weakly to the network ${\bf g}$, then $u_{i}(\bar{\theta},{\bf {g}})\geq0$
.\end{prop}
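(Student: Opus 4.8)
The plan is a short direct argument built on one observation: in this game every agent has an outside option worth exactly zero. At the start of any period and after any public history, agent $i$ may deviate to the strategy that plays action $0$ toward every other agent from then on; since link formation and maintenance require bilateral consent while severance is unilateral, this deviation is always feasible and makes $i$ a singleton in the network of that period and in every later network, so by the normalization $u_i(\bar{\theta},\cdot)=0$ on networks in which $i$ is a singleton, it yields a continuation payoff of exactly $0$, no matter what the other agents do. Hence in \emph{any} equilibrium $s^{*}$, agent $i$'s expected discounted continuation payoff, evaluated at every period and after every public history, is at least $0$.

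Next I would feed this into the definition of weak convergence. Fix $s^{*}$ and the realized history $\sigma(t)$ after which the process converges weakly to ${\bf g}$, and set $M=\max_{{\bf g}'\in G(I)}\lvert u_i(\bar{\theta},{\bf g}')\rvert<\infty$, finiteness holding because $G(I)$ is finite. For each $T\ge t$, averaging agent $i$'s (nonnegative) period-$T$ equilibrium continuation payoffs over the period-$T$ histories reached under $s^{*}$ from $\sigma(t)$, and using the law of iterated expectations, gives
\begin{align*}
\mathbb{E}_{\mathcal{Q}_{s^{*},\sigma(t)}}\!\Big[\sum_{\tau=0}^{\infty}\gamma^{\tau}u_i(\bar{\theta},{\bf g}(T+\tau))\Big]\;\ge\;0 .
\end{align*}
On the event $A_T=\{\,{\bf g}(T')={\bf g}\ \text{for all}\ T'\ge T\,\}$ the bracketed sum equals $u_i(\bar{\theta},{\bf g})/(1-\gamma)$, while off $A_T$ it is bounded in absolute value by $M/(1-\gamma)$; since weak convergence is precisely the statement $\mathcal{Q}_{s^{*},\sigma(t)}(A_T)\to 1$ as $T\to\infty$, letting $T\to\infty$ in the displayed inequality yields $u_i(\bar{\theta},{\bf g})/(1-\gamma)\ge 0$, hence $u_i(\bar{\theta},{\bf g})\ge 0$.

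There is no deep obstacle here. The two points that need care are: (i) verifying that the always-$0$ deviation is available at every history and secures a payoff of $0$ irrespective of the opponents' reactions — this is exactly where the game's asymmetry (unilateral severance, bilateral formation) and the singleton normalization of payoffs are used; and (ii) justifying the passage to the limit in the last step, which is immediate from the uniform bound $M/(1-\gamma)$ together with the defining property $\mathcal{Q}_{s^{*},\sigma(t)}(A_T)\to 1$ of weak convergence. One should also keep in mind that both hypothesis and conclusion are relative to the particular realized history $\sigma(t)$, so the entire argument is carried out under the measure $\mathcal{Q}_{s^{*},\sigma(t)}$.
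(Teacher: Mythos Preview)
Your proof is correct and follows essentially the same idea as the paper's: both exploit that the always-$0$ deviation secures a continuation payoff of exactly $0$ (using unilateral severance, bilateral consent for formation, and the singleton normalization), so equilibrium continuation payoffs are nonnegative, and weak convergence forces $u_i(\bar{\theta},{\bf g})\ge 0$. The paper presents this as a two-line contradiction argument---assume $u_i(\bar{\theta},{\bf g})<0$, condition on the almost-sure event that ${\bf g}$ has formed and persists, and observe the profitable deviation---whereas you argue directly via expected continuation payoffs and a limit in $T$; your version is more carefully written but not a different route.
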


\begin{proof}Suppose that there exists an equilibrium where the formation
process converges to ${\bf {g}}$ weakly, and that $u_{i}(\bar{\theta},{\bf {g}})<0$
for some $i$. Then on the equilibrium path when ${\bf {g}}$ has
been formed and will persist forever, $i$ is always strictly better
off by deviating to the strategy $s_{ij}=0$ and obtaining payoff
$0$ thereafter. This is a contradiction to the assumption of an equilibrium.
\end{proof}

The Network Convergence Theorem with complete information shows that
if the inequality is strict for all agents i, the monitoring structure
is sufficiently informative (in particular if the monitoring structure
yields complete information) and agents are sufficiently patient then
there is an equilibrium in which the formation process converges strongly
to ${\bf {g}}$.\begin{thm}Let ${\bf {g}}$ be a network for which
$u_{i}(\bar{\theta},{\bf {g}})>0$ for all $i$. There is an integer
$K$ and a cutoff $\bar{\gamma}\in(0,1)$ such that if $\gamma\in[\bar{\gamma},1)$
and the monitoring structure is as informative as $y_{{\bf {g}},K}$,
then there exists an equilibrium in which the formation process converges
strongly to ${\bf {g}}$.\end{thm}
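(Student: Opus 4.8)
The plan is to verify that the explicitly constructed profile $\hat{s}_c$, paired with the monitoring structure $y_{{\bf g},K}$, is an equilibrium that converges strongly to ${\bf g}$, for a suitable integer $K$ and all $\gamma$ close to $1$. As the text observes, it suffices to treat the monitoring structure as exactly $y_{{\bf g},K}$.

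First, granting for the moment that $\hat{s}_c$ is an equilibrium, I would check strong convergence directly by running $\hat{s}_c$ from an arbitrary history. If the current signal is $P$, all agents play $0$, so every link is severed and the network stays empty until the signal returns to $C$, which happens after at most $K$ periods. Once the signal is $C$: every link outside ${\bf g}$ is removed within one period (each agent may sever any of her links every period and $\hat s_c$ instructs her to), and each link $ij\in{\bf g}$ is created the first time the pair $(i,j)$ is selected and is never severed afterwards. A coupon-collector estimate shows that the network reaches ${\bf g}$ after a finite time whose expectation is bounded by a constant depending only on $N$ and ${\bf g}$ — crucially, independent of $K$ and $\gamma$ — and from then on the signal stays $C$ and the network stays at ${\bf g}$. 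This is convergence following every finite history, i.e. strong convergence.

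The substantive step is that $\hat{s}_c$ is an equilibrium once $\gamma$ is large. Per-period payoffs are bounded (finitely many networks, fixed types) and the opponents' profile is Markov, so $i$'s best-response problem is a Markov decision problem and it suffices to rule out profitable one-shot deviations. In the punishment phase a deviation is worthless: the other agents play $0$, so $i$ can neither form nor retain any link, her current payoff is $0$ in any case, and the signal's evolution during $P$ is deterministic, so her continuation value does not change. In the cooperation phase, the only deviations that alter the network are severing a currently-present link of ${\bf g}$ or declining a prescribed link of ${\bf g}$ when selected — forming a link outside ${\bf g}$ fails for want of consent, and retaining a link outside ${\bf g}$ is impossible because the partner drops it under $\hat s_c$ — and each such deviation triggers the punishment signal. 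By the convergence estimate, conforming from any cooperation-phase state is worth at least $(1-\gamma)^{-1}u_i(\bar\theta,{\bf g})-M_i$, while the best deviating move is worth at most $\bar{u}_i+\gamma^{K+1}\big((1-\gamma)^{-1}u_i(\bar\theta,{\bf g})+M_i'\big)$, where $\bar u_i=\max_{{\bf g}'}|u_i(\bar\theta,{\bf g}')|$ and $M_i,M_i'$ are transient-cost constants independent of $K$ and $\gamma$. Hence the gain from conforming is at least $\big(\sum_{t=0}^{K}\gamma^{t}\big)u_i(\bar\theta,{\bf g})-\bar u_i-M_i-M_i'$. Since $u_i(\bar\theta,{\bf g})>0$ for every $i$ and $N$ is finite, I would first pick $K$ so large that $(K+1)\min_i u_i(\bar\theta,{\bf g})>2\max_i(\bar u_i+M_i+M_i')$, then pick $\bar\gamma<1$ so that $\sum_{t=0}^{K}\gamma^{t}>(K+1)/2$ for all $\gamma\in[\bar\gamma,1)$; this makes the gap strictly positive, so $\hat s_c$ is an equilibrium, and by the previous paragraph it converges strongly to ${\bf g}$.

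I expect the main obstacle to be the bookkeeping behind $M_i$ and $M_i'$: one must confirm that the discounted cost of the finitely many transitional periods — both before the network settles on ${\bf g}$ from an arbitrary cooperation-phase state, and after a punishment while ${\bf g}$ is rebuilt — is bounded uniformly in $K$ and $\gamma$, so that the deterrence inequality is governed solely by the free parameter $K$ as $\gamma\to1$. A secondary point is the precise meaning of ``case 1'' in the definition of $y_{{\bf g},K}$, which I would read as ``the observed formation event is consistent with everyone playing $\hat s_c$'', so that conforming play never provokes the punishment signal while the deviations listed above always do; this reading is also what makes the role of $K$ transparent — a finite punishment length returns the process to building ${\bf g}$ after \emph{any} history, which is exactly what upgrades convergence from weak to strong, while $K$ must still be taken large enough for the punishment to bite.
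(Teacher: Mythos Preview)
Your proposal is correct and follows essentially the same route as the paper: both use the profile $\hat{s}_c$ under the monitoring structure $y_{{\bf g},K}$, invoke the one-shot deviation principle, observe that punishment-phase deviations are inert, and bound the cooperation-phase deviation gain by a transient constant minus a term that grows like $\sum_{t=0}^{K}\gamma^{t}\,u_i(\bar\theta,{\bf g})$. The paper packages your ``bookkeeping behind $M_i$ and $M_i'$'' into a formal Lemma (its Lemma~1, which shows $\bar\mu_C(\gamma,\infty)-\underline\mu_C(\gamma,\infty)$ is bounded uniformly in $\gamma$ via the same coupon-collector estimate you sketch), but the content and the order in which $K$ and then $\bar\gamma$ are chosen are the same.
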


As we have noted above, complete monitoring is always as informative
as $y_{{\bf {g}},K}$ so we obtain as an immediate corollary the corresponding
Network Convergence Theorem for complete monitoring.

\begin{cor} Let ${\bf {g}}$ be a network for which $u_{i}(\bar{\theta},{\bf {g}})>0$
for all $i$. If monitoring is complete, there is a cutoff $\bar{\gamma}\in(0,1)$
such that if $\gamma\in[\bar{\gamma},1)$, then there exists an equilibrium
in which the formation process converges strongly to ${\bf {g}}$.\end{cor}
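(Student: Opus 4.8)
The plan is to show that the explicit profile $\hat s_c$ of Section~4.3 is, for an appropriate choice of $K$ and $\bar\gamma$, a public perfect Markov equilibrium whose formation path converges strongly to ${\bf g}$. By the informativeness reduction recorded just before the statement, there is no loss in taking the monitoring structure to be exactly $y_{{\bf g},K}$, so in each period every agent observes only whether the current phase is the cooperation phase $C$ or the punishment phase $P$.

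I would first establish strong convergence. Since $\hat s_c$ is Markov in the signal, after any finite history the continuation is of one of two types. If the phase is $P$, the transition rule of $y_{{\bf g},K}$ returns the signal to $C$ after at most $K$ further periods (during which all links are severed), so it is enough to treat a history whose current phase is $C$. In phase $C$, under $\hat s_c$ every link outside ${\bf g}$ is severed by both of its endpoints in the next period, and from then on each still-missing link of ${\bf g}$ is created the first time its pair is selected; as the selection process is i.i.d.\ uniform over pairs, a coupon-collector argument shows that ${\bf g}$ is reached after finitely many periods with probability one. Once ${\bf g}$ is in place and no one deviates, case~1 of the $C$-rule holds in every subsequent period, so the signal remains $C$ and the network stays at ${\bf g}$ forever. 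Hence the process reaches ${\bf g}$ in finite time following every finite history, which is strong convergence.

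Next I would check that $\hat s_c$ is an equilibrium. For fixed $\bar\theta$ there are only finitely many networks, so one-period payoffs are bounded by some $\bar M$, and since continuation payoffs are geometrically discounted the one-shot deviation principle applies; it therefore suffices to rule out profitable single-period deviations. In phase $P$ no deviation is effective: link creation requires mutual consent while all other agents play $0$, and the signal's evolution in phase $P$ depends only on past signals, so neither the (empty) network nor the phase can be altered. In phase $C$ the only effective unilateral deviations are to sever a present link of ${\bf g}$, or to refuse a link of ${\bf g}$ whose pair has just been selected (extra links cannot be added, again for want of consent); either deviation makes case~1 fail and so switches the signal to $P$ next period. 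Comparing the continuation value of conforming — which is $u_i(\bar\theta,{\bf g})/(1-\gamma)$ once ${\bf g}$ has been assembled, and within a bounded correction of this otherwise — with that of such a deviation, which yields an immediate gain of at most $2\bar M$, then $K$ periods of zero payoff, and then a return to phase $C$ and reconvergence to ${\bf g}$, the net loss from deviating is at least $(1+\gamma+\cdots+\gamma^{K})\,u_i(\bar\theta,{\bf g}) - 2\bar M - c_1$, where $c_1$ is a bound — independent of $K$ and $\gamma$ — on the payoff discrepancy accumulated over the finite-expectation reconvergence window (itself controlled by another coupon-collector estimate). The same estimate covers deviations made before ${\bf g}$ has been assembled, since any such deviation likewise triggers a full $K$-period punishment for a bounded one-period gain.

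The decisive point is the order of quantifiers. Since $u_i(\bar\theta,{\bf g})>0$ for every $i$, the number $\mu:=\min_i u_i(\bar\theta,{\bf g})$ is strictly positive, so I would first pick $K$ with $(K+1)\mu>2\bar M+c_1+1$, and only then choose $\bar\gamma<1$ close enough to $1$ that $1+\gamma+\cdots+\gamma^{K}\ge K$ for all $\gamma\in[\bar\gamma,1)$; this makes every deviation strictly unprofitable. I expect the main obstacle to be the careful bookkeeping in this estimate — obtaining a bound on the punishment-then-reconvergence continuation value that is uniform in $\gamma$ as $\gamma\to1$ — together with a more delicate modeling point: one must make sure the switch to phase $P$ in $y_{{\bf g},K}$ is triggered exactly by genuine departures from $\hat s_c$ (severing a present link of ${\bf g}$, refusing a selected link of ${\bf g}$, or forming a link outside ${\bf g}$) and not by the harmless $0$'s that $\hat s_c$ itself plays on the still-unformed, unselected links of ${\bf g}$, so that the cooperation phase is genuinely absorbing along the prescribed path while every real deviation is punished.
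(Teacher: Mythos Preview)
Your proposal is correct and follows essentially the same route as the paper. In the paper the corollary is not given a separate proof at all: it is recorded as an immediate consequence of Theorem~1, since complete monitoring is as informative as $y_{{\bf g},K}$ for every $K$. What you have written is, in effect, a self-contained reproof of Theorem~1 --- same strategy profile $\hat s_c$, same two-phase structure, same one-shot deviation reduction, and the same trade-off estimate (bounded instantaneous gain $\bar v$ versus $K$ periods of zero and a continuation gap that is uniformly bounded in $\gamma$). The paper packages the last point as Lemma~1, with the explicit tail bound $\tfrac{N(N-1)}{2}\bigl(1-\tfrac{2}{N(N-1)}\bigr)^{t}$ in place of your informal coupon-collector remark, but the content is the same: the expected pre-convergence loss and the post-punishment continuation gap are each bounded by constants independent of $\gamma$, so one fixes $K$ first and then pushes $\gamma$ toward~$1$. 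Your closing caveat about the trigger in $y_{{\bf g},K}$ --- that the phase should switch to $P$ only on genuine departures from $\hat s_c$ and not on the forced $0$'s at still-unselected links of ${\bf g}$ --- is well taken; the paper's description of case~1 is informal on exactly this point, and your reading (that the signal depends on the observable history $\sigma(t)$ rather than on unobservable actions) is the intended one.
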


It is useful to contrast the Network Convergence Theorem with the
familiar Folk Theorem for the repeated games. The Folk Theorem says
that every feasible, strictly individually rational long-run average
payoff vector can be achieved in an equilibrium if agents are sufficiently
patient. The Network Convergence Theorem says that every ``feasible,
strictly individually rational'' network can be achieved as the limit
of a formation process. The Folk Theorem talks about the long-run
payoffs; the Network Convergence Theorem talks about the long-run
network. The proof of the theorem is stated below.

\begin{proof}Consider the monitoring structure $y_{{\bf {g}},K}$
and the strategy profile $\hat{s}_{c}$. Given $\bar{\theta}$, let
$\bar{v}$ denote the largest marginal benefit that an agent can obtain
from forming or severing a link in any network ${\bf {g}}$. $\bar{v}$
measures the largest possible marginal benefit that an agent can get
from deviating in one period. Since the number of networks is finite,
we know that $\bar{v}$ exists.

Given $\bar{\theta}$and a formation history $\sigma(t)$, consider
an arbitrary agent. Let $\bar{\mu}_{C}(\gamma,M)$ and $\underline{\mu}_{C}(\gamma,M)$
denote the largest and smallest expected total payoff the agent gets
within $M$ periods of the cooperation phase, starting from any network.
Note that an agent's payoff during the punishment phase is always
equal to $0$. We first establish the following lemma.

\begin{lem} If $u_{i}(\bar{\theta},{\bf {g}})>0$ for all $i$, then
the following properties hold: 
\begin{itemize}
\item {a.} $\lim_{\gamma\rightarrow1}\bar{\mu}_{C}(\gamma,\infty)=\lim_{\gamma\rightarrow1}\underline{\mu}_{C}(\gamma,\infty)=\infty$. 
\item {b.} There exists $A>0$ such that $\bar{\mu}_{C}(\gamma,\infty)-\underline{\mu}_{C}(\gamma,\infty)<A$,
regardless of $\gamma$. 
\end{itemize}
\end{lem}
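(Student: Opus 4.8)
The plan is to analyze the strategy profile $\hat{s}_c$ under the monitoring structure $y_{{\bf g},K}$ and bound the per-period payoffs an agent receives during a cooperation phase. The key observation is that once a cooperation phase begins, play proceeds deterministically toward building ${\bf g}$: in each period the selected pair either adds the link prescribed by ${\bf g}$ (if it belongs to ${\bf g}$) or does nothing, and no agent ever severs a link on path. So starting from any network ${\bf h}$ at the start of a cooperation phase, the sequence of networks is a monotone (weakly increasing, in the set-inclusion sense restricted to links of ${\bf g}$, together with whatever spurious links ${\bf h}$ already had) chain, and the network reaches its absorbing topology within a bounded number of periods with probability one; more to the point, the expected number of periods before ${\bf g}$ is reached has a bound independent of $\gamma$ (it depends only on $N$ through the random-matching process).

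For part (a), the argument is: fix any starting network at the beginning of a cooperation phase. The agent's total discounted payoff over the infinite cooperation phase is at least the payoff obtained by ignoring the first (random, but with finite expectation) transient block and then collecting $u_i(\bar\theta,{\bf g})>0$ forever. As $\gamma\to 1$, the discounted value of ``$u_i(\bar\theta,{\bf g})$ forever after a bounded expected delay'' is $(1-\gamma)^{-1}u_i(\bar\theta,{\bf g})$ minus a correction that stays bounded (because the transient lasts a bounded expected number of periods and per-period payoffs are uniformly bounded, say by $\bar v$ times $N$). Hence $\underline\mu_C(\gamma,\infty)\ge (1-\gamma)^{-1}\min_i u_i(\bar\theta,{\bf g}) - B$ for a constant $B$ not depending on $\gamma$, which $\to\infty$; and $\bar\mu_C(\gamma,\infty)$ is trivially no smaller, so both limits are $+\infty$. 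One should be slightly careful that during the transient, payoffs could in principle be negative (an agent might be paying link costs in ${\bf h}$ without yet enjoying the benefits of ${\bf g}$), but since $|u_i|$ is bounded uniformly over all networks and the transient has bounded expected length, the correction term is still $O(1)$.

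For part (b), I would bound the gap $\bar\mu_C(\gamma,\infty)-\underline\mu_C(\gamma,\infty)$ directly. Both quantities equal (expected transient payoff) $+$ (discounted tail value from the moment ${\bf g}$ is reached). The tail values differ only through the discount factor applied at the random hitting time $\tau_{\bf g}$: the tail contributes $\mathbb{E}[\gamma^{\tau_{\bf g}}]\cdot(1-\gamma)^{-1}u_i(\bar\theta,{\bf g})$. Writing $x=(1-\gamma)^{-1}u_i$, the difference between the best-case and worst-case tail is $(\,\mathbb{E}_{\text{best}}[\gamma^{\tau_{\bf g}}]-\mathbb{E}_{\text{worst}}[\gamma^{\tau_{\bf g}}]\,)\cdot x$, and since $1-\gamma^{\tau}\le \tau(1-\gamma)$ we get $|\,\mathbb{E}[\gamma^{\tau_{\bf g}}] - 1\,|\le (1-\gamma)\mathbb{E}[\tau_{\bf g}]$, so this product is bounded by $\mathbb{E}[\tau_{\bf g}]\cdot u_i(\bar\theta,{\bf g})$ up to constants — a bound free of $\gamma$. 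The transient payoff difference is likewise bounded by $\bar v\cdot N\cdot\mathbb{E}[\tau_{\bf g}]$ (or one can use the crudest bound: total undiscounted payoff accumulated over the transient is at most $\mathbb{E}[\tau_{\bf g}]$ times the max per-period payoff magnitude). Combining gives a constant $A$ depending only on $\bar\theta$, $N$, and the payoff function, as required.

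The main obstacle I anticipate is making the ``bounded expected hitting time $\mathbb{E}[\tau_{\bf g}]$, uniformly over starting networks'' claim fully rigorous and $\gamma$-free: one needs that from any network a cooperation phase reaches ${\bf g}$ in finite expected time, which follows because each of the finitely many missing links of ${\bf g}$ gets added the first time its pair is selected (probability $\binom N2^{-1}$ each period), a coupon-collector-type bound, while spurious links in the starting network are irrelevant to the signal staying at $C$ under $\hat s_c$ — wait, that last point needs care, since under $y_{{\bf g},K}$ a spurious link $ij\notin{\bf g}$ present in the starting network would, if both endpoints play $a_{ij}=a_{ji}=1$, violate case~1 and trigger $P$; but under $\hat s_c$ both endpoints play $0$ toward such a link, so it is severed and case~1 is respected. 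Verifying this consistency between $\hat s_c$ and the definition of $y_{{\bf g},K}$, and hence that the cooperation phase genuinely persists and reaches ${\bf g}$, is the delicate bookkeeping step; once it is in place, parts (a) and (b) are the routine discounting estimates sketched above.
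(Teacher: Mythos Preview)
Your proposal is correct and follows essentially the same route as the paper. Both arguments hinge on a coupon-collector bound: from any starting network, under $\hat s_c$ the cooperation phase reaches ${\bf g}$ in expected time bounded independently of $\gamma$, so the transient contributes $O(1)$ while the tail $(1-\gamma)^{-1}u_i(\bar\theta,{\bf g})$ diverges. The paper makes this explicit by bounding $\Pr({\bf g}(t)\neq{\bf g})$ by $\min\{\tfrac{N(N-1)}{2}(1-\tfrac{2}{N(N-1)})^t,1\}$ and summing the resulting geometric series; your formulation via the hitting time $\tau_{\bf g}$ and the inequality $1-\gamma^\tau\le\tau(1-\gamma)$ is a cleaner packaging of the same estimate. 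For part~(b) you actually supply more detail than the paper, which simply says it ``can be proved using a similar argument.'' Your closing worry about spurious links is resolved exactly as you note: under $\hat s_c$ both endpoints of any $ij\notin{\bf g}$ play $0$, so such links are severed immediately and case~1 of the signal definition is respected, keeping the phase at $C$.
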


\begin{proof} Let $W\in\mathbb{R}$ denote the smallest possible
payoff of any agent in any network in one period.

For $(a)$, it suffices to show that a lower bound of the two payoffs
converges to infinity as $\gamma$ converges to $1$. Consider the
following hypothetical payoff structure: agent $i$'s one-period payoff
is $W$ if the network is different from ${\bf {g}}$, and $u_{i}(\bar{\theta},{\bf {g}})$
otherwise. Starting from any network ${\bf {g}}(0)$, the probability
that ${\bf {g}}(t)\neq\{{\bf {g}}\}$ is bounded above by $\min\{\frac{N(N-1)}{2}(1-\frac{2}{N(N-1)})^{t},1\}$
(this upper bound is constructed by supposing that ${\bf {g}}(0)=\varnothing$
and ${\bf {g}}$ is the complete network, and calculating the probability
that some pair of agents has never been selected during the $t$ periods).
For all $t$ such that $\frac{N(N-1)}{2}(1-\frac{2}{N(N-1)})^{t}<1$
(let $t^{*}$ be the smallest $t$ satisfying this condition), $i$'s
expected payoff in ${\bf {g}}(t)$ is bounded below by 
\begin{align*}
\frac{N(N-1)}{2}(1-\frac{2}{N(N-1)})^{t}W+(1-\frac{N(N-1)}{2}(1-\frac{2}{N(N-1)})^{t})u_{i}(\bar{\theta},{\bf {g}}),
\end{align*}
and agent $i$'s total expected payoff is bounded below by 
\begin{align*}
 & \sum_{t=1}^{t^{*}-1}\gamma^{t-1}W+\sum_{t=t^{*}}^{\infty}\gamma^{t-1}(\frac{N(N-1)}{2}(1-\frac{2}{N(N-1)})^{t}W\\
 & +(1-\frac{N(N-1)}{2}(1-\frac{2}{N(N-1)})^{t})u_{i}(\bar{\theta},{\bf {g}}))\\
= & \sum_{t=1}^{t^{*}-1}\gamma^{t-1}W+\sum_{t=t^{*}}^{\infty}\gamma^{t-1}u_{i}(\bar{\theta},{\bf {g}})+\sum_{t=t^{*}}^{\infty}\gamma^{t-1}\frac{N(N-1)}{2}(1-\frac{2}{N(N-1)})^{t}(W-u_{i}(\bar{\theta},{\bf {g}}))\\
= & \frac{W(1-\gamma^{t^{*}})}{1-\gamma}+\frac{\gamma^{t^{*}-1}u_{i}(\bar{\theta},{\bf {g}})}{1-\gamma}+\frac{N(N-1)}{2}\frac{\gamma^{t^{*}-1}(1-\frac{2}{N(N-1)})^{t^{*}}(W-u_{i}(\bar{\theta},{\bf {g}}))}{1-\gamma(1-\frac{2}{N(N-1)})}.
\end{align*}

It is clear that the sum of the first term and the third term above
has a lower bound which is independent of $\gamma$. In addition,
the second term converges to infinity as $\gamma$ converges to $1$
regardless of $i$. Hence part $(a)$ is proved. $(b)$ can be proved
using a similar argument. \end{proof}

Consider agent $i$ at period $t$ following any formation history.
Note that $i$ cannot really ``deviate'' in the punishment phase
given that all the agents other than $i$ are using their prescribed
strategy in $\hat{s}_{c}$. Hence we only need to consider a deviation
of agent $i$ in the cooperation phase. According to the one-step
deviation principle, in order to determine whether $\hat{s}_{c}$
is an equilibrium we only need to consider $i$'s deviation in one
period, after which $i$ returns to her prescribed strategy in $\hat{s}_{c}$.
As mentioned before, the largest possible marginal benefit that $i$
gets from this deviation in this period is $\bar{v}$. Starting from
the next period, $i$'s expected total payoff is bounded above by
\begin{align*}
\gamma^{1+K}\bar{\mu}_{C}(\gamma,\infty).
\end{align*}
If $i$ does not deviate, then starting from the next period, $i$'s
expected total payoff is bounded below by 
\begin{align*}
\gamma\underline{\mu}_{C}(\gamma,K)+\gamma^{1+K}\underline{\mu}_{C}(\gamma,\infty).
\end{align*}
Therefore, we have 
\begin{align*}
 & \text{Total expected marginal benefit from deviation}\\
\leq & \bar{v}+\gamma^{1+K}(\bar{\mu}_{C}(\gamma,\infty)-\underline{\mu}_{C}(\gamma,\infty))-\gamma\underline{\mu}_{C}(\gamma,K)\\
< & \bar{v}+\gamma^{1+K}A-\gamma\underline{\mu}_{C}(\gamma,K),
\end{align*}
from property $(b)$ above. Then from property $(a)$, there exists
$\gamma'\in(0,1)$ and $\bar{K}$ such that $\underline{\mu}_{C}(\gamma,K)>2(\bar{v}+A)$
for every $\gamma\geq\gamma'$ and $K\geq\bar{K}$. Let $\bar{\gamma}=\max\{\gamma',\frac{1}{2}\}$,
then for every $\gamma\in[\bar{\gamma},1)$, we have 
\begin{align*}
\bar{v}+\gamma^{1+K}A-\gamma\underline{\mu}_{C}(\gamma,K)<0,
\end{align*}
which implies that deviation is not profitable and hence $\hat{s}_{c}$
is an equilibrium where the formation process converges to ${\bf {g}}$.
\end{proof}

It might be noted that Proposition 2 and Theorem 1 together do not
quite provide a complete characterization of which networks can be
achieved as weak or strong limits: if $u_{i}(\bar{\theta},{\bf {g}})=0$
for some i, the results are silent about the achievability of ${\bf {g}}$.
This seems entirely analogous to the situation for the familiar Folk
Theorem: strictly individually rational payoff vectors can be achieved
and sub-rational payoff vectors cannot be achieved but the status
of payoff vectors that are exactly rational (i.e. equal to the minmax
payoff) is indeterminate.

With our proposed strategy profile, the underlying mechanism for convergence
to such a network can be described as a ``self-fulfilling prophecy'':
the agents cooperate in order to form a network that is commonly envisioned,
and they punish any detected deviation (there can be ``undetected''
deviations such as $i$ choosing $1$ but $j$ still choosing $0$
for a link $ij\notin{\bf {g}}$) by opting out of the group for $K$
periods. for every agent, this punishment is incentive compatible
once everyone else complies. Afterwards, the agents opt back in and
resume cooperation. In this way, ${\bf {g}}$ always gets formed and
persists no matter what the initial network was and what the formation
history has been.

An immediate yet important result from Theorem 1 is a clear criterion
on sustaining efficiency. For a strongly efficient network to be sustained
in any equilibrium, it needs to ensure a non-negative payoff for each
agent. Conversely, if a strongly efficient network yields every agent
a positive payoff, then it can be sustained in equilibrium if the
agents are patient enough.

\begin{cor} If ${\bf {g}}$ is a strongly efficient network, and
there is an equilibrium in which the formation process converges weakly
to ${\bf {g}}$ then $u_{i}(\bar{\theta},{\bf {g}})\geq0$ for all
$i$. If ${\bf {g}}$ is strongly efficient, $u_{i}(\bar{\theta},{\bf {g}})>0$
for all $i$, and agents are sufficiently patient, then there exists
an equilibrium in which the formation process converges strongly to
${\bf {g}}$ . \end{cor}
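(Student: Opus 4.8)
The plan is to obtain both halves of the statement as direct specializations of results already in hand; strong efficiency plays no structural role here beyond designating which network ${\bf g}$ we have in mind, so essentially all the content is carried by Proposition 2 and Theorem 1.

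For the first assertion I would simply cite Proposition 2. That proposition asserts that whenever some equilibrium has the formation process converging weakly to a network ${\bf g}$, one necessarily has $u_i(\bar{\theta},{\bf g}) \geq 0$ for every agent $i$. Taking ${\bf g}$ to be the given strongly efficient network and applying Proposition 2 verbatim yields the claim; no additional argument is required, and in particular strong efficiency itself is never invoked in this direction.

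For the second assertion I would invoke Theorem 1 (or equivalently its complete-monitoring corollary, Corollary 1). By hypothesis $u_i(\bar{\theta},{\bf g}) > 0$ for all $i$, which is exactly the premise of Theorem 1, so there exist an integer $K$ and a cutoff $\bar{\gamma} \in (0,1)$ such that for every $\gamma \in [\bar{\gamma},1)$ and every monitoring structure at least as informative as $y_{{\bf g},K}$ the profile $\hat{s}_c$ is an equilibrium whose formation process converges strongly to ${\bf g}$. Since complete monitoring is always at least as informative as $y_{{\bf g},K}$, reading the statement against the complete-information benchmark of this section, ``agents sufficiently patient'' (meaning $\gamma \geq \bar{\gamma}$) is all that is needed.

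The only thing that calls for care is bookkeeping about the monitoring structure: the corollary as stated refers only to patience, so one should make explicit --- as above --- that the complete information assumed throughout this section dominates $y_{{\bf g},K}$ in informativeness, and hence that Theorem 1 applies. Beyond that there is no genuine obstacle; the corollary is just Proposition 2 and Theorem 1 restated for the particular ${\bf g}$ that maximizes $\sum_i u_i(\bar{\theta},\cdot)$.
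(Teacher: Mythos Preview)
Your proposal is correct and matches the paper's approach exactly: the paper presents this corollary as an immediate consequence of Proposition 2 and Theorem 1 without giving a separate proof, and your observation that strong efficiency plays no structural role is precisely the point.
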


This corollary presents a striking contrast to the argument offered
by Dutta et al.\cite{Dutta} that in generic cases efficiency cannot
be sustained even if agents are patient and each agent's payoff in
the strongly efficient network is positive. We provide an example
below, which is taken from Dutta et al.\cite{Dutta}, to illustrate
the difference.

\begin{exmp} This example is taken from Dutta et al., Theorem 2\cite{Dutta}.
Consider $I=\{1,2,3\}$ and assume that all agents are of the same
type. The payoff structure is symmetric: for every $i,j,k$, $u_{i}(\bar{\theta},\varnothing)=0$,
$u_{i}(\bar{\theta},\{ij\})=2v$, $u_{i}(\bar{\theta},\{ij,ik,jk\})=v$,
while $u_{i}(\bar{\theta},\{ij,jk\})=0$. The unique strongly efficient
network is the complete network ${\bf {g}}=\{12,13,23\}$.

\cite{Dutta} shows that there exists $\bar{\gamma}<1$ such that
if $\gamma\in(\bar{\gamma},1)$ then there is no pure strategy equilibrium
where the formation process converges strongly to the strongly efficient
network. This results from the constraint of agents' knowledge on
the formation history: in \cite{Dutta} it is assumed that the agents
only know the network formed in the previous period and the pair of
agents selected in the current period.

In our model, agents know more and this matters. To see why, consider
the strategy profile $\hat{s}_{c}$. In the punishment phase, no unilateral
action can change the network formation outcome, so we only need to
inspect the incentives of agents to deviate in the cooperation phase.
Using the same methods as in the proof of Theorem 1 and plugging in
the values in this example, we can obtain a range of $\gamma$ and
$K$ to make $\hat{s}_{c}$ an equilibrium: $\gamma\in(0.97,1)$,
$K\geq60$. \end{exmp}

At the end of this section we would like to emphasize again the importance
and significance of the monitoring structure. To sustain cooperation
which leads to efficiency over time, it is not necessary that agents
know everything. The agents need not know \textit{who} committed a
deviation or \textit{when} a deviation occurred, but it is vital that
they know if \textit{someone} has deviated in the recent past and
\textit{whether} they are supposed to carry out punishment. A public
signal device (newspaper, TV, website, etc.) can convey such information
across the group of agents and ensure a limited but effective form
of cooperation. As a practical implication, our analysis strongly
suggests that modern media, with its function of public broadcast,
plays a crucial role in enhancing social welfare.

\subsection{Robustness of Equilibrium}

Agents are not always rational and do not always choose actions independently
of others, so it seems important to ask whether results such as ours
are robust to ``mistakes'' and to coalitional deviations (in addition
to individual deviations). In this section, we demonstrate robustness
of our results with respect to individual mistakes and coalitional
deviations.

We consider a model in which agents tremble uniformly. Fix a strategy
profile $s$ and fix $\epsilon>0$. Write $s^{\epsilon}$ for the
mixed strategy profile in which each agent $i$ plays $s_{i}$ with
probability $(1-\epsilon)$ and chooses a random action with probability
$\epsilon$. Let $Q^{\epsilon}$ be the probability distribution on
the corresponding stochastic process of networks. Intuitively, if
$\epsilon$ is sufficiently small, then the network formation process
will lead to ${\bf {g}}$ but will not remain there because agents
will randomly break links in ${\bf {g}}$ ``by accident'', However,
following such a breakage the process will lead back to ${\bf {g}}$.
Hence, if $\epsilon$ is small, ${\bf {g}}$ will probably occur ``most
of the time''. The following proposition formalizes this result.

\begin{prop}

Fix a network ${\bf {g}}$ such that $u_{i}(\bar{\theta},{\bf {g}})>0$
for all $i$ and an integer $K$ and a monitoring structure $y$ that
satisfy the conditions of Theorem 1. Fix $a,b>0$. There exists $\bar{\gamma}\in(0,1)$
such that if $\gamma\in[\bar{\gamma},1)$ and $\hat{s}_{c}$ is the
corresponding equilibrium strategy constructed in Theorem 1, then

\[
\lim_{T\rightarrow\infty}\inf Q^{\epsilon}(\frac{|\{t:1\leq t\leq T,{\bf {g}}(t)={\bf {g}}\}|}{T}>1-a)>1-b
\]
for all sufficiently small $\epsilon$.

\end{prop}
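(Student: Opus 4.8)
The plan is to analyze the Markov chain on networks induced by the trembled profile $\hat{s}_c^\epsilon$ and show that the fraction of time spent at $\mathbf{g}$ is close to $1$ with high probability when $\gamma$ is large (so that $K$ in Theorem 1 is fixed) and $\epsilon$ is small. First I would describe the state space: because $\hat{s}_c$ is Markov with respect to the pair $(y_{\mathbf{g},K}, \text{current network})$, the relevant state is the signal phase together with the network, and with trembles this becomes a finite irreducible aperiodic Markov chain $\mathcal{M}_\epsilon$. The key structural fact, inherited from the proof of Theorem 1, is that in the absence of trembles the chain is absorbed at $(\mathbf{g}, C)$ in finite time from every state; hence for small $\epsilon$ the chain $\mathcal{M}_\epsilon$ has a unique stationary distribution $\pi_\epsilon$ that puts mass $1-O(\epsilon)$ near the ``cooperation at $\mathbf{g}$'' region. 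More precisely, I would argue that from any state the chain returns to $(\mathbf{g},C)$ within a bounded number of periods with probability bounded below uniformly in $\epsilon$ (for $\epsilon$ small), because without a tremble the prescribed play marches the network back toward $\mathbf{g}$ and then keeps it there, while the event of ``no tremble for the next $L$ periods'' has probability $(1-\epsilon)^{NL}$ which is bounded away from $0$.

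Second, I would make the quantitative link to $\pi_\epsilon(\mathbf{g})$. Each tremble-free stretch of length at least $L_0$ (where $L_0$ is an upper bound, over all initial networks, on the time for the prescribed play to reach and stabilize at $\mathbf{g}$, valid once $K$ is fixed) brings the process to $\mathbf{g}$; once there, $\mathbf{g}$ persists until the next tremble that actually severs one of its links. A standard renewal/regeneration estimate then gives $\pi_\epsilon(\mathbf{g}) \ge 1 - c\,\epsilon$ for a constant $c$ depending only on $N$, $K$, and $L_0$; in particular $\pi_\epsilon(\mathbf{g}) \to 1$ as $\epsilon \to 0$. I would invoke Theorem 1 first to fix the integer $K$ and the cutoff so that $\hat{s}_c$ is an equilibrium converging strongly to $\mathbf{g}$ for all $\gamma \in [\bar\gamma,1)$; strong (as opposed to weak) convergence is exactly what guarantees the bounded return time $L_0$ from \emph{every} network, which is what makes the estimate uniform.

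Third, I would convert the stationary statement into the time-average statement in the proposition via the ergodic theorem for finite Markov chains: $\frac{1}{T}\left|\{t \le T : \mathbf{g}(t) = \mathbf{g}\}\right| \to \pi_\epsilon(\mathbf{g})$ almost surely as $T \to \infty$ under $Q^\epsilon$. Given $a,b > 0$, first choose $\epsilon$ small enough that $\pi_\epsilon(\mathbf{g}) > 1 - a/2$; then for that fixed $\epsilon$, almost-sure convergence of the time average to $\pi_\epsilon(\mathbf{g}) > 1-a$ means the event $\{\frac{1}{T}|\{t \le T : \mathbf{g}(t)=\mathbf{g}\}| > 1-a\}$ has $Q^\epsilon$-probability tending to $1 > 1-b$, which yields the claimed $\liminf$. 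I would note the quantifier order matches the statement (``for all sufficiently small $\epsilon$'' sits outside the limit, so we are free to shrink $\epsilon$ first).

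The main obstacle is establishing the uniform lower bound on $\pi_\epsilon(\mathbf{g})$, i.e.\ controlling the chain's behavior during and immediately after a tremble. A single tremble can push the network to an arbitrary configuration and can also, in principle, be misinterpreted by the monitoring device as a deviation, throwing the system into a $K$-period punishment phase; I need to check that the induced excursions away from $\mathbf{g}$ have bounded expected length with a bound independent of $\epsilon$, and that multiple near-simultaneous trembles (probability $O(\epsilon^2)$) do not accumulate. The clean way to handle this is a coupling/Lyapunov argument: take the ``potential'' to be the number of periods until the next guaranteed arrival at $(\mathbf{g},C)$ under tremble-free continuation, show it is bounded by $L_0 + K$ on the whole state space, and show it contracts in expectation outside $\mathbf{g}$ up to an $O(\epsilon)$ error, so that $\pi_\epsilon$ concentrates on $\mathbf{g}$. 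Everything else — aperiodicity, irreducibility, the ergodic theorem — is routine for this finite chain.
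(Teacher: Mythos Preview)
Your argument is correct but takes a different route from the paper's. The paper's proof actually has two parts. First, and this is where most of the work lies, it re-establishes that $\hat{s}_c$ remains an equilibrium in the \emph{perturbed} game by redoing Lemma~1 with a tremble of size $\epsilon(\gamma)$; you skip this entirely and simply invoke Theorem~1 to fix $K$ and $\bar\gamma$ for the unperturbed game. Strictly read, the proposition's displayed conclusion is only about the distribution $Q^\epsilon$ induced by the trembled profile, so your reading is defensible, but be aware that the paper treats the equilibrium-preservation under trembles as the substantive robustness claim of this section. Second, for the time-average probability statement itself, the paper uses an elementary blocking argument rather than ergodic theory: it picks $\hat T$ with $(\hat T-K)/\hat T>1-a$, observes that on any window of length $\hat T$ in which no tremble occurs the network sits at $\mathbf{g}$ for all but at most $K$ periods, takes $\epsilon$ small enough that the no-tremble event has probability exceeding $1-b$ on each window, and then patches windows of length $n\hat T$. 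Your Markov-chain route (stationary $\pi_\epsilon$ with $\pi_\epsilon(\mathbf{g})\to 1$, then the ergodic theorem) is more systematic and arguably cleaner, at the cost of heavier machinery than the paper's direct window estimate. One small point: to make the process genuinely Markov you must include the punishment-phase counter (how many of the $K$ periods remain) in the state, not just the $C/P$ flag and the current network; this is implicit in your $L_0+K$ return-time bound but should be stated explicitly.
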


\begin{proof} We first prove that $\hat{s}_{c}$ is an equilibrium
when $\epsilon$ is sufficiently small. Referring to the proof of
Theorem 1, it suffices to show that Lemma 1 still holds under this
alternative environment with a function $\epsilon(\gamma)$. For part
$(a)$ of Lemma 1, let $t^{*}$ be the smallest $t$ satisfying $\frac{N(N-1)}{2}(1-\frac{2}{N(N-1)})^{t}<1$,
and following a similar argument as in the proof of Theorem 1, we
know that $i$'s expected payoff in ${\bf {g}}(t)$ is bounded below
by 
\begin{align*}
 & W(1-(1-\epsilon)^{Nt})+\frac{N(N-1)}{2}(1-\frac{2}{N(N-1)})^{t}W(1-\epsilon)^{Nt}\\
 & +(1-\frac{N(N-1)}{2}(1-\frac{2}{N(N-1)})^{t})u_{i}(\bar{\theta},{\bf {g}})(1-\epsilon)^{Nt}.
\end{align*}
Agent $i$'s total expected payoff is bounded below by 
\begin{align*}
 & \sum_{t=1}^{t^{*}-1}\gamma^{t-1}W+\sum_{t=t^{*}}^{\infty}\gamma^{t-1}W(1-(1-\epsilon)^{Nt})+\sum_{t=t^{*}}^{\infty}\gamma^{t-1}\frac{N(N-1)}{2}(1-\frac{2}{N(N-1)})^{t}W(1-\epsilon)^{Nt}\\
 & +\sum_{t=t^{*}}^{\infty}(1-\frac{N(N-1)}{2}(1-\frac{2}{N(N-1)})^{t})u_{i}(\bar{\theta},{\bf {g}})(1-\epsilon)^{Nt}\\
= & \frac{W}{1-\gamma}-\frac{\gamma^{t^{*}-1}(1-\epsilon)^{Nt^{*}}W}{1-\gamma(1-\epsilon)^{N}}+\frac{\gamma^{t^{*}-1}(1-\epsilon)^{Nt^{*}}u_{i}(\bar{\theta},{\bf {g}})}{1-\gamma(1-\epsilon)^{N}}\\
 & +\frac{N(N-1)}{2}\frac{\gamma^{t^{*}-1}(1-\epsilon)^{Nt^{*}}(1-\frac{2}{N(N-1)})^{t^{*}}(W-u_{i}(\bar{\theta},{\bf {g}}))}{1-\gamma(1-\frac{2}{N(N-1)})(1-\epsilon)^{N}}.
\end{align*}

The last term has a lower bound which is independent of $\gamma$.
Also, note that as $\epsilon\rightarrow0$, we have 
\begin{align*}
\frac{W}{1-\gamma}-\frac{\gamma^{t^{*}-1}(1-\epsilon)^{Nt^{*}}W}{1-\gamma(1-\epsilon)^{N}} & \rightarrow\frac{W}{1-\gamma}-\frac{\gamma^{t^{*}-1}W}{1-\gamma}\\
\frac{\gamma^{t^{*}-1}(1-\epsilon)^{Nt^{*}}u_{i}(\bar{\theta},{\bf {g}})}{1-\gamma(1-\epsilon)^{N}} & \rightarrow\frac{\gamma^{t^{*}-1}u_{i}(\bar{\theta},{\bf {g}})}{1-\gamma},
\end{align*}
and as $\gamma\rightarrow1$, we have 
\begin{align*}
\frac{W}{1-\gamma}-\frac{\gamma^{t^{*}-1}W}{1-\gamma} & \rightarrow t^{*}-1\\
\frac{\gamma^{t^{*}-1}u_{i}(\bar{\theta},{\bf {g}})}{1-\gamma} & \rightarrow\infty.
\end{align*}
Hence, for every number $x>0$, there exists a function $\epsilon(\gamma)$
such that for some $\gamma'\in(0,1)$, any $(\gamma,\epsilon)$ such
that $\gamma\geq\gamma'$ and $\epsilon\leq\epsilon(\gamma)$ makes
agent $i$'s expected total payoff higher than $x$. This proves part
$(a)$ of Lemma 1. Part $(b)$ can be proved by a similar argument.

Now we prove the limit inferior in probability. Let $\hat{T}$ be
a sufficiently large integer such that $\frac{\hat{T}-K}{\hat{T}}>1-a$.
We know that for every $b>0$ and $\tau\in\mathbb{N}^{+}$, when $\epsilon$
is sufficiently small we have $Q^{\epsilon}(\frac{|\{t:\tau\leq t\leq\tau+\hat{{T}},{\bf {g}}(t)={\bf {g}}\}|}{\hat{{T}}}>1-a)>1-b$.
Moreover, this property is invariant for every time period of length
$n\hat{T}$, $n\in\mathbb{N}^{+}$. This completes the proof.\end{proof}

Next, we discuss how the additional property of stability of a network
brings about an equilibrium that prevents typical \textit{group} deviations.
Recall that a network is core-stable if there is no subgroup of agents
that can form another network on their own (without linking to any
agent not in the subgroup) and provide a Pareto improvement for the
subgroup. We consider a natural class of group deviations. Fix a sub-group
of agents $\hat{I}\subsetneq I$ and a network $\hat{{\bf {g}}}\in G(\hat{I})$
. We consider a group deviation by agents in $\hat{I}$ in which they
commit to forming $\hat{{\bf {g}}}$ (that is they agree to form or
maintain link $ij$ if and only if $ij\in{\bf \hat{{\bf {g}}}}$).
We refer to these deviations as \textit{network deviations}.

In the following result, we assume that the monitoring structure reveals
the remaining number of periods for the punishment phase.

\begin{prop}Fix a core-stable network ${\bf {g}}$ such that $u_{i}(\bar{\theta},{\bf {g}})>0$
for all $i$. There exists an integer $\hat{K}$ and a cutoff $\hat{\gamma}\in(0,1)$
such that for every $\gamma\in[\hat{\gamma},1)$, there exists $M(\gamma)$
such that if $K\geq\hat{K}$ then 
\begin{itemize}
\item {a.} The strategy profile constructed in Theorem 1 is an equilibrium
and the formation process converges strongly to ${\bf g}$. 
\item {b.} Following any formation history with the remaining punishment
phase no longer than $M(\gamma)$ periods, no proper subgroup of agents
has a profitable network deviation. 
\item {c.} $M(\gamma)$ is increasing in $\gamma$ and $\lim_{\gamma\rightarrow1}M(\gamma)=\infty$. 
\end{itemize}
\end{prop}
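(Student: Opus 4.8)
The plan is to prove the three parts of Proposition~6 by building on the equilibrium construction of Theorem~1. For part~(a), I would observe that core-stability is irrelevant to the single-agent incentive constraints checked in Theorem~1, so the argument there goes through verbatim: since $u_i(\bar\theta,{\bf g})>0$ for all $i$, there exist $\hat K$ and $\hat\gamma$ (which I may take to be the $K$ and $\bar\gamma$ of Theorem~1, possibly enlarged) such that $\hat s_c$ is an equilibrium converging strongly to ${\bf g}$. The only new content is carrying along the hypothesis that the monitoring structure now additionally reports the number of periods remaining in the punishment phase; this is strictly more informative than $y_{{\bf g},K}$, so Theorem~1 still applies and nothing in the individual-deviation analysis changes.

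For part~(b), fix a proper subgroup $\hat I\subsetneq I$ and a network $\hat{\bf g}\in G(\hat I)$ and consider the network deviation in which the agents of $\hat I$ commit to $\hat{\bf g}$. I would compare, for an agent $i\in\hat I$, the continuation payoff from conforming to $\hat s_c$ against the continuation payoff from participating in the deviation. When the deviation is detected the public signal switches to $P$, the non-deviators break all links for the remaining punishment window, and thereafter the process returns to the cooperation phase and reconverges to ${\bf g}$; so from the deviators' point of view the best they can hope for is to enjoy $\hat{\bf g}$ for the $K$ (or, after a history, the remaining $\le M(\gamma)$) punishment periods and then fall back into the ${\bf g}$-path. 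Using the notation of Theorem~1, conforming yields at least $\gamma\underline\mu_C(\gamma,L)+\gamma^{1+L}\underline\mu_C(\gamma,\infty)$ where $L\le M(\gamma)$ is the remaining punishment length, while the deviation yields at most $\sum_{t=1}^{L}\gamma^{t-1}\max_i u_i(\bar\theta,\hat{\bf g})+\gamma^{L}\bar\mu_C(\gamma,\infty)$ for the agent, aggregated over $\hat I$. Here core-stability enters: because ${\bf g}$ is not blocked, for \emph{every} $\hat I$ and $\hat{\bf g}$ there is some agent $i\in\hat I$ with $u_i(\bar\theta,\hat{\bf g})<u_i(\bar\theta,{\bf g})$ — actually I want the sharper statement that some agent strictly loses relative to ${\bf g}$, so that the per-period flow under $\hat{\bf g}$ is strictly below that under ${\bf g}$ for that agent, with a gap bounded away from $0$ uniformly over the finitely many $(\hat I,\hat{\bf g})$. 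That agent's one-shot gain from the deviation over the window is at most a bounded multiple of the window length $L$, while by Lemma~1(a) her loss from delaying the return to ${\bf g}$ is $\gamma\underline\mu_C(\gamma,L)$, which grows without bound as $\gamma\to1$ for any fixed $L$, and more generally dominates the linear-in-$L$ gain provided $L$ is not too large relative to how patient the agents are. Choosing $M(\gamma)$ to be the largest window length for which this inequality still holds gives part~(b).

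For part~(c), monotonicity and divergence of $M(\gamma)$ follow from the structure of the comparison inequality: the conforming payoff $\underline\mu_C(\gamma,\cdot)$ is increasing in $\gamma$ by Lemma~1(a)–(b), while the deviation bound $\sum_{t=1}^{L}\gamma^{t-1}\max_i u_i(\bar\theta,\hat{\bf g})$ is bounded above independent of $\gamma$ by $L\cdot\max_i|u_i|$, so the set of admissible $L$ is nested increasing in $\gamma$ and, since by Lemma~1(a) the conforming side tends to $\infty$ uniformly, any fixed $L$ is eventually admissible; hence $M(\gamma)\to\infty$.

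The main obstacle is part~(b), specifically getting a \emph{uniform} bound. Two subtleties need care. First, I must confirm that the deviating coalition's best response to $\hat s_c$ really is no better than ``play $\hat{\bf g}$ during the punishment window, then rejoin the ${\bf g}$-path'': the non-deviators break all their links during $P$, so the deviators cannot extract anything from outside $\hat I$ during those periods, and once the signal returns to $C$ the non-deviators resume the $\hat s_c$ prescription so deviators can do no better than conform — this is where the one-step-deviation logic and the component-independence assumption ($u_i$ depends only on $C_i$) are used to decouple $\hat I$'s payoff from the rest. Second, and more delicate, is that core-stability as defined gives a strict inequality for \emph{some} agent in the blocking comparison, but I need that the \emph{same} agent's flow payoff strictly drops when moving from ${\bf g}$ to $\hat{\bf g}$ by an amount bounded below over the finitely many coalitions; since ${\bf g}$ core-stable means no $(\hat I,\hat{\bf g})$ Pareto-dominates it, for each such pair either some agent strictly loses or the deviation is payoff-equivalent for all of $\hat I$ — in the latter (knife-edge) case the deviating agents are at best indifferent over the window and strictly worse off due to the $\gamma^L<1$ discounting of the eventual return, so the argument still closes, but this case must be handled explicitly rather than swept under ``strict gap.'' Assembling these, taking the minimum of the finitely many cutoffs, and setting $\hat\gamma$, $\hat K$, $M(\gamma)$ accordingly completes the proof.
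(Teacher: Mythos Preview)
Your part~(a) is fine, but part~(b) rests on a misreading of what a \emph{network deviation} is, and this propagates into the payoff comparison and into part~(c).

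A network deviation, as defined just before the proposition, is a \emph{permanent} commitment by $\hat I$ to form $\hat{\bf g}$ and nothing else. The deviators do not ``fall back into the ${\bf g}$-path'' once the signal returns to $C$; on the contrary, their continued refusal to form links in ${\bf g}\setminus\hat{\bf g}$ re-triggers punishment, and by component independence they simply sit at $\hat{\bf g}$ forever (up to a bounded transient). Consequently your upper bound on the deviation payoff, $\sum_{t=1}^{L}\gamma^{t-1}\max_i u_i(\bar\theta,\hat{\bf g})+\gamma^{L}\bar\mu_C(\gamma,\infty)$, is not the right object: the second term never materialises for a deviator. Likewise your lower bound on conforming is wrong: during the remaining $K'$ punishment periods the conforming payoff is $0$, not $\underline\mu_C(\gamma,L)$, so the correct bound is $\gamma^{K'}\underline\mu_C(\gamma,\infty)$. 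Finally, the claim that $\underline\mu_C(\gamma,L)$ ``grows without bound as $\gamma\to1$ for any fixed $L$'' is false---for finite $L$ this quantity is bounded by $L$ times the maximal one-period payoff; Lemma~1(a) concerns $L=\infty$ only.

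The paper's argument instead compares two infinite-horizon streams directly. Conforming gives at least $\gamma^{K'}\underline\mu_C(\gamma,\infty)$, which is $\sum_t\gamma^t u_i(\bar\theta,{\bf g})$ up to a constant $E$ independent of $\gamma$. The permanent deviation gives at most $\sum_t\gamma^t u_i(\bar\theta,\hat{\bf g})$ up to another constant $D$. Subtracting, the net gain from deviating is at most $D+E+K'V+\gamma^{K'}F/(1-\gamma)$, where $F=\max_{\hat I,\hat{\bf g}}\{u_i(\bar\theta,\hat{\bf g})-u_i(\bar\theta,{\bf g})\}<0$ by core-stability and finiteness. The term $\gamma^{K'}F/(1-\gamma)\to-\infty$ as $\gamma\to1$ is what kills the deviation, and $M(\gamma)$ is defined as the largest $K'$ for which this expression stays negative; monotonicity and divergence of $M(\gamma)$ then follow from the explicit form of this bound, not from monotonicity of $\underline\mu_C$. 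Your intuition that ``some agent's flow strictly drops'' is exactly the right lever, but it must be applied to the permanent-stream comparison, not to a finite window.
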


\begin{proof} Consider a formation history with the remaining punishment
phase being $K'$ periods (including the current period). By the assumption
that ${\bf {g}}$ is core-stable, for every $\hat{I}\subsetneq I$
and associated $\hat{{\bf {g}}}$, there exists an agent $i$ such
that $u_{i}(\bar{\theta},\hat{{\bf {g}}})-u_{i}(\bar{\theta},{\bf {g}})<0$.
Fix one such $i$. From the current period onwards, if the agents
follow $\hat{s}_{c}$, $i$'s total payoff is bounded below by 
\begin{align*}
\gamma^{K'}\underline{\mu}_{C}(\gamma,\infty).
\end{align*}
Let $V>0$ denote the largest possible payoff of any agent in any
network in one period. With a little abuse of notation, let ${\bf {g}}(t)$
denote the network formed $t$ periods from the current period. If
the group of agents $\hat{I}$ follow $s'(\hat{I},\hat{{\bf {g}}})$,
$i$'s payoff in ${\bf {g}}(t)$ is bounded above by 
\begin{align*}
1\{{\bf {g}}(t)\neq\hat{{\bf {g}}}\}V+(1-1\{{\bf {g}}(t)\neq\hat{{\bf {g}}}\})u_{i}(\bar{\theta},\hat{{\bf {g}}}).
\end{align*}
If the group of agents $\hat{I}$ follow $s'(\hat{I},\hat{{\bf {g}}})$,
the probability that ${\bf {g}}(t)\neq\hat{{\bf {g}}}$ is bounded
above by $\min\{\frac{N(N-1)}{2}(1-\frac{2}{N(N-1)})^{t+1},1\}$.
For all $t$ such that $\frac{N(N-1)}{2}(1-\frac{2}{N(N-1)})^{t+1}<1$,
$i$'s expected payoff in ${\bf {g}}(t)$ is bounded above by 
\begin{align*}
\frac{N(N-1)}{2}(1-\frac{2}{N(N-1)})^{t+1}V+(1-\frac{N(N-1)}{2}(1-\frac{2}{N(N-1)})^{t+1})u_{i}(\bar{\theta},\hat{{\bf {g}}}).
\end{align*}
Following a similar argument to the proof of Lemma 1, there exists
$D>0$ (regardless of $\gamma$) such that $i$'s discounted expected
total payoff from $s'(\hat{I},\hat{{\bf {g}}})$ is less than $D+\sum_{t=0}^{\infty}\gamma^{t}u_{i}(\bar{\theta},\hat{{\bf {g}}})$.
Now, the difference in $i$'s payoff between the two strategy profiles
is bounded above by 
\begin{align*}
D+\sum_{t=0}^{\infty}\gamma^{t}u_{i}(\bar{\theta},\hat{{\bf {g}}})-\gamma^{K'}\underline{\mu}_{C}(\gamma,\infty).
\end{align*}
With a similar argument to above, there exists $E>0$ (regardless
of $\gamma$) such that $\underline{\mu}_{C}(\gamma,\infty)>\sum_{t=0}^{\infty}\gamma^{t}u_{i}(\bar{\theta},{\bf {g}})-E$.
Hence, the difference in $i$'s payoff between the two strategy profiles
is bounded above by 
\begin{align*}
 & D+E+\sum_{t=0}^{\infty}\gamma^{t}u_{i}(\bar{\theta},\hat{{\bf {g}}})-\gamma^{K'}\sum_{t=0}^{\infty}\gamma^{t}u_{i}(\bar{\theta},{\bf {g}})\\
= & D+E+\sum_{t=0}^{K'-1}\gamma^{t}u_{i}(\bar{\theta},\hat{{\bf {g}}})+\gamma^{K'}(\sum_{t=0}^{\infty}\gamma^{t}(u_{i}(\bar{\theta},\hat{{\bf {g}}})-u_{i}(\bar{\theta},{\bf {g}})))\\
\leq & D+E+\sum_{t=0}^{K'-1}\gamma^{t}u_{i}(\bar{\theta},\hat{{\bf {g}}})+\frac{\gamma^{K'}F}{1-\gamma}\\
\leq & D+E+K'V+\frac{\gamma^{K'}F}{1-\gamma},
\end{align*}
where $F=\max_{\hat{I},\hat{{\bf {g}}}}\{u_{i}(\bar{\theta},\hat{{\bf {g}}})-u_{i}(\bar{\theta},{\bf {g}})\}$.
Since the total number of networks is finite, we know that $F$ exists
and that $F<0$.

Let $\gamma''$ be such that $\frac{F}{1-\gamma''}=-|D+E|-1$, and
let $\bar{K}$ and $\bar{\gamma}$ be as derived in the proof of Theorem
1. Let $\hat{K}=\bar{K}$ and let $\hat{\gamma}=\max\{\gamma'',\bar{\gamma}\}$.
For every $\gamma\geq\hat{\gamma}$, let $M(\gamma)$ be the largest
$K'\in\mathbb{N}$ such that $D+E+K'V+\frac{\gamma^{K'}F}{1-\gamma}<0$.
We know that $M(\gamma)$ exists because $K'=0$ always satisfies
the inequality.

Now, given $\hat{K}$ and any $\gamma\geq\hat{\gamma}$, $\hat{s}_{c}$
is an equilibrium where the formation process converges to ${\bf {g}}$
by Theorem 1. From the construction of $M(\gamma)$, given any $\hat{I}$
following any formation history with the remaining punishment phase
no longer than $M(\gamma)$ periods, there is always an agent in $\hat{I}\subsetneq I$
and associated $\hat{{\bf {g}}}$ whose payoff under strategy profile
$s'(\hat{I},\hat{{\bf {g}}})$ is strictly lower than that under strategy
profile $\hat{s}_{c}$. Hence, $\hat{s}_{c}$ is immune to $s'(\hat{I},\hat{{\bf {g}}})$.
Finally, since the term $D+E+K'V+\frac{\gamma^{K'}F}{1-\gamma}$ is
increasing in $K'$ and decreasing in $\gamma$, $M(\gamma)$ is increasing
in $\gamma$; the fact that $\lim_{\gamma\rightarrow1}\frac{\gamma^{K'}F}{1-\gamma}=-\infty$
for every given $K'$ ensures that $\lim_{\gamma\rightarrow1}M(\gamma)=\infty$.
This completes the proof. \end{proof}

To understand what this proposition means, note that $\hat{K}$ is
the length of the punishment period. In the equilibrium strategies
that we have constructed, agents receive a payoff of 0 during the
punishment phase. Hence, if $\hat{K}$ were infinite, or even extremely
long, groups would prefer to deviate rather than suffer such a long
punishment. However once $\hat{K}$ is given, part b guarantees that
if agents are sufficiently patient, groups of agents will be willing
to endure a punishment of length $\hat{K}$ rather than coordinate
on a network deviation.

\section{Foresight in the Connections Model}

We have shown that as long as a network secures a positive payoff
for every agent, it can be sustained in an equilibrium if the monitoring
structure is fine enough and agents are sufficiently patient. We now
apply this result and the techniques to evaluate the sustainability
of efficient entworks in the widely studied connections model introduced
by Jackson and Wolinsky\cite{JW}. Jackson and Wolinsky\cite{JW}
assume that agents are homogeneous and myopic and they argue that
the strongly efficient network is either empty, a star or a clique.
We allow for heteregenous and foresighted agents and find a much richer
set of strongly efficient networks.

Because the general case is cumbersome we first discuss in detail
a two-type environment to clearly explain the key results without
loss of much generality and to avoid technical redundancy. We will
demonstrate how the analysis can be extended to a generalized model
with multiple types at the end of this section.

\subsection{Characterization of Strongly Efficient Networks}

Assume that each agent can be one of two types, $\alpha$ or $\beta$.
Let $n_{\alpha},n_{\beta}>0$ denote the number of type $\alpha$
agents and that of type $\beta$ agents correspondingly. Without loss
of generality, we assume that $f(\alpha)>f(\beta)$. Let ${\bf {g}}^{e}$
denote the strongly efficient network. Before stating the formal result,
we first present a graphical illustration of the topology of the strongly
efficient network under different parameter values in Figure 1.

\begin{figure}[h]
\centering \includegraphics[width=5in]{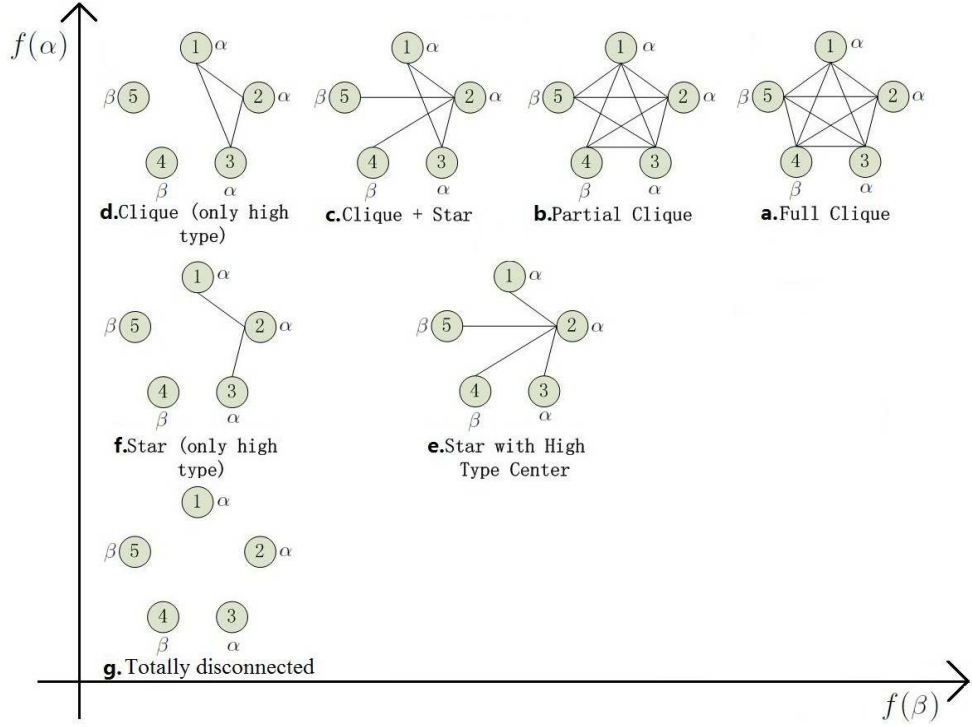}
\caption{Strongly efficient network in connections model}
\end{figure}

The following theorem fomally characerizes the conditions on model
parameters that lead to each strongly efficient network topology.

\begin{thm} ${\bf {g}}^{e}$ can be described as follows: 
\begin{itemize}
\item {a.} If $(1-\delta)f(\beta)>c$, then ${\bf {g}}^{e}$ is a clique
encompassing every agent. 
\item {b.} If $(1-\delta)\frac{f(\alpha)+f(\beta)}{2}>c>(1-\delta)f(\beta)$,
then ${\bf {g}}^{e}$ is such that every two type $\alpha$ agents
are linked, and every type $\alpha$ agent is linked with every type
$\beta$ agent, but no type $\beta$ agent is linked with another
type $\beta$ agent. 
\item {c.} If $(1-\delta)f(\alpha)>c>(1-\delta)\frac{f(\alpha)+f(\beta)}{2}$,
and $(1+\delta(n_{\alpha}-1))f(\alpha)+(1+\delta(n_{\alpha}+n_{\beta}-2))f(\beta)>2c$,
then ${\bf {g}}^{e}$ is such that every two type $\alpha$ agents
are linked, and every type $\beta$ agent is linked with the same
type $\alpha$ agent, but no type $\beta$ agent is linked with another
type $\beta$ agent. 
\item {d.} If $(1-\delta)f(\alpha)>c>(1-\delta)\frac{f(\alpha)+f(\beta)}{2}$,
and $(1+\delta(n_{\alpha}-1))f(\alpha)+(1+\delta(n_{\alpha}+n_{\beta}-2))f(\beta)<2c$,
then ${\bf {g}}^{e}$ is a clique encompassing every type $\alpha$
agent but no type $\beta$ agent. 
\item {e.} If $(1-\delta)f(\alpha)<c$, $f(\alpha)+f(\beta)+\delta[(n_{\beta}-1)f(\beta)+(n_{\alpha}-1)(f(\alpha)+f(\beta))]>2c$,
and 
\begin{align*}
 & 2(n_{\alpha}-1)f(\alpha)+n_{\beta}(f(\alpha)+f(\beta))+\delta[(n_{\alpha}-1)(n_{\alpha}-2)f(\alpha)+n_{\beta}(n_{\beta}-1)f(\beta)\\
 & +n_{\beta}(n_{\alpha}-1)(f(\alpha)+f(\beta))]-2(n_{\alpha}+n_{\beta}-1)c>0,
\end{align*}
then ${\bf {g}}^{e}$ is a star encompassing every agent, with a type
$\alpha$ agent as the center. 
\item {f.} If $(1-\delta)f(\alpha)<c$, and 
\begin{align*}
(1+\delta(n_{\alpha}-1))f(\alpha)+(1+\delta(n_{\alpha}+n_{\beta}-2))f(\beta)<2c<f(\alpha)(2+\delta(n_{\alpha}-2)),
\end{align*}
then ${\bf {g}}^{e}$ is a star encompassing every type $\alpha$
agent but no type $\beta$ agent. 
\item {g.} If $(1-\delta)f(\alpha)<c$, and 
\begin{align*}
 & \max\{2(n_{\alpha}-1)f(\alpha)+n_{\beta}(f(\alpha)+f(\beta))+\delta[(n_{\alpha}-1)(n_{\alpha}-2)f(\alpha)+n_{\beta}(n_{\beta}-1)f(\beta)\\
 & +n_{\beta}(n_{\alpha}-1)(f(\alpha)+f(\beta))]-2(n_{\alpha}+n_{\beta}-1)c,f(\alpha)(2+\delta(n_{\alpha}-2))-2c\}<0,
\end{align*}
then ${\bf {g}}^{e}$ is the empty network. 
\end{itemize}
\end{thm}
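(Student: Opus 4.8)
The plan is to characterize ${\bf g}^e$ by a two-stage argument. First I would reduce the search space by establishing structural properties that any strongly efficient network must satisfy: (i) ${\bf g}^e$ is connected whenever it is nonempty (if a nonempty efficient network had two components, merging them by a single link across the two components changes payoffs by a computable amount, and component independence plus the fact that adding a link gives each endpoint at least the direct benefit $f(\theta_j)-c$ or $f(\theta_i)-c$ of the new neighbor minus a loss of at most the existing within-component value lets one sign the change under the stated parameter ranges); (ii) among connected networks on a fixed vertex set, the efficient one is either the complete graph on that set or a star centered at a type-$\alpha$ agent — this is essentially the Jackson--Wolinsky dichotomy adapted to heterogeneous $f$, and follows because, holding the vertex set fixed, total value is a function only of the multiset of pairwise distances, and for distances the only candidates for a maximizer are "all distances $1$" (clique) or "all non-center distances $2$" (star). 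The center of an efficient star must be a highest-value type, i.e.\ type $\alpha$, since every other agent pays the same cost and the center's identity only affects others through $f(\theta_{\text{center}})$ at distance $1$ and through which pairs sit at distance $2$.

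Second, I would determine which vertices are included and, when all are included, which links are present, by a sequence of local exchange arguments keyed to the cost thresholds in the statement. The key comparisons are: adding a $\beta$--$\beta$ link versus not (gain $2((1-\delta)f(\beta)-c)$ when the two $\beta$'s were at distance $2$ through a common neighbor — this is exactly threshold (a) vs (b)); adding an $\alpha$--$\beta$ link versus routing that $\beta$ through an $\alpha$ at distance $2$ (gain involves $(1-\delta)\tfrac{f(\alpha)+f(\beta)}{2}$ against $c$ — threshold (b) vs (c)/(d)); whether a type-$\beta$ agent should be attached at all versus left a singleton (this produces the "$(1+\delta(n_\alpha-1))f(\alpha)+\dots>2c$" type inequalities separating (c) from (d) and (f) from the rest); and, once $(1-\delta)f(\alpha)<c$ so no clique-like structure survives, whether the optimal connected network on a given vertex set (a star) beats the empty network on that set, which gives the large displayed inequalities in (e), (f), (g). In each case I would write the total-welfare difference between the two candidate networks as a sum over ordered pairs, collect the direct-link terms and the distance-$2$ terms, and observe that it reduces to one of the displayed inequalities; monotonicity in $c$ then shows the regions are exhaustive and the ordering of thresholds is consistent (note $(1-\delta)f(\beta)<(1-\delta)\tfrac{f(\alpha)+f(\beta)}{2}<(1-\delta)f(\alpha)$ automatically since $f(\alpha)>f(\beta)>0$).

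The main obstacle is case (c)--(d) and their star analogues (f)--(g): here one must compare, within the regime $(1-\delta)f(\alpha)>c>(1-\delta)\tfrac{f(\alpha)+f(\beta)}{2}$, a network in which the $n_\beta$ type-$\beta$ agents all hang off a \emph{single} type-$\alpha$ hub against one in which they are distributed among several $\alpha$'s, and against one in which they are dropped entirely — and show the all-on-one-hub configuration is optimal exactly when the displayed "$>2c$" inequality holds and the empty-of-$\beta$ clique is optimal otherwise. Concentrating all $\beta$'s on one hub is favored because a $\beta$ agent contributes $f(\beta)$ to each other agent discounted by $\delta^{d-1}$, and a common hub keeps every $\beta$ at distance $2$ (discount $\delta$) from every other $\beta$ and from every other $\alpha$, which is the minimum possible given $(1-\delta)f(\alpha)>c$ forbids $\beta$--$\beta$ links; making this rigorous requires a convexity/rearrangement argument over how the $\beta$'s are partitioned among the $\alpha$'s, and then a single marginal computation — the net value of attaching the whole block of $\beta$'s to one hub — which is precisely the quantity $(1+\delta(n_\alpha-1))f(\alpha)+(1+\delta(n_\alpha+n_\beta-2))f(\beta)-2c$ appearing in (c), (d), (f). Once that marginal term and the corresponding "attach one more $\beta$" and "switch from clique to star on the $\alpha$'s" terms are computed, matching each region to its inequality is bookkeeping, so I would carry those three or four marginal computations carefully and leave the rest as routine.
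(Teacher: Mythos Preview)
Your structural reduction (ii) is incorrect, and this breaks the proof plan. You assert that on a fixed vertex set the efficient connected network is either the clique or a star with $\alpha$-center, citing a heterogeneous adaptation of the Jackson--Wolinsky dichotomy. But the networks in cases (b) and (c) are neither: in (b) the $\alpha$'s form a clique and each $\beta$ is linked to \emph{every} $\alpha$ but to no other $\beta$; in (c) the $\alpha$'s form a clique and each $\beta$ hangs off one common $\alpha$. These are strict ``core--periphery'' structures that are neither complete nor stars. The underlying reason your argument fails is your claim that ``total value is a function only of the multiset of pairwise distances'': with heterogeneous $f$, total value depends on the multiset of \emph{type-pair}-labeled distances, so the extremal configurations are not simply clique-or-star.

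The paper does not attempt any such global dichotomy. For (a)--(d) it argues link by link: under (a) every link has positive marginal value; under (b) $\alpha$--$\alpha$ and $\alpha$--$\beta$ links always help while $\beta$--$\beta$ links always hurt; under (c) it fixes the $\alpha$-clique and then optimizes directly over the vector $(m_1,\dots,m_n)$ of link counts for the $\beta$ agents, showing the maximum occurs at all $m_k=1$ \emph{and} that this bound is attained only when all $\beta$'s attach to the same $\alpha$. Only in the regime $(1-\delta)f(\alpha)<c$ does the paper prove a star-dominance lemma (by comparing $\nu({\bf g})$ to $\nu({\bf g}^*)$ via explicit counts $k^d,k^{\mathrm{ind}}$ of direct and indirect type-pair connections), and this is the step that actually parallels Jackson--Wolinsky. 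Your plan would be salvageable if you restricted (ii) to the regime $(1-\delta)f(\alpha)<c$ and replaced it, in the other regimes, by the direct marginal-link analysis the paper uses; your second-stage exchange arguments for $\beta$--$\beta$ and $\alpha$--$\beta$ links are on the right track once that is fixed. Your connectedness argument (i) is also too loose as stated---``adding a link gives each endpoint at least $f(\theta_j)-c$'' need not be positive in the star regimes---and the paper instead proves it by locating, in each component, a link whose marginal value to the highest-payoff agent is nonnegative and joining the two components across those endpoints.
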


Despite the lengthy conditions on the payoffs from different types,
the underlying argument for the above characterization is straightforward.
According to how much benefit each type can provide via a connection,
we can categorize types in a systematic way and assign linkage correspondingly
to maximize the sum of total payoffs. For the high types among which
a direct link always brings benefits that are higher than the maintenance
cost, a clique must be formed among them in any strongly efficient
network. The next category contains the types for which any single
link to one of the highest types is socially beneficial, but links
among themselves are not. As a result, these types will not link directly
to themselves but will form every possible link to agents belonging
to the first category. When the benefit from a type gets even lower,
such types can only add to social welfare by having only one link
to one agent of the strictly highest type, thus minimizing the cost
and receiving/providing most of the benefit via indirect connection.
Last but not least, agents of the lowest types cannot increase the
social welfare in any way and will remain singletons in a strongly
efficient network.

The connected component of the strongly efficient network exhibits
a ``core-periphery'' pattern in topology. The core, which corresponds
to the first category, consists of agents with the highest connectivity
degree and the largest clustering coefficient. The periphery agents
each have one or more links with the core agents, depending on the
value structure. The detailed proof of the theorem is provided below.

\begin{proof} $(a)$ The result is clear since $(1-\delta)f(\beta)>c$
implies that the benefit from any link (bounded below by $2(1-\delta)f(\beta)$)
is greater than the associated cost ($2c$).

$(b)$ If $(1-\delta)\frac{f(\alpha)+f(\beta)}{2}>c$, a link between
two type $\alpha$ agents or one type $\alpha$ agent and one type
$\beta$ agent always increases the total payoff in the network. Given
that there is a link between these agents, $c>(1-\delta)f(\beta)$
implies that a link between two type $\beta$ agents always decreases
the total payoff in the network. Hence, ${\bf {g}}^{e}$ is as described
in the result.

$(c)$ The first condition implies that any pair of type $\alpha$
agents are linked in ${\bf {g}}^{e}$. Furthermore, for $n$ type
$\beta$ agents with $m_{1},m_{2},\cdots,m_{n}$ links respectively
(where $m_{1},m_{2},\cdots,m_{n}$ are positive integers), the largest
possible contribution to total payoff is 
\begin{align*}
\sum_{k=1}^{n}[m_{k}(f(\alpha)+f(\beta))+\delta((n_{\alpha}-m_{k})(f(\alpha)+f(\beta))+(n_{\beta}-1)2f(\beta))-2m_{k}c].
\end{align*}
Given the condition $c>(1-\delta)\frac{f(\alpha)+f(\beta)}{2}$, the
above value is maximized at $m_{k}=1$ for $k=1,2,\cdots,n$. This
upper bound is reached when all $n$ type $\beta$ agents are linked
to the same type $\alpha$ agent. It is not difficult to see that
if the contribution by $n$ type $\beta$ agents is positive, then
that by $n+1$ connected type $\beta$ agents is also positive and
larger. Hence, in ${\bf {g}}^{e}$, either no type $\beta$ agent
is connected, or every type $\beta$ agent is linked to the same type
$\alpha$ agent. The condition $(1+\delta(n_{\alpha}-1))f(\alpha)+(1+\delta(n_{\alpha}+n_{\beta}-2))f(\beta)>2c$
implies that the contribution by $n_{\beta}$ type $\beta$ agents
is positive, and hence ${\bf {g}}^{e}$ is as described in the result.

$(d)$ It follows from the proof of $(c)$.

$(e)$ First, we establish the following lemma. \begin{lem} Any strongly
efficient network has at most one non-empty component. \end{lem}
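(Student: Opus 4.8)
The plan is to argue by contradiction. Suppose the strongly efficient network ${\bf g}^e$ has two (or more) non-empty components; pick two of them, $C_1$ and $C_2$, on agent sets $N(C_1)$ and $N(C_2)$. I will show that joining them into a single component strictly increases the sum of one-period payoffs, contradicting strong efficiency.

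For the joining operation, fix $i\in N(C_1)$ and $j\in N(C_2)$ and consider ${\bf g}'={\bf g}^e\cup\{ij\}$. By component independence only the agents in $N(C_1)\cup N(C_2)$ are affected, and accounting for the new direct link $ij$, for the new indirect connection from each $k\in N(C_1)$ to each $l\in N(C_2)$ (at distance $d_{ki}+1+d_{jl}$ inside the merged component), and for the two extra maintenance costs, one finds that the change in the sum of payoffs is
\[
\Delta_{ij}\;=\;B_1(i)\,S_2(j)+S_1(i)\,B_2(j)-2c,
\]
where for a component $C$ and $v\in N(C)$ I write $B_C(v)=\sum_{k\in N(C)}\delta^{d_{vk}}f(\theta_k)$ (with $\delta^{0}:=1$, distances taken inside $C$) and $S_C(v)=\sum_{k\in N(C)}\delta^{d_{vk}}\ge 1$, and $B_1,S_1$ refer to $C_1$, $B_2,S_2$ to $C_2$. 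Since $\Delta_{ij}$ involves none of the payoffs of agents outside $N(C_1)\cup N(C_2)$, it is enough to produce one pair $(i,j)$ with $\Delta_{ij}>0$; because $f>0$ and $S_C\ge 1$, it even suffices to find $i,j$ with $B_1(i)+B_2(j)>2c$. So the lemma reduces to the claim $(\star)$: every non-empty component $C$ of a strongly efficient network contains an agent $v$ with $B_C(v)>c$.

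Establishing $(\star)$ is where the real content lies, and I expect it to be the main obstacle. The natural attempt is to suppose $B_C(v)\le c$ for every $v\in N(C)$ and contradict efficiency: then each $v$'s connection benefit $\tfrac1\delta\big(B_C(v)-f(\theta_v)\big)$ is at most $\tfrac1\delta\big(c-f(\beta)\big)$, so $u_v(\bar\theta,C)\le\tfrac1\delta\big(c-f(\beta)\big)-|N_v(C)|\,c$; summing over $v\in N(C)$ and using that $C$ is connected (hence has at least $|N(C)|-1$ links) bounds $\sum_{v\in N(C)}u_v(\bar\theta,C)$ from above, while $\sum_{v\in N(C)}u_v(\bar\theta,C)\ge0$ since deleting $C$ is always a feasible alternative — and compatibility of the two forces a lower bound on $c$. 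This already settles $(\star)$ in the bulk of the parameter space (for instance whenever $(1-\delta)c<f(\beta)$), but in the complementary regime — large per-link cost together with a small spatial discount $\delta$ — a crude count of this kind is not enough and one must use structural information: with a large cost an efficient component has to be organized around a high-degree hub agent, which automatically has large $S_C(\cdot)$ and hence large $B_C(\cdot)$. Pinning that down is the crux; everything else is the bookkeeping above, which also disposes of the borderline case in which a second component happens to yield zero total payoff, since the merge still improves welfare strictly.
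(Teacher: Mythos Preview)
Your framework is fine and your welfare formula $\Delta_{ij}=S_1(i)B_2(j)+S_2(j)B_1(i)-2c$ is correct, but the proof has a genuine gap: claim $(\star)$ is never established, and you yourself flag that your aggregate-bound attempt breaks down when $c$ is large and $\delta$ is small. The hand-waving about a ``high-degree hub'' is not an argument.

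The paper closes exactly this gap, and the fix is short enough that you should incorporate it rather than pursue the structural case analysis you outline. In each component $C_m$ let $i_m$ be an agent with the highest payoff; since the component contributes non-negative total welfare (else deleting it would improve efficiency), $u_{i_m}\ge 0$. Now write $i_m$'s payoff as a telescoping sum of marginal contributions of its own links and use the submodularity of the connections payoff: adding a link to $i_m$ is worth weakly less the more links $i_m$ already has. Hence at least one neighbor $j_m$ of $i_m$ satisfies $\Delta u_m(i_m j_m,\varnothing)\ge 0$, i.e.\ $i_m$'s payoff from the \emph{single} link $i_m j_m$ (with the remainder of $C_m\setminus D_{i_m}$ left intact) is non-negative. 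But that payoff equals $f(\theta_{j_m})+\sum_{l\neq i_m,j_m}\delta^{d'(j_m,l)}f(\theta_l)-c$, where $d'$ denotes distances in $C_m$ with $i_m$'s other links deleted; since distances in the full $C_m$ are no larger and the term $\delta f(\theta_{i_m})>0$ is gained, this quantity is strictly below $B_{C_m}(j_m)-c$. Therefore $B_{C_m}(j_m)>c$, which is precisely your $(\star)$ with $v=j_m$. Plugging $i=j_1$, $j=j_2$ into your $\Delta_{ij}$ then finishes the contradiction. The paper in fact never writes down your $\Delta_{ij}$; it just links $j_1$ to $j_2$ and observes that each endpoint's marginal payoff is strictly positive while every other agent's payoff weakly increases---but that is the same conclusion your formula delivers once $(\star)$ is in hand.
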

\begin{proof} Suppose that there exists some strongly efficient network
that has two non-empty components $C_{1}$ and $C_{2}$. For component
$m$ ($m=1,2$), let $i_{m}$ denote (one of) the agent(s) who has
the highest payoff in component $C_{m}$. Since the network is strongly
efficient, we know that this payoff is non-negative for $m=1,2$.
Let $D_{m}$ be the set of links that $i_{m}$ has in component $C_{m}$.
for every $i_{m}j\in D_{m}$ and every $D_{m}'\subset D_{m}-i_{m}j$,
define 
\begin{align*}
\Delta u_{m}(i_{m}j,D'_{m})=u_{{i_{m}}}(\bar{\theta}_{-{i_{m}}},(C_{m}\setminus D_{m})\cup D_{m}')-u_{{i_{m}}}(\bar{\theta}_{-{i_{m}}},(C_{m}\setminus D_{m})\cup D_{m}'-i_{m}j).
\end{align*}
This term denotes the marginal payoff of $i$ from link $i_{m}j$
given $D_{m}'$. It is not difficult to see that for $D''_{m}\subset D'_{m}$,
$\Delta u_{m}(i_{m}j,D''_{m})\geq\Delta u_{m}(i_{m}j,D'_{m})$. Since
agent $i_{m}$'s payoff is non-negative, it follows that there must
exist $j_{m}$ such that $\Delta u_{m}(i_{m}j,\varnothing)\geq0$.

Consider the network $C_{1}\cup C_{2}+j_{1}j_{2}$. For $j_{1}$,
the marginal payoff from link $j_{1}j_{2}$ is strictly larger than
$\Delta u_{2}(i_{2}j_{2},\varnothing)$; similarly for $j_{2}$, the
marginal payoff from link $j_{1}j_{2}$ is strictly larger than $\Delta u_{1}(i_{1}j_{1},\varnothing)$.
Hence, the total payoff in this network is strictly higher than that
in the original network $C_{1}\cup C_{2}$, which contradicts the
assumption that $C_{1}\cup C_{2}$ is strongly efficient. \end{proof}

We then show that if $(1-\delta)f(\alpha)<c$, for every non-empty
component ${\bf {g}}$, the total payoff in this component is weakly
less than that in a star component with a type $\alpha$ agent as
the center, denoted ${\bf {g}}^{*}$. Let $k_{\alpha\alpha}^{d},k_{\alpha\beta}^{d},k_{\beta\beta}^{d}$
denote the number of links between two type $\alpha$ agents, one
type $\alpha$ agent and one type $\beta$ agent, and two type $\beta$
agents respectively. Let $k_{\alpha\alpha}^{ind},k_{\alpha\beta}^{ind},k_{\beta\beta}^{ind}$
denote the number of shortest indirect paths between two agents in
the same three cases above. Let $\nu({\bf {g}})$ and $\nu({\bf {g}}^{*})$
denote the total payoff in the original component and the star component
respectively. Note that the length of any indirect path is at least
$2$, and thus we have 
\begin{align*}
\nu({\bf {g}})\leq & (k_{\alpha\alpha}^{d}+\delta k_{\alpha\alpha}^{ind})2f(\alpha)+(k_{\alpha\beta}^{d}+\delta k_{\alpha\beta}^{ind})(f(\alpha)+f(\beta))+(k_{\beta\beta}^{d}+\delta k_{\beta\beta}^{ind})2f(\beta)\\
 & -2(k_{\alpha\alpha}^{d}+k_{\alpha\beta}^{d}+k_{\beta\beta}^{d})c\\
\nu({\bf {g}}^{*})= & (n_{\alpha}-1+\delta\frac{(n_{\alpha}-1)(n_{\alpha}-2)}{2})2f(\alpha)+(n_{\beta}+\delta n_{\beta}(n_{\alpha}-1))(f(\alpha)+f(\beta))\\
 & +\delta\frac{n_{\beta}(n_{\beta}-1)}{2}2f(\beta)-2(n_{\alpha}+n_{\beta}-1)c.
\end{align*}
Note that 
\begin{align*}
k_{\alpha\alpha}^{d}+k_{\alpha\alpha}^{ind} & =\frac{n_{\alpha}(n_{\alpha}-1)}{2}\\
k_{\alpha\beta}^{d}+k_{\alpha\beta}^{ind} & =n_{\alpha}n_{\beta}\\
k_{\beta\beta}^{d}+k_{\beta\beta}^{ind} & =\frac{n_{\beta}(n_{\beta}-1)}{2}.
\end{align*}
Then we have 
\begin{align*}
\nu({\bf {g}})-\nu({\bf {g}}^{*})\leq & (1-\delta)((k_{\alpha\alpha}^{d}-(n_{\alpha}-1))2f(\alpha)+(k_{\alpha\beta}^{d}-n_{\beta})(f(\alpha)+f(\beta))+k_{\beta\beta}^{d}2f(\beta))\\
 & -2(k_{\alpha\alpha}^{d}+k_{\alpha\beta}^{d}+k_{\beta\beta}^{d}-(n_{\alpha}+n_{\beta}-1))c\\
\leq & 2(k_{\alpha\alpha}^{d}+k_{\alpha\beta}^{d}+k_{\beta\beta}^{d}-(n_{\alpha}+n_{\beta}-1))((1-\delta)f(\alpha)-c)\\
\leq & 0.
\end{align*}
Finally, note that equality is achieved if and only if (1) $k_{\alpha\beta}^{d}=k_{\beta}$,
$k_{\beta\beta}^{d}=0$, $k_{\alpha\alpha}^{d}+k_{\alpha\beta}^{d}+k_{\beta\beta}^{d}=n_{\alpha}+n_{\beta}-1$
and (2) there is no (shortest) indirect path with length greater than
two between any two type $\alpha$ agents. These two conditions are
satisfied if and only if ${\bf {g}}$ is also a star component with
a type $\alpha$ agent as the center.

It is clear that if a star component with a type $\alpha$ agent as
the center results in a positive total payoff, then adding an agent
of type $\alpha$ as the periphery will increase the total payoff;
moreover, if the marginal payoff brought about by all the type $\beta$
agents in the star component is positive, then adding an agent of
type $\beta$ as the periphery will increase the total payoff. Hence,
${\bf {g}}^{e}$ can only be one of the following three: (1) the empty
network, (2) a star encompassing only type $\alpha$ agents, and (3)
a star encompassing all agents, with a type $\alpha$ agent as the
center. The second and the third conditions in the result ensure that
the marginal payoff brought about by all the type $\beta$ agents
is positive, and that (3) has a positive total payoff. Hence, (3)
is ${\bf {g}}^{e}$.

$(f)$ It follows from the proof of $(e)$.

$(g)$ It follows from the proof of $(e)$. \end{proof}

\subsection{Strong Efficiency and Core-Stability}

The connections model also allows us to investigate the relation between
a strongly efficient network and a core-stable network. Note that
these two concepts do not imply one another. In a strongly efficient
network, there may be a subgroup of agents that can improve the payoff
of each member by ``local autonomy'': for instance, the center of
a star will be strictly better off staying a singleton if all periphery
agents are of low type. On the other hand, even though the agents
cannot improve everyone's payoff in a core-stable network, there may
be a way to improve the total payoff. An example would be to switch
the center of a star from a low type agent to a high type one.

In this section, we establish necessary and sufficient conditions
for a strongly efficient network to be core-stable. It is helpful
to first compute the largest possible one-period payoff an agent of
each type can get in any network.

\begin{prop} For $\theta\in{\alpha,\beta}$ let $V(\theta)$ denote
the maximum payoff that an agent of type $\theta$ can obtain in any
network in a single period. We have: 
\begin{itemize}
\item {a.} If $(1-\delta)f(\beta)>c$, then 
\begin{align*}
V(\alpha) & =(n_{\alpha}-1)f(\alpha)+n_{\beta}f(\beta)-(n_{\alpha}+n_{\beta}-1)c\\
V(\beta) & =n_{\alpha}f(\alpha)+(n_{\beta}-1)f(\beta)-(n_{\alpha}+n_{\beta}-1)c.
\end{align*}

\item {b.} If $(1-\delta)f(\alpha)>c>(1-\delta)f(\beta)$, then 
\begin{align*}
V(\alpha) & =(n_{\alpha}-1)f(\alpha)+\delta n_{\beta}f(\beta)-(n_{\alpha}-1)c\\
V(\beta) & =n_{\alpha}f(\alpha)+\delta(n_{\beta}-1)f(\beta)-n_{\alpha}c.
\end{align*}

\item {c.} If $\min\{f(\alpha)+\delta((n_{\alpha}-2)f(\alpha)+n_{\beta}f(\beta)),f(\alpha)+\delta((n_{\alpha}-1)f(\alpha)+(n_{\beta}-1)f(\beta))\}>c>(1-\delta)f(\alpha)$,
then 
\begin{align*}
V(\alpha) & =f(\alpha)+\delta((n_{\alpha}-2)f(\alpha)+n_{\beta}f(\beta))-c\\
V(\beta) & =f(\alpha)+\delta((n_{\alpha}-1)f(\alpha)+(n_{\beta}-1)f(\beta))-c.
\end{align*}

\item {d.} If $f(\alpha)+\delta((n_{\alpha}-1)f(\alpha)+(n_{\beta}-1)f(\beta))>c>\max\{f(\alpha)+\delta((n_{\alpha}-2)f(\alpha)+n_{\beta}f(\beta)),(1-\delta)f(\alpha)\}$,
then 
\begin{align*}
V(\alpha) & =0\\
V(\beta) & =f(\alpha)+\delta((n_{\alpha}-1)f(\alpha)+(n_{\beta}-1)f(\beta))-c.
\end{align*}

\item {e.} If $\max\{f(\alpha)+\delta((n_{\alpha}-1)f(\alpha)+(n_{\beta}-1)f(\beta)),f(\alpha)+\delta((n_{\alpha}-2)f(\alpha)+n_{\beta}f(\beta)),(1-\delta)f(\alpha)\}<c$,
then 
\begin{align*}
V(\alpha) & =0\\
V(\beta) & =0.
\end{align*}

\end{itemize}
\end{prop}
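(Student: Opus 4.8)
The plan is to find, for each parameter regime, the network that maximizes a single agent's one-period payoff, and then read off that agent's payoff. Fix an agent, say of type $\theta$, and call her the \emph{focal} agent. Since payoffs satisfy component independence, only the focal agent's own component matters, so I may assume the network is connected and the focal agent lies in it. The focal agent's payoff is $\sum_{j\ne i}\delta^{d_{ij}-1}f(\theta_j)-c\cdot|N_i(\mathbf g)|$, which depends only on (i) \emph{which} other agents are in her component, (ii) how many direct links \emph{she} personally maintains, and (iii) the distances from her to the others. Crucially, links \emph{not} incident to the focal agent impose no cost on her but can only shorten distances (weakly increasing the gross benefit term); so, holding her own link set fixed, it is optimal to add all other possible links among the remaining agents, making every non-focal agent a neighbor of one of her neighbors — i.e.\ at distance $1$ or $2$. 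Thus the only real decision is: among the other $n_\alpha+n_\beta-1$ agents, which subset $D$ does the focal agent link to directly (distance $1$), the rest being at distance $2$? Her payoff is then $\sum_{j\in D}f(\theta_j)+\delta\sum_{j\notin D}f(\theta_j)-c|D|$, plus the option $\varnothing$ giving $0$.

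From here it is a discrete optimization over which agents to put in $D$. The marginal value of moving an agent of type $\theta_j$ from distance $2$ into $D$ is $(1-\delta)f(\theta_j)-c$; this is positive iff $(1-\delta)f(\theta_j)>c$. Since $f(\alpha)>f(\beta)$, there are exactly three cases for the sign pattern: (a) $(1-\delta)f(\beta)>c$ — link directly to everyone, giving the clique-center payoffs in part (a); (b) $(1-\delta)f(\alpha)>c>(1-\delta)f(\beta)$ — link directly to all $\alpha$'s (except the focal agent herself if she is type $\alpha$) and keep all $\beta$'s at distance $2$, giving part (b); (c) $(1-\delta)f(\alpha)<c$ — no direct link is profitable \emph{by itself}, but one still has to check whether the agent can do better than $0$ by linking to exactly one agent (necessarily the most valuable available type, $\alpha$) and pulling everyone else to distance $2$. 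Here the relevant comparison is $f(\alpha)+\delta\big(\text{rest at distance }2\big)-c$ versus $0$, and the ``rest'' differs according to whether the focal agent is type $\alpha$ (then the other $\alpha$'s number $n_\alpha-2$ plus $n_\beta$ $\beta$'s) or type $\beta$ (then $n_\alpha-1$ $\alpha$'s plus $n_\beta-1$ $\beta$'s). This yields parts (c), (d), (e): the three sub-cases correspond to whether both of these single-link values are positive (c), only the $\beta$-agent's is positive (d), or neither is (e). One must double-check that linking to a single $\alpha$ dominates linking to a single $\beta$ or to two agents in this regime — the former because $f(\alpha)>f(\beta)$ and the latter because the second direct link has negative net value $(1-\delta)f(\cdot)-c<0$.

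The main obstacle is the bookkeeping in case (c): one must be careful that when the focal agent herself is of type $\alpha$, ``linking to one $\alpha$'' leaves $n_\alpha-2$ other $\alpha$'s at distance $2$, whereas a type-$\beta$ focal agent leaves $n_\alpha-1$; getting these counts right is what produces the asymmetric formulas for $V(\alpha)$ and $V(\beta)$, and it is also what determines the exact thresholds $f(\alpha)+\delta((n_\alpha-2)f(\alpha)+n_\beta f(\beta))$ versus $f(\alpha)+\delta((n_\alpha-1)f(\alpha)+(n_\beta-1)f(\beta))$ appearing in the hypotheses of (c)--(e). A secondary subtlety is justifying rigorously that distance $2$ (rather than some larger distance) is always achievable for the non-neighbors while keeping the focal agent's link set as chosen — this follows by simply adding the link from each non-neighbor to any fixed neighbor of the focal agent, which is always possible since $|D|\ge 1$ in the non-trivial cases. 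Once these points are nailed down, each part of the proposition is an immediate evaluation of the optimal $D$'s payoff.
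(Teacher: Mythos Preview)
Your proposal is correct and follows essentially the same approach as the paper. The paper parameterizes by the number $m$ of direct links, writes down $\hat V(\theta,m)$ assuming the $m$ links go first to $\alpha$-agents and everyone else is at distance~$2$, checks achievability via a star construction, and then optimizes over $m$; your argument makes the ``everyone else can be placed at distance~$2$ for free'' step explicit and then optimizes over the set $D$ of direct neighbors --- these are two phrasings of the same reduction and yield the identical case analysis.
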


\begin{proof} $(a)$ Let $\hat{V}(\theta,m)$ be the largest possible
payoff of an agent of type $\theta$ with $m$ links. We have

\begin{center}
$\hat{V}(\alpha,m)=\left\{ \begin{array}{c}
0,\text{ if \ensuremath{m=0}}\\
mf(\alpha)+\delta((n_{\alpha}-1-m)f(\alpha)+n_{\beta}f(\beta))-mc,\text{ if \ensuremath{0<m\leq n_{\alpha}-1}}\\
(n_{\alpha}-1)f(\alpha)+(m-(n_{\alpha}-1))f(\beta)+\delta(n_{\alpha}+n_{\beta}-1-m)f(\beta)-mc,\text{ otherwise}
\end{array}\right.$ $\hat{V}(\beta,m)=\left\{ \begin{array}{c}
0,\text{ if \ensuremath{m=0}}\\
mf(\alpha)+\delta((n_{\alpha}-m)f(\alpha)+n_{\beta}f(\beta))-mc,\text{ if \ensuremath{m\leq n_{\alpha}}}\\
n_{\alpha}f(\alpha)+(m-n_{\alpha})f(\beta)+\delta(n_{\alpha}+n_{\beta}-1-m)f(\beta)-mc,\text{ otherwise.}
\end{array}\right.$ 
\par\end{center}

It is clear that these largest possible payoffs are achievable (for
instance, let the agent be a periphery agent in a star with a type
$\alpha$ agent as the center, and form the rest of the $m$ links
first with type $\alpha$ agents, then with type $\beta$ agents if
she is already linked with every type $\alpha$ agent.). If $(1-\delta)f(\beta)>c$,
for an agent of either type her payoff is maximized when she makes
every possible link, hence $V(\alpha)$ and $V(\beta)$ are as shown
in the result.

$(b)$ If $(1-\delta)f(\alpha)>c>(1-\delta)f(\beta)$, for an agent
of either type her payoff is maximized when she links with every type
$\alpha$ agent but with no type $\beta$ agent, hence $V(\alpha)$
and $V(\beta)$ are as shown in the result.

$(c)$ If $c>(1-\delta)f(\alpha)$, given that an agent is connected
(not a singleton), her largest payoff is higher when she has fewer
links. Hence, her payoff is $\hat{V}(\theta,1)$ if $\hat{V}(\theta,1)\geq0$
and $0$ otherwise. Hence $V(\alpha)$ and $V(\beta)$ are as shown
in the result.

$(d)$ It follows from the proof of $(c)$.

$(e)$ It follows from the proof of $(c)$. \end{proof}

We can see from the proof above that the largest payoff an agent obtains
from a network is closely related to the network topology. Hence,
if a network offers the largest possible payoff to most of the agents,
it is very likely to be core-stable since those agents' payoffs cannot
be improved further. The final criterion of stability then rests on
whether the few agents that do not get the highest possible payoff
can form a beneficial coalition. Using this argument, we inspect the
strongly efficient network in every possible type vector and present
the result below.

\begin{prop} Consider cases ($a$) -- ($g$) in Theorem 2. We have: 
\begin{itemize}
\item {a, d, g.} ${\bf {g}}^{e}$ is core-stable. 
\item {b.} ${\bf {g}}^{e}$ is core-stable if and only if $f(\beta)\geq c$. 
\item {c.} ${\bf {g}}^{e}$ is core-stable if and only if a network that
links one type $\alpha$ agent to all the type $\beta$ agents yields
the $\alpha$ agent a non-negative payoff. 
\item {e.} ${\bf {g}}^{e}$ is core-stable if and only if a network that
has a type $\alpha$ agent at the center yields the $\alpha$ agent
a non-negative payoff. 
\item {f.} ${\bf {g}}^{e}$ is core-stable if and only if $f(\alpha)\geq c$. 
\end{itemize}
\end{prop}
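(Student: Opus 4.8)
The plan is to prove the proposition one case at a time, following the seven cases $(a)$--$(g)$ of Theorem~2, and in each case to reduce core-stability of ${\bf g}^e$ to a finite check over a short list of candidate blocking coalitions. Two inputs drive the reduction. The first is the preceding proposition, which supplies the type-maximal one-period payoffs $V(\alpha)$ and $V(\beta)$. The second is strong efficiency of ${\bf g}^e$ itself, together with component independence: if a coalition $I'$ forms its own component structure in ${\bf g}^e$ (in particular, the grand coalition, or any set of agents who are singletons in ${\bf g}^e$), then grafting a Pareto-improving sub-network for $I'$ onto the untouched remainder of ${\bf g}^e$ would strictly increase the total payoff, which is impossible; the same argument, applied to the type-$\alpha$ subpopulation, shows that in cases $(d),(f),(g)$ -- where the type-$\alpha$ agents form their own components in ${\bf g}^e$ and that structure is strongly efficient on the subpopulation by the case hypotheses -- no coalition consisting only of type-$\alpha$ agents can block.

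Next I would write down explicitly, in each case, the one-period payoff ${\bf g}^e$ assigns to each agent, grouping agents by type and, in cases $(c)$ and $(f)$, separating the single type-$\alpha$ ``hub'' to which the type-$\beta$ agents attach from the other type-$\alpha$ agents. Comparing these payoffs with $V(\alpha)$ and $V(\beta)$ identifies the set $B$ of ``constrained'' agents -- those strictly below their maximum in ${\bf g}^e$. The key observation is: if $(I',{\bf g}')$ blocks ${\bf g}^e$, the agent made strictly better off must lie in $B$, and every member of $I'$ lying outside $B$ must still obtain exactly $V(\theta_i)$ under ${\bf g}'$. Reading off the payoff formulas, realizing a \emph{positive} value of $V(\theta)$ forces an agent to be within distance two of the entire population; so whenever the non-hub type-$\alpha$ agents (case $(c)$), or all type-$\beta$ agents (case $(b)$), or every non-center agent (case $(e)$) already earn their maximum in ${\bf g}^e$, a proper blocking coalition cannot contain any of them, and the list of candidates collapses.

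The cases then break down roughly as follows. In $(a)$ every agent earns $V(\theta_i)$, so $B=\varnothing$ and nothing blocks; in $(d)$ and $(g)$, combining ``no type-$\alpha$ coalition blocks'' with the grafting argument for coalitions of singletons reduces matters to coalitions mixing the two types, which are excluded by extracting $c>f(\beta)$ in $(d)$ (resp.\ the emptiness inequalities in $(g)$) from the Theorem~2 hypotheses and then observing that any agent who pays for a link to a type-$\beta$ agent is thereby made strictly worse off. In $(b)$ the rigidity confines a blocking coalition to type-$\alpha$ agents, whose best option is the $\alpha$-clique, and this beats ${\bf g}^e$ precisely when $f(\beta)<c$; in $(e)$ it confines the coalition to the star's lone center, whose only alternative is payoff $0$, which beats ${\bf g}^e$ precisely when the center's payoff there is negative -- giving the stated ``if and only if''s. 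In $(c)$ the reduction leaves only coalitions of the form ``hub together with a subset of the type-$\beta$ agents'' (or ``a subset of the type-$\beta$ agents''), and I would show that in any sub-network on such a coalition every participating type-$\beta$ agent strictly loses the distance-two access to the $\alpha$-clique it had in ${\bf g}^e$, while whichever agent plays the local hub ends up strictly worse off than in ${\bf g}^e$ -- unless the payoff of an agent linked to all type-$\beta$ agents and nothing else is already nonnegative, which is exactly the stated criterion. In $(f)$ the analysis parallels $(d)$, with the $\alpha$-star in place of the $\alpha$-clique, so the remaining task is to check that $f(\alpha)<c$ is precisely when the center of that star prefers to become a singleton.

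I expect the main obstacle to be the sufficiency direction in the ``asymmetric'' cases $(c)$, $(d)$ and $(f)$, where not every agent sits at $V(\theta_i)$, so the clean rigidity argument does not finish the job and one must verify directly that \emph{no} sub-network on the surviving candidate coalitions is a Pareto improvement. The recurring difficulty is the bookkeeping for coalitions that join the hub to a proper subset of the type-$\beta$ agents (case $(c)$) or that mix the two types (cases $(d)$ and $(f)$): one has to track direct versus distance-two benefits, use component independence, and invoke the precise Theorem~2 inequalities to conclude that whichever agent bears the concentrated link cost in the deviating sub-network ends up strictly below the payoff it was already getting inside ${\bf g}^e$. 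Making this ``someone must bear the cost, and therefore loses'' argument uniform across all coalition sizes is the delicate part.
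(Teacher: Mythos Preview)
Your overall architecture matches the paper's: both proofs proceed case by case, use the preceding $V(\theta)$ proposition to identify agents already at their maximum (your ``rigidity'' observation), and invoke strong efficiency of ${\bf g}^e$ to dispose of the grand coalition. For cases $(a)$, $(b)$, $(e)$ and $(g)$ your sketch and the paper's argument coincide almost exactly.

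The substantive difference is in how the mixed--coalition obstruction is discharged in cases $(d)$ and $(f)$. You propose an individual--payoff argument (``someone must bear the cost of a link to a $\beta$ agent and therefore loses''), and you correctly flag the attendant bookkeeping as the delicate part. The paper bypasses this entirely with a total--payoff trick: if $(I',{\bf g}')$ blocks, then the total payoff of $I'$ under ${\bf g}'$ strictly exceeds its total under ${\bf g}^e$; now apply Theorem~2 \emph{to the subpopulation $I'$} (the case--$(d)$ or case--$(f)$ hypotheses are monotone in $n_\alpha,n_\beta$, so the same case applies to $I'$), which says the strongly efficient network on $I'$ is the $\alpha$--clique (resp.\ $\alpha$--star) on $I'$. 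Hence the total of ${\bf g}'$ is at most the total of that sub--clique (sub--star), which in turn is at most the total the members of $I'$ already receive inside the full $\alpha$--clique (resp.\ $\alpha$--star) of ${\bf g}^e$. This yields an immediate contradiction and avoids tracking who pays for which link. Your approach can be pushed through, but this reorganization step is what makes the paper's proof short.

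For case $(c)$ the paper's sufficiency argument is also a bit simpler than yours. Once the rigidity step confines a proper blocking coalition to $\{i\}\cup(\text{some }\beta\text{'s})$, it is enough to observe that \emph{every} participating $\beta$ agent strictly loses relative to ${\bf g}^e$: in ${\bf g}^e$ each $\beta$ already attains $\hat V(\beta,1)$ via distance--two access to the entire population, and any network on a proper subset of $\{i\}\cup I_\beta$ cannot replicate that. So the coalition fails at the $\beta$ agents, and there is no need to analyze the hub's payoff separately for the ``if'' direction; the hub's payoff enters only in the ``only if'' direction, where $\{i\}$ blocking by becoming a singleton is the obstruction.
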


\begin{proof} $(a)(d)(g)$ The cases $(a)$ and $(g)$ are clear.
For $(d)$, suppose that ${\bf {g}}^{e}$ is not core-stable, it then
follows that any blocking group $I'$ (with network ${\bf {g}}'$)
must contain at least one type $\alpha$ agent, but not all type $\alpha$
agents. Consider the network ${\bf {g}}'$ that blocks ${\bf {g}}^{e}$.
Since every agent in $I'$ has a weakly higher payoff in ${\bf {g}}'$
than in ${\bf {g}}^{e}$ and some agent in $I'$ has a strictly higher
payoff in ${\bf {g}}'$ than in ${\bf {g}}^{e}$, the total payoff
of agents in $I'$ must be strictly higher than that in ${\bf {g}}^{e}$.
By Theorem 2, we can re-organize ${\bf {g}}'$ into a clique with
only type $\alpha$ agents to yield an even higher total payoff. However,
if a clique with only type $\alpha$ agents has a positive total payoff,
then it is impossible for a proper subset of these agents to form
a clique with a higher total payoff than their total payoff in the
original clique. Hence we have a contradiction.

$(b)$ If $f(\beta)<c$, then ${\bf {g}}^{e}$ is not core-stable
because a clique formed by all type $\alpha$ agents blocks ${\bf {g}}^{e}$.
Suppose that $f(\beta)\geq c$ and that ${\bf {g}}^{e}$ is not core-stable,
then any blocking group $I'$ either only contains type $\alpha$
agents, or contains all the agents because each type $\beta$ agent
in ${\bf {g}}^{e}$ gets payoff $V(\beta)$, and getting $V(\beta)$
requires connection to every other agent. Both cases contradict the
fact that ${\bf {g}}^{e}$ is strongly efficient.

$(c)$ Let $i$ denote the type $\alpha$ agent linking with all the
type $\beta$ agents. If $i$ has a negative payoff, then ${\bf {g}}^{e}$
is not core-stable because $\{i\}$ blocks ${\bf {g}}^{e}$. Suppose
that $i$ has a non-negative payoff and that ${\bf {g}}^{e}$ is not
core-stable, then any blocking group $I'$ either only contains no
type $\alpha$ agent other than $i$ and at least one type $\beta$
agent, or contains all the agents because each type $\alpha$ agent
other than $i$ in ${\bf {g}}^{e}$ gets payoff $V(\alpha)$, and
getting $V(\alpha)$ requires connection to every other agent. The
second case contradicts the fact that ${\bf {g}}^{e}$ is strongly
efficient. In the first case, note that the largest payoff of any
type $\beta$ agent in $I'$ is strictly less than $\hat{V}(\beta,1)$
(since $c>(1-\delta)f(\beta)$), which is the payoff of every type
$\beta$ agent in ${\bf {g}}^{e}$. Hence we again have a contradiction.

$(e)$ Let $j$ denote the type $\alpha$ agent at the center. If
$j$ has a negative payoff, then ${\bf {g}}^{e}$ is not core-stable
because $\{j\}$ blocks ${\bf {g}}^{e}$. Suppose that $j$ has a
non-negative payoff and that ${\bf {g}}^{e}$ is not core-stable,
then any blocking group $I'$ either only contains $j$, or contains
all the agents because each agent other than $j$ in ${\bf {g}}^{e}$
gets payoff $V(\alpha)$ (or $V(\beta)$), and getting $V(\alpha)$
(or $V(\beta)$) requires connection to every other agent. The first
case contradicts the assumptions that $I'$ blocks ${\bf {g}}^{e}$
and $j$ has a non-negative payoff in ${\bf {g}}^{e}$, and the second
case contradicts the fact that ${\bf {g}}^{e}$ is strongly efficient.

$(f)$ If $f(\alpha)<c$, then ${\bf {g}}^{e}$ is not core-stable
because the center agent in ${\bf {g}}^{e}$ blocks ${\bf {g}}^{e}$.
Suppose that $f(\alpha)\geq c$ and that ${\bf {g}}^{e}$ is not core-stable,
it then follows that any blocking group $I'$ (with network ${\bf {g}}'$)
must contain at least one type $\alpha$ agent, but not all type $\alpha$
agents. Consider the network ${\bf {g}}'$ that blocks ${\bf {g}}^{e}$.
Since every agent in $I'$ has a weakly higher payoff in ${\bf {g}}'$
than in ${\bf {g}}^{e}$ and some agent in $I'$ has a strictly higher
payoff in ${\bf {g}}'$ than in ${\bf {g}}^{e}$, the total payoff
of agents in $I'$ must be strictly higher than that in ${\bf {g}}^{e}$.
By Proposition 2, we can re-organize ${\bf {g}}'$ into a star with
only type $\alpha$ agents to yield an even higher total payoff. However,
if $f(\alpha)\geq c$, it is impossible for a proper subset of the
type $\alpha$ agents to form a star with a higher total payoff than
their total payoff in ${\bf {g}}^{e}$. Hence we have a contradiction.
\end{proof}

The key to whether a strongly efficient network is core-stable is
its ``weakest link'': the agent maintaining the most links but enjoying
the smallest net payoff. If this particular agent gets a negative
payoff, she would prefer to simply sever all her links and remain
a singleton.

\subsection{Effect of Spatial Discount}

A crucial factor in the connections model is the spatial discount
factor $\delta$. It determines the rate of payoff depreciation as
two agents become further apart in connection, and hence influences
an agent's incentives of directly linking to another already connected
agent. As a result, a change in $\delta$ affects \textit{every} connected
agent's payoff and has a non-negligible impact on the set of sustainable
networks in equilibrium. The importance of the spatial discount in
a static environment and a dynamic one with myopia has been noted
in Jackson and Wolinsky\cite{JW} and Song and van der Schaar\cite{SV}.
In the dynamics with foresight, the role of $\delta$ becomes even
more important than the previous cases because it enters the payoff
an agent obtains for every period. As it turns out, the set of sustainable
networks changes monotonically with the value of $\delta$.

Given $\bar{\theta}$, let $G(\delta)$ denote the set of networks
${\bf {g}}$ for which there is a cutoff $\bar{\gamma}\in(0,1)$ such
that if $\gamma\in[\bar{\gamma},1)$, then there exists an equilibrium
in which the formation process converges strongly to ${\bf {g}}$.

\begin{prop} If $\delta_{1}>\delta_{2}$, then $G(\delta_{1})\supset G(\delta_{2})$.
\end{prop}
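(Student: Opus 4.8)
The plan is to reduce the proposition to a single structural fact --- monotonicity of the connections payoff in the spatial discount factor --- and then feed that into Theorem 1 and Proposition 2.

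First I would record that for a \emph{fixed} network ${\bf g}$ and fixed agent $i$,
\[
u_i(\bar\theta,{\bf g})=\sum_{j:\,ij\in{\bf g}}f(\theta_j)+\sum_{j:\,i\overset{{\bf g}}{\leftrightarrow}j,\ d_{ij}\geq 2}\delta^{d_{ij}-1}f(\theta_j)-|N_i({\bf g})|\,c,
\]
so the direct-benefit sum and the cost term carry no $\delta$, while every summand of the middle sum has the form $\delta^{d_{ij}-1}f(\theta_j)$ with exponent $d_{ij}-1\geq 1$ and $f(\theta_j)>0$, hence is strictly increasing on $(0,1)$. Thus $\delta\mapsto u_i(\bar\theta,{\bf g})$ is nondecreasing for every $i$, and strictly increasing for any $i$ that has an indirect connection in ${\bf g}$.

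Then, for the inclusion, I would take ${\bf g}\in G(\delta_2)$. By definition there is, at spatial discount $\delta_2$, an equilibrium whose formation process converges strongly --- hence weakly, after every finite history --- to ${\bf g}$, so Proposition 2 gives $u_i(\bar\theta,{\bf g})\geq 0$ for every $i$ at $\delta_2$. By the monotonicity just established, the same payoffs evaluated at $\delta_1$ are at least as large, so still nonnegative, and strictly positive for every non-singleton $i$ having an indirect connection. Setting aside the knife-edge case of a non-singleton with only direct links whose payoff is exactly $0$ --- the boundary situation the convergence results deliberately leave open, and in which membership in $G(\delta)$ is in any event governed by Theorem 1's strict-positivity hypothesis --- we conclude $u_i(\bar\theta,{\bf g})>0$ at $\delta_1$ for all non-singleton $i$. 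Since complete monitoring is always as informative as $y_{{\bf g},K}$, Theorem 1 then yields, for all sufficiently patient agents, an equilibrium converging strongly to ${\bf g}$ at $\delta_1$; that is, ${\bf g}\in G(\delta_1)$.

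Finally, to see that the inclusion is strict, I would exhibit a network in $G(\delta_1)\setminus G(\delta_2)$. Fix $\delta^{*}\in(\delta_2,\delta_1)$ and take three agents on a path $1$--$2$--$3$, with agents $1$ and $3$ of the high type $\alpha$ and agent $2$ of the low type $\beta$, so that $u_1=u_3=f(\beta)+\delta f(\alpha)-c$ and $u_2=2f(\alpha)-2c$. Choosing values with $f(\beta)<c<f(\alpha)$ and $c-f(\beta)=\delta^{*}f(\alpha)$ (for instance $f(\alpha)=1$, $f(\beta)=\varepsilon$ small, $c=\varepsilon+\delta^{*}$) makes $u_2>0$ throughout, while $u_1=u_3$ is negative at $\delta_2$ and strictly positive at $\delta_1$; Proposition 2 then forbids this path from $G(\delta_2)$, while Theorem 1 places it in $G(\delta_1)$. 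The whole substance lies in the monotonicity step and its combination with Theorem 1 and Proposition 2; the term-by-term bookkeeping and the example are routine, and the only point genuinely requiring care is the $u_i=0$ knife-edge, which is an indeterminacy inherited from the convergence theorems rather than a new difficulty.
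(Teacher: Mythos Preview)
Your core argument --- monotonicity of $u_i$ in $\delta$, combined with Proposition 2 and Theorem 1 --- is exactly the paper's proof. The paper is in fact more cavalier than you about the knife-edge: it simply asserts that nonnegative at $\delta_2$ becomes strictly positive at $\delta_1$, without flagging the case of a non-singleton with only direct connections; your explicit acknowledgment of that indeterminacy is an improvement in precision.

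There is one genuine issue: you read $\supset$ as strict inclusion and devote a paragraph to constructing a separating example. In this paper $\supset$ is used for $\supseteq$; the paper's own proof establishes only the inclusion and makes no claim of strictness. More importantly, strict inclusion \emph{cannot} hold in general, because the proposition is stated for a \emph{fixed} type vector $\bar\theta$ (see the sentence defining $G(\delta)$ immediately preceding the proposition). Your three-agent path presupposes a particular $\bar\theta$ tailored to $\delta^{*}$, but for many given environments --- for instance $N=2$, where no payoff depends on $\delta$ at all --- one has $G(\delta_1)=G(\delta_2)$. Drop that paragraph; the rest is correct and matches the paper.
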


\begin{proof} For every ${\bf {g}}\in G(\delta_{2})$, it follows
from Theorem $1(a)$ that every connected agent in ${\bf {g}}$ gets
a non-negative payoff when $\delta=\delta_{2}$. Since $\delta_{1}>\delta_{2}$,
every connected agent in ${\bf {g}}$ gets a positive payoff when
$\delta=\delta_{1}$. Then by Theorem $1$ we know that ${\bf {g}}\in G(\delta_{1})$.
\end{proof}

Although the set of networks that are sustainable in equilibrium is
monotone in $\delta$, the strongly efficient networks that are sustainable
in equilibrium is not monotone in $\delta$. Consider the following
example: $f(\alpha)=3$, $f(\beta)=0.5$, $c=2$, $n_{\alpha}=1$
and $n_{\beta}=3$. From the previous analysis, we know that the strongly
efficient network is either a star with the type $\alpha$ agent as
the center, or empty. On one hand, when $\delta$ is low (for instance
$\delta=0.1$) the efficient network is empty and clearly can be supported
in equilibrium. On the other hand, when $\delta$ is high (for instance
$\delta=0.8$) the efficient network is a star. However, it is clear
that a star network cannot be supported in equilibrium since the center
agent has a negative payoff. Even though a higher $\delta$ sustains
a larger set of networks, the strongly efficient network changes with
$\delta$ as well.

The spatial discount also partially determines the required patience
level of agents to sustain a network in equilibrium. When $\delta$
changes in a way that forming/maintaining links becomes more beneficial
to oneself in every period, every connected agent in a network has
higher incentive, and at the same time needs less patience, to sustain
that network in the long run. We demonstrate this by using our constructed
equilibrium $\hat{s}_{c}$. Given $\bar{\theta}$, $K$ and ${\bf {g}}$,
let $\gamma(\delta)$ denote the smallest $\gamma$ such that $\hat{s}_{c}$
is an equilibrium (if such $\gamma$ exists) under spatial discount
$\delta$.

\begin{prop} There exists $\hat{\delta}\leq1$ such that for every
$\delta_{1}$, $\delta_{2}\in[0,\hat{\delta}]$ such that $\delta_{1}>\delta_{2}$
and both $\gamma(\delta_{1})$ and $\gamma(\delta_{2})$ exist, $\gamma(\delta_{1})<\gamma(\delta_{2})$.
\end{prop}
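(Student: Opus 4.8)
The plan is a comparative-statics argument on the finite system of incentive inequalities that characterise when $\hat{s}_{c}$ is an equilibrium under spatial discount $\delta$. By the one-step deviation principle, and since (as noted in the proof of Theorem 1) no agent can affect the outcome in the punishment phase, $\hat{s}_{c}$ is an equilibrium if and only if no agent $i$ profits from a one-shot deviation in a cooperation-phase state. At such a state the current network ${\bf h}$ is a subnetwork of ${\bf g}$, and the only feasible deviation of $i$ is to sever some $D\subseteq N_i({\bf h})$ (she cannot add a link, as the others play $\hat{s}_{c}$); this triggers $K$ punishment periods at payoff $0$ and then a restart of cooperation from $\varnothing$. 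Each such state-and-deviation yields one inequality $\Phi(\gamma,\delta)\ge 0$, where $-\Phi$ is the net gain from the deviation: the one-period payoff change $u_i(\bar{\theta},{\bf h}_{-D};\delta)-u_i(\bar{\theta},{\bf h}';\delta)$ (with ${\bf h}'$ the on-path next network), plus $\gamma^{K+1}$ times the continuation value of cooperation restarted from $\varnothing$, minus $\gamma$ times the on-path continuation value. There are finitely many such $\Phi$, each continuous in $(\gamma,\delta)$ on $(0,1)\times[0,1]$ and rational in $\gamma$; and, as Lemma 1 and the proof of Theorem 1 show, for $\gamma$ near $1$ each $\Phi(\cdot,\delta)$ is strictly increasing in $\gamma$, because the cooperation continuation grows without bound while the one-period bracket stays bounded by $\bar{v}$. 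Hence $\gamma(\delta)$, when it exists, is the largest $\gamma$ at which the minimum of these finitely many $\Phi$ vanishes, and at $\gamma(\delta)$ at least one constraint binds.

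Next I would bring in two monotonicities in $\delta$. First, every connections payoff $u_i(\bar{\theta},{\bf k};\delta)=\sum_{j:i\overset{{\bf k}}{\leftrightarrow}j}\delta^{d_{ij}({\bf k})-1}f(\theta_j)-|N_i({\bf k})|c$ is a polynomial in $\delta$ with nonnegative coefficients minus a constant, hence non-decreasing on $[0,1]$ and strictly increasing whenever $i$ has an indirect connection in ${\bf k}$; since every cooperation-phase continuation value is a $\gamma$-discounted average of such payoffs over networks ${\bf k}\subseteq{\bf g}$, all continuation values are non-decreasing in $\delta$, strictly for agents who acquire an indirect connection. Thus the ``reward/threat'' part of every $\Phi$ is non-decreasing in $\delta$. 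Second --- and this is where $\hat{\delta}$ is chosen --- I would take $\hat{\delta}\le 1$ small enough that on $[0,\hat{\delta}]$ the marginal value to each agent of maintaining any of her links, $u_i(\bar{\theta},{\bf h}';\delta)-u_i(\bar{\theta},{\bf h}_{-D};\delta)$, is non-decreasing in $\delta$ (equivalently: on this range, severing never becomes more tempting as $\delta$ rises), so that the one-period bracket in every $\Phi$ is non-increasing in $\delta$. Together these make every $\Phi(\gamma,\cdot)$ non-decreasing on $[0,\hat{\delta}]$, with the binding constraint strictly increasing in $\delta$ at any point where $\gamma(\delta)$ exists (this last fact being the substantive content of the choice of $\hat{\delta}$). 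If instead $\hat{s}_{c}$ is an equilibrium for all $\gamma$ at $\delta=0$, then by continuity of the finitely many $\Phi$ the same holds on an interval $[0,\hat{\delta}]$, so $\gamma(\delta)$ fails to exist there and the claim is vacuous --- which is precisely why the hypothesis that $\gamma(\delta_1)$ and $\gamma(\delta_2)$ both exist cannot be dropped.

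The conclusion is a short continuity step. Fix $\delta_1>\delta_2$ in $[0,\hat{\delta}]$ with both cutoffs defined. At $\gamma=\gamma(\delta_2)$ every $\Phi(\gamma(\delta_2),\delta_2)\ge 0$, with the binding one equal to $0$; raising $\delta$ to $\delta_1$ leaves every $\Phi\ge 0$ and makes the previously binding one strictly positive, so by continuity and the strict monotonicity of each $\Phi$ in $\gamma$ near $1$ one can lower $\gamma$ slightly below $\gamma(\delta_2)$ and still satisfy every $\Phi(\cdot,\delta_1)\ge 0$; hence $\hat{s}_{c}$ is still an equilibrium there and $\gamma(\delta_1)<\gamma(\delta_2)$.

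The main obstacle is the choice of $\hat{\delta}$ in the second step: showing there is a genuine neighbourhood of $0$ on which no severing deviation's one-period payoff rises with $\delta$ faster than the matching punishment threat does. The delicate case is a link $ij$ lying in a triangle of ${\bf g}$: severing it leaves $i$ and $j$ at distance $2$ and so costs $i$ only $(1-\delta)f(\theta_j)$, which shrinks as $\delta$ grows, so the temptation to sever can move the ``wrong'' way; one must check that near $\delta=0$ this is dominated, for the binding agent, by the strict increase of her continuation value, and it is exactly this trade-off that forces the restriction to $[0,\hat{\delta}]$ rather than all of $[0,1]$.
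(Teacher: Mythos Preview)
Your approach is essentially the paper's, just packaged more abstractly. Both arguments rest on the same idea: on some interval $[0,\hat\delta]$ the incentive constraints defining $\hat{s}_c$ as an equilibrium slacken monotonically in $\delta$ (while costs stay fixed), so the threshold $\gamma(\delta)$ moves down.

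The paper carries this out more concretely than you do. It observes that the only relevant deviation is to sever or refuse a link in ${\bf g}$ during cooperation, writes the resulting one-period loss in benefit explicitly as
\[
\sum_{l=1}^{L}\bigl(\delta^{a_l}-\delta^{b_l}\bigr)f(\theta_{j_l}),\qquad a_l<b_l,
\]
differentiates term by term to get $\delta^{a_l-1}(a_l-b_l\delta^{b_l-a_l})$, and takes $\hat\delta$ to be the largest value below which every such derivative is positive. Since every cooperation-phase payoff (current and continuation) is a $\gamma$-weighted sum of terms of this shape, the entire net benefit of conforming is increasing in $\delta$ on $[0,\hat\delta]$, and the conclusion follows directly: if $\hat{s}_c$ is an equilibrium at $(\delta_2,\gamma)$ it is one at $(\delta_1,\gamma)$ for $\delta_1>\delta_2$, hence $\gamma(\delta_1)<\gamma(\delta_2)$.

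Your $\Phi$-system, two monotonicities, and continuity step are a faithful repackaging of this, but where you \emph{assert} ``take $\hat\delta$ small enough that the marginal value of maintaining any link is non-decreasing in $\delta$'', the paper supplies the calculus that actually pins $\hat\delta$ down. The ``triangle obstacle'' you flag is exactly the $a_l=0$, $b_l=1$ term in the paper's sum; you are right that it moves the wrong way on its own, and the paper's argument handles it by treating the full benefit expression (this term together with the increasing continuation terms) as a single object rather than isolating the one-period bracket as you do.
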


\begin{proof} Note that the only incentive problem faced by an agent
is whether to sever or refuse to form a link in ${\bf {g}}$ during
the cooperation phase. Doing so saves the agent the cost $c$, and
the agent's loss in benefit has the form of 
\begin{align*}
\sum_{l=1}^{L}(\delta^{a_{l}}-\delta^{b_{l}})f(\theta_{j_{l}})
\end{align*}
where $a_{l}\in\mathbb{N}$, $b_{l}\in\mathbb{N}\cup\{\infty\}$ satisfy
$a_{l}<b_{l}$. When $b_{l}=\infty$, clearly $\delta^{a_{l}}-\delta^{b_{l}}$
is increasing in $\delta$. When $b_{l}\in\mathbb{N}$, we have 
\begin{align*}
\frac{d(\delta^{a_{l}}-\delta^{b_{l}})}{d\delta}=a_{l}\delta^{a_{l}-1}-b_{l}\delta^{b_{l}-1}=\delta^{a_{l}-1}(a_{l}-b_{l}\delta^{b_{l}-a_{l}})
\end{align*}
This derivative is positive when $\delta$ is sufficiently small.
In other words, the benefit from an existing link is increasing in
$\delta$ when $\delta$ is sufficiently small. Let $\hat{\delta}\leq1$
be the largest $\delta$ such that every $\delta\in[0,\hat{\delta}]$
satisfies this property.

Consider $\delta_{1}$, $\delta_{2}\in[0,\hat{\delta}]$ such that
$\delta_{1}>\delta_{2}$ and both $\gamma(\delta_{1})$ and $\gamma(\delta_{2})$
exist, and $\gamma\geq\gamma(\delta_{2})$. Note that by the above
property, in any period during the cooperation phase, the benefit
from following the equilibrium strategy in the current period and
in every subsequent period is increasing in $\delta$, and the cost
stays the same. Hence, if $\hat{s}_{c}^{y_{c}}$ is an equilibrium
given $\delta_{2}$ and $\gamma$, it must also be an equilibrium
given $\delta_{1}$ and $\gamma$. We can then conclude that $\gamma(\delta_{1})<\gamma(\delta_{2})$.
\end{proof}

\subsection{Generalized Connections Model}

In this section, we generalize the above two-type connections model
to a much richer multi-type environment. Consider a type set $\Theta$
with finitely many types, and let the payoff of agent $i$ from connecting
to agent $j$ be $f(\theta_{j})$ before spatial discount. Our results
on the unique efficient network and stability can be easily extended
to this model. The unique efficient network ${\bf {g}}^{e}$ has a
general \textit{core-periphery} structure, characterized by a partition
of agents $\{I_{1},I_{2},I_{3},I_{4}\}$ induced by a triplet of types
$\{\theta^{1},\theta^{2},\theta^{3}\}$ such that $f(\theta^{1})\geq f(\theta^{2})\geq f(\theta^{3})$
: 
\begin{itemize}
\item {a.} \textbf{Core} $I_{1}=\{i:f(\theta_{i})\geq f(\theta^{1})\}$:
the maximal subset of inter-linked agents. 
\item {b.} \textbf{Periphery I} $I_{2}=\{i:f(\theta^{2})\leq f(\theta_{i})<f(\theta^{1})\}$:
the maximal subset of agents that do not belong to $I_{1}$ but link
to every agent of some type(s) in $I_{1}$. 
\item {c.} \textbf{Periphery II} $I_{3}=\{i:f(\theta^{3})\leq f(\theta_{i})<f(\theta^{2})\}$:
the maximal subset of agents that do not belong to $I_{1}\cup I_{2}$
but link to the same agent in $I_{1}$. 
\item {d.} \textbf{Singleton} $I_{4}=\{i:f(\theta_{i})<f(\theta^{3})\}$:
the maximal subset of agents that do not link to any other agent. 
\end{itemize}
In the unique efficient network all agents in the core are linked
to each other, all agents in periphery I are linked to all agents
of at least one type in the core, all agents in periphery II are linked
and only linked to the same agent in the core, and agents that are
singletons are unlinked. Figure 2 below illustrates a sample core-periphery
network, with one type in each category.

\begin{figure}[h]
\centering \includegraphics[width=5in]{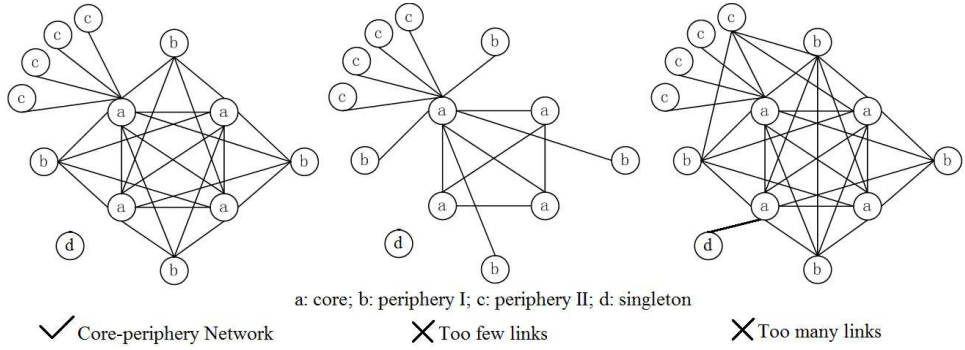}
\caption{Sample core-periphery network}
\end{figure}

The type cutoffs, $\{\theta^{1},\theta^{2},\theta^{3}\}$, are determined
in a similar way to the two-type model, only with more tedious calculations
on an agent's maximum possible contribution to the total payoff of
the group. A rough intuition for this result is that an agent of a
higher type takes more responsibility, in the sense that she should
form more links to create more value for the greater good. The overall
distinguishing features of the efficient network, especially when
the number of agents gets large, include a small diameter, a large
ratio between number of links and number of agents, and a large ratio
of number of ``triangles'' (connected triples of agents) and number
of agents. We show in the next section that in realistic situations
where the connections model applies, the level of coordination among
individuals is significantly higher than predicted by previous theories,
and that our model with foresight can account for a considerable proportion
of cooperation behavior in these endogenously formed networks.

Next, we can derive a necessary and sufficient condition for the efficient
network to be core-stable: \textit{an efficient network is core-stable
if and only if the agent(s) of the highest type has a non-negative
payoff}. The argument underlying the ``if'' part of this result
(the ``only if'' part is clear) is similar to the proof of Proposition
5. Suppose that there is a blocking group $I'$ and a corresponding
network ${\bf {g}}'$ on $I'$ that yields a weakly higher payoff
for every agent in $I'$ and a strictly higher payoff for at least
one agent in $I'$. We can always re-organize ${\bf {g}}'$ according
to Theorem 2 to obtain a weakly higher total payoff for the group
$I'$; then the agent with the highest payoff in the new network must
be at least as better-off as she is in ${\bf {g}}^{e}$. This agent
cannot belong to $I_{2}$ or $I_{3}$ since the agents in these two
categories in ${\bf {g}}^{e}$ have already enjoyed the unique highest
possible payoff that can only be provided by ${\bf {g}}^{e}$; however,
if this agent belongs to $I_{1}$, then it must be the case that some
other agent in $I_{2}$ gets a lower payoff than in ${\bf {g}}^{e}$,
a contradiction. An important message conveyed by this result is that
for efficiency to be achieved in equilibrium and to prevent coalitional
deviation, it is sufficient to focus on the agents of the highest
type, ensuring that their cost of maintaining links are covered by
the benefit from connection.

\subsection{Empirical Comparisons}

We have provided a full characterization of the strongly efficient
network in a standard connections model with\textit{ heterogeneous}
agents, in which we find that such networks generally exhibit a ``core-periphery''
structure. Prominent features of such network topologies in terms
of several descriptive statistics are: (1) a large average local clustering
coefficient (ALCC), measured by the number of pairs of linked neighbors
devided by the number of possible pairs of neighbors; (2) a large
global clustering coefficient, measured by the number of closed triangles
devided by the number of triangles; (3) a short diameter (D), measured
by the number of links in the longest of all shortest paths between
any two agents). Both clustering coefficients indicate the degree
to which small groups of agents in a network tend to keep close ties
to each other, and the diameter is an index of the entire network's
density. Note that a large clustering coefficient does \textit{not}
guarantee a small diameter. For instance, a ``chain'' network created
by connecting many small cliques has a large clustering coefficient
but a large diameter.

Our findings are consistent with data collected from existing real
networks. To see this, we compare our predictions with data obtained
from sample social networks from Facebook and collaboration networks
of Arxiv High Energy Physics (AHEP) \footnote{Source of datasets: SNAP Datasets: Stanford Large Network Dataset
Collection \cite{snapnets}.}, and also with simulated networks following models with myopic agents,
using pairwise stability introduced by Jackson and Wolinsky\cite{JW}
as the solution concept in each period. In the simulation ``Myopic
1'', we assume that the payoff from connecting to any one agent before
spatial discount is 10, the link maintenance cost is 5, and the spatial
discount factor is 0.6. In the simulation ``Myopic 2'', we assume
that there are three types of agents; connecting to each type yields
a payoff of 16, 10 and 6 before spatial discount respectively, and
the cost and the spatial discount factor remain the same. The ratio
of types is $1:2:3$, that is type 1, 2 and 3 agents account for $\frac{1}{6}$,
$\frac{1}{3}$ and $\frac{1}{2}$ of the population correspondingly.
The simulated network formation process is run for a sufficiently
long time such that each pair of agents is selected at least twice
in expectation. The ``Foresighted Model'' column shows descriptive
statistics for the strongly efficient network in our model, with the
same group size and type distribution as ``Myopic 2''. Table 1 below
provides summary statistics on the networks and Figure 3 illustrates
the actual network topology in AHEP\footnote{This visualization of network is provided by Tim Davis at TAMU. Retrieved
from http://www.cise.ufl.edu/research/sparse/matrices/SNAP/.}. In Table 1, the entry ``90\% D'' represents the 90th percentile
in the distribution of path length.

We find that the actual networks recorded are considerably closer
to those predicted by our model with foresighted agents than by models
with myopic agents which are representative of much previous literature.
In the two models with myopia, the network is not clustered (small
ALCC and GCC), suggesting a relatively small group of ``super star''
agents that link to many ``periphery'' agents, but showing little
direct relation among the ``periphery'' agents. This is not true
for the actual networks (large ALCC and GCC). In contrast, the strongly
efficient network we have characterized captures this charateristics
of high clustering, and we have shown that when agents are foresighted
this network can be supported in equilibrium. It corroborates our
earlier statement that our model with foresight provides a more appropriate
framework of analyzing network formation, which leads to more realistic
predictions for actual networks.

Another observation that can be made based on these results is that
the formed networks are rather dense (small D), confirming the well-studied
“small world” phenomenon. However, the diameter of an actual network
is typically larger than that predicted by the network formations
models. We believe that this difference in diameter results from the
simplistic meeting process adopted in all this literature: individuals
in an actual network do not meet with uniform probabilities; instead,
some agents may meet more often while others only seldom. Hence, we
believe that an important topic of future research is understanding
the effect of different meeting processes on the emerging networks.
\begin{table}
\centering%
\begin{tabular}{|c|c|c|c|c|c|}
\hline 
\multirow{2}{*}{} & \multicolumn{2}{c|}{Actual} & \multicolumn{2}{c|}{Simulation} & \multirow{2}{*}{Foresighted Model}\tabularnewline
\cline{2-5} 
 & Facebook  & AHEP  & Myopic 1  & Myopic 2  & \tabularnewline
\hline 
\hline 
$N$  & 4039  & 12008  & 1000  & 1000  & 1000\tabularnewline
\hline 
ALCC  & 0.6055  & 0.6115  & 0.1357  & 0.1957  & 0.4251\tabularnewline
\hline 
GCC  & 0.2647  & 0.3923  & 0.0458  & 0.0756  & 0.3570\tabularnewline
\hline 
D  & 8  & 13  & 2 & 2 & 2\tabularnewline
\hline 
90\% D  & 4.7  & 5.3  & 2  & 2  & 2\tabularnewline
\hline 
\end{tabular}

\caption{Summary statistics of networks}
\end{table}

\begin{figure}[h]
\centering \includegraphics[width=4in]{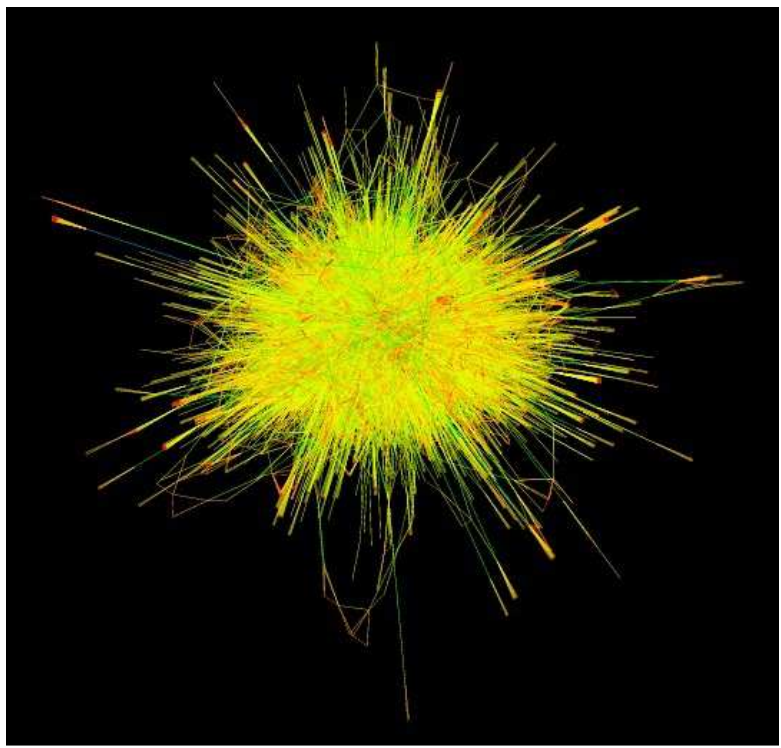}
\caption{AHEP network}
\end{figure}

\section{Network Convergence Theorem with Incomplete Information}

In real-life applications, agents will not usually know the types
of the other agents before they are linked to them. As we have shown
in our prior work in\cite{SV}, the introduction of incomplete information
leads to significant differences in agents' strategic behavior and
equilibrium network topology. In this section, we extend the Network
Convergence Theorem to the environment with incomplete information.
Surprisingly, we are able to identify an undemanding condition on
the payoff structure under which the formation process will again
converge even in this setting to the strongly efficient network in
equilibrium.

\subsection{Modeling Incomplete Information}

At the beginning of $t=1$, each agent only knows her own type and
holds the prior belief (that types are i.i.d. according to $H$) on
other agents' types.

Let $\mathcal{B}=\Delta(\Theta^{N})$ denote the set of possible beliefs
on the type vector. A (pure) \textbf{strategy} of agent $i$ is now
a mapping 
\begin{align*}
s_{i}:\mathcal{B}\times I-i\times Y\times\Omega\times Z\rightarrow\mathcal{A},
\end{align*}
with the same constraint $s_{i}(\cdot,\cdot,\cdot,0,0)\equiv0$. An
\textbf{equilibrium} is similarly defined as before, except for the
additional requirement that $i$ maximizes her expected discounted
total payoff given her belief at every period.

Let $B_{i}:Y\rightarrow\mathcal{B}$ denote agent $i$'s \textit{belief
updating function}, which is a mapping from the set of possible public
signals to the set of possible beliefs. We assume that it satisfies
the following properties: 
\begin{itemize}
\item {1.} $i$ knows her own type: regardless of $\sigma(t)$, she puts
probability $1$ on her true type. 
\item {2.} $i$ knows the type of any agent that she has been connected
to: if some ${\bf {g}}$ such that $ij\in{\bf {g}}$ has ever been
formed in $\sigma(t)$, then $i$ always puts probability $1$ on
$j$'s true type starting from period $t$. 
\item {3.} Agents use Bayesian updating whenever possible. We adopt the
convention that when Bayes rule does not apply, agents maintain the
same priors. 
\end{itemize}
We now define some concepts related to the payoff structure that will
be useful in constructing equilibrium strategies later. First we define
a \textit{partial equilibrium network} for a subset of agents.

\begin{defn}[Partial equilibrium network]Given $\bar{\theta}$, a
network ${\bf {g}}$ formed in $I'\subset I$ is a \textbf{partial
equilibrium network} for $I''\subset I'$ if (1) each agent in $I''$
gets a positive payoff from ${\bf {g}}$; (2) no agent in $I''$ can
increase her payoff by severing any of her links in ${\bf {g}}$.
\end{defn}

Given a subset of agents $I'$ and the associated type vector $\bar{\theta}_{I'}$,
consider a function $r:\Theta^{|I'|}\rightarrow G_{I'}$. We define
the following property for $r$:

\begin{defn}[Admissible function]We say that $r$ is an \textbf{admissible
function} for $I'$ if for every $\bar{\theta}_{I'}\in\Theta^{|I'|}$,
$r(\bar{\theta}_{I'})$ is a network such that (1) every non-singleton
agent in $r(\bar{\theta}_{I'})$ has a positive payoff; (2) there
exists a partial equilibrium network in $G_{I'}$ for the set of singleton
agents in $r(\bar{\theta}_{I'})$, denoted as $r'(\bar{\theta}_{I'})$.
We say that $I'$ is admissible if there exists an admissible function
$r$ for $I'$. \end{defn}

In a partial equilibrium network, no agent has incentive to unilaterally
sever any of her links; hence, the name ``partial equilibrium''.
Then (the existence of) an admissible function essentially characterizes
a particular type of agent subgroup: it maps every type vector in
the subgroup to a network that provides every connected agent a positive
payoff, and at the same time guarantees the existence of a partial
equilibrium network for the set of singleton agents. Intuitively,
the former network can be sustained in the long run when agents are
patient, and the latter can be used as a way to reward the future
singleton agents for their revelation of private information. We will
construct an equilibrium strategy profile following this argument
in the next section.

In many cases, the whole agent set $I$ is admissible. One of the
simplest scenarios is that for every type vector there exists a network
yielding a positive payoff for every agent, so that a partial equilibrium
network will not even be necessary because the set of singleton agents
will be empty. For instance, consider the connections model discussed
before, and consider the following two-type scenario: $\Theta=\{\alpha,\beta\}$,
$N=5$, $f(\alpha)>c>f(\beta)$, $(1+\delta)f(\beta)>c$, and $f(\alpha)+3f(\beta)>3c$.
First, note that when there exists at least one type $\alpha$ agent,
a partial equilibrium network for any number of type $\beta$ agents
is a star network with a type $\alpha$ agent as the center. Then
the following function $r$ is an admissible function for $I$:

\begin{center}
$r(\bar{\theta})=\left\{ \begin{array}{c}
\text{Star network with type \ensuremath{\alpha} center, if at least two agents are of type \ensuremath{\alpha}}\\
\text{Wheel network, otherwise.}
\end{array}\right.$ 
\par\end{center}

It is easy to verify that $r$ is indeed admissible for $I$. Moreover,
it is also straight forward to show that $r$ maps $\bar{\theta}$
to the strongly efficient network whenever the strongly efficient
network gives every agent a positive payoff. In this type of payoff
structure, the larger $N$ is, the more likely such a simple admissible
function for $I$ exists. When $N$ is large, even if $f(\alpha)$
and $f(\beta)$ are both small relative to $c$, a topology such as
a star or a wheel may still ensure a positive payoff for every agent.
In the case where $r$ maps a type vector to some unconnected network
(a network with singleton agents), a star or a wheel can be used as
partial equilibrium networks (in the case of a star, the singleton
agents in $r$ would be placed in the periphery).

\subsection{Construction of Equilibrium Strategies}

As with complete information, we explicitly construct a strategy profile
that will constitute an equilibrium when agents are sufficiently patient.
First we specify the associated monitoring structure, denoted $y_{ic}$.
Similarly to the case with complete information, $y_{ic}$ measures
the least informative monitoring structure needed for our network
convergence theorem.

To simplify the description of $y_{ic}$, we first introduce some
additional notations. For a given time period $t$ and a given subgroup
of agents $I'\subset I$, we denote as $I''_{1}({\bf {g}},t)$ the
subset of agents in $I'$ that failed to form/maintain a link in ${\bf {g}}$
when possible, and as $I''_{2}(t)$ the subset of agents in $I'$
that formed/maintained a link between $I'$ and $I\setminus I'$.
We let $\hat{{\bf {g}}}_{I'}$ denote the clique on $I'$. Finally,
we say that information is complete within $I'$ if every agent knows
the type of every other agent in $I'$, and that information is incomplete
within $I'$ otherwise. Denote these events in period $t$ as $O_{c}(I',t)$
and $O_{ic}(I',t)$. 
\begin{itemize}
\item {1.} $Y=\{X_{0},X_{1},T,E_{C},E_{P}\}\times2^{I}$. The subset of
$I$ in the second argument represents the subgroup of \textit{non-solitary
agents}, and $X_{0},X_{1},T,E_{C},E_{P}$ represent five phases with
respect to this subgroup: the \textit{experimentation phase with incomplete
information}, \textit{experimentation phase with complete information},
\textit{transition phase}, \textit{exploitation phase with cooperation}
and \textit{exploitation phase with punishment} correspondingly. We
will define and explain these concepts later. 
\item {2.} $y_{ic}(\sigma(0))=(X_{0},I)$. 
\item {3.} In period $t\geq1$, for every pair of agents $i,j\in I'\subset I$:

\begin{itemize}
\item {a.} If $y_{ic}(\sigma(t-1))=(X_{0},I')$: 
\begin{align*}
O_{c}(I'\setminus I''_{1}(\hat{{\bf {g}}}_{I'},t),t) & \rightarrow y_{ic}(\sigma(t))=(T,I'\setminus I''_{1}(\hat{{\bf {g}}}_{I'}),t)\\
O_{ic}(I'\setminus I''_{1}(\hat{{\bf {g}}}_{I'},t),t) & \rightarrow y_{ic}(\sigma(t))=(X_{0},I'\setminus I''_{1}(\hat{{\bf {g}}}_{I'}),t).
\end{align*}

\item {b.} If $y_{ic}(\sigma(t-1))=(X_{1},I')$: $y_{ic}(\sigma(t))=(T,I'\setminus(I''_{1}(r'(\bar{\theta}_{I'}),t)\cup I''_{2}(t)))$. 
\item {c.} If $y_{ic}(\sigma(t-1))=(T,I')$: $y_{ic}(\sigma(t))=(E_{C},I')$
if $r'(\bar{\theta}_{I'})$ has been the network topology within $I'$
(including no link between $I'$ and $I\setminus I'$) for a fixed
number of $J$ consecutive periods. Otherwise, $y_{ic}(\sigma(t))=(T,I'\setminus(I''_{1}(r'(\bar{\theta}_{I'}),t)\cup I''_{2}(t)))$. 
\item {d.} If $y_{ic}(\sigma(t-1))=(E_{C},I')$: 
\begin{align*}
I''_{1}(r(\bar{\theta}_{I'}),t)\cup I''_{2}(t)=\varnothing & \rightarrow y_{ic}(\sigma(t))=(E_{C},I')\\
I''_{1}(r(\bar{\theta}_{I'}),t)\cup I''_{2}(t)\neq\varnothing & \rightarrow y_{ic}(\sigma(t))=(E_{P},I')
\end{align*}

\item {e.} If $y_{ic}(\sigma(t-1))=(E_{P},I')$: if $y_{ic}=(E_{P},I')$
for a fixed number of $K$ consecutive periods, then $y_{ic}(\sigma(t))=(E_{C},I')$.
Otherwise, $y_{ic}(\sigma(t))=(E_{P},I')$. 
\end{itemize}
\end{itemize}
Essentially, the realization of $y_{ic}$ reveals publicly the current
phase of the game and whether agents in a certain subgroup $I'$ are
cooperating in every phase. The meaning of cooperation is phase-specific.
In the experimentation phase, agents are supposed to form and maintain
every link within $I'$ whenever possible, until information becomes
complete in $I'$ which brings the game into the transition phase.
Then cooperation among agents becomes forming the network $r'(\bar{\theta}_{I'})$
and keeping it for $J$ periods, with no link with $I\setminus I'$
at the same time. In these two phases, anyone who fails to cooperate
will be marked as a solitary agent. Afterwards, the game enters the
exploitation phase and the public signal works in the same way as
$y_{c}$ in the previous section, with $r(\bar{\theta}_{I'})$ as
the designated network.

Now we characterize the strategy profile based on $y_{ic}$, denoted
$\hat{s}_{ic}$. For every $i,j\in I$: 
\begin{itemize}
\item {1.} If $\max\{\omega_{ij},\zeta_{ij}\}=1$, the set of non-solitary
agents $I'$ is admissible and $i,j\in I'$, then $a_{ij}=1$ if any
of the following is true:

\begin{itemize}
\item {a.} $y_{ic}=(X_{0},I')$. 
\item {b.} $y_{ic}=(X_{1},I')$ and $ij\in r'(\bar{\theta}_{I'})$. 
\item {c.} $y_{ic}=(T,I')$ and $ij\in r'(\bar{\theta}_{I'})$. 
\item {d.} $y_{ic}=(E_{C},I')$ and $ij\in r(\bar{\theta}_{I'})$. 
\end{itemize}
\item {2.} $a_{ij}=0$ in all the other cases. 
\end{itemize}
In this strategy profile, agents in an admissible set cooperate as
much as they can according to the public signal. First, they reveal
their types by forming and maintaining links in the experimentation
phase until information becomes complete. In the transition phase
that follows, they form a partial equilibrium network $r'(\bar{\theta}_{I'})$
to provide positive payoffs to the singleton agents in network $r(\bar{\theta}_{I'})$,
the network that will persist in the long run. After $r'(\bar{\theta}_{I'})$
has existed for a specified length of time, the agents enter the exploitation
phase in which the formation process ultimately converges to $r(\bar{\theta}_{I'})$.
Agents who do not conform before the exploitation phase are categorized
as solitary agents and are left as singletons for ever, and those
who deviate during the exploitation phase only get temporary punishment.
Same as before, the exact deviating agent(s) cannot be identified,
so any punishment would be placed on pairs of agents rather than individual
ones.

\subsection{The Network Convergence Theorem with Incomplete Information}

In an environment with incomplete information, the Network Convergence
Theorem still holds in an admissible set of agents, but our constructed
equilibrium strategy profile leads to weak convergence only. The proof
below shows that when agents are sufficiently patient, (1) there exists
a length of punishment $K$ in the exploitation phase to ensure cooperation
and (2) there exists a length of reward $J$ in the transition phase
to ensure information revelation for the singleton agents in $r(\bar{\theta})$
in the experimentation phase.

\begin{thm} If $I$ is admissible, then for every admissible function
$r$ for $I$ there exists a cutoff $\bar{\gamma}\in(0,1)$ such that
if $\gamma\in[\bar{\gamma},1)$ and the true profile of types is $\bar{\theta}$,
then there exists an equilibrium in which the formation process converges
weakly to $r(\bar{\theta})$. \end{thm}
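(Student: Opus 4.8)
The plan is to show that the strategy profile $\hat{s}_{ic}$ together with the monitoring structure $y_{ic}$ constitutes an equilibrium, for a suitable fixed choice of the punishment length $K$ and the reward length $J$ and all $\gamma$ close enough to $1$, and that along its path the formation process converges weakly to $r(\bar{\theta})$. I would split the argument into the path-convergence claim, the ``easy'' incentive constraints (non-singletons, and the exploitation phase), and the one genuinely delicate constraint, which is the revelation incentive of the agents destined to be singletons.

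First I would dispose of convergence on the equilibrium path, where $I'=I$ throughout since no agent is ever marked solitary. From $(X_{0},I)$ every selected pair forms its link, so by a coupon-collector argument (the bound used in the proof of Lemma~1) after an almost surely finite number of periods every pair has been selected, information is complete within $I$, and the signal moves to $(T,I)$; there the agents sever all links outside $r'(\bar{\theta})$ — possible unilaterally via the severance move — and maintain $r'(\bar{\theta})$ for $J$ periods, after which the signal moves to $(E_{C},I)$, where the links outside $r(\bar{\theta})$ are severed and the missing links of $r(\bar{\theta})$ are added as the relevant pairs are selected, again in almost surely finitely many periods, after which $r(\bar{\theta})$ persists and the signal stays at $(E_{C},I)$. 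Hence with probability one the network is eventually constant and equal to $r(\bar{\theta})$, which is weak convergence; it is only weak, not strong, because off path an agent is removed, the cooperating set shrinks, and the process settles on a different (possibly empty) network.

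For incentives, by the one-step deviation principle it suffices to rule out a one-period deviation in each phase, and since $E_{P}$ prescribes the all-zero profile only deviations in $X_{0}$, $T$ and $E_{C}$ remain. The exploitation phase I would handle by reusing Lemma~1 and the proof of Theorem~1: by admissibility every non-singleton agent of $r(\bar{\theta})$ earns a strictly positive one-period payoff, so applying Lemma~1 to the subgroup $I$ and the network $r(\bar{\theta})$, such an agent's continuation value in $E_{C}$ tends to $\infty$ as $\gamma\to1$ with bounded spread, and there is an integer $\bar{K}$ and a cutoff above which a one-period deviation in $E_{C}$ — which triggers $K$ periods of zero payoff — is unprofitable. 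A deviation by a non-singleton agent in $X_{0}$ or $T$ is detected from the public history and permanently excludes her from the cooperating set, giving her $0$ thereafter; since the one-period gain is at most $\bar{v}$ and the forgone continuation value is unbounded as $\gamma\to1$, this is unprofitable for $\gamma$ large. (Playing $1$ on a link outside the current target network never forms a link and carries no benefit.) A singleton agent of $r(\bar{\theta})$ has no link in $r(\bar{\theta})$, so nothing to gain in $E_{C}$, and by part~(2) of the definition of a partial equilibrium network, severing any of her links in $r'(\bar{\theta})$ during $T$ cannot raise her payoff while it would cost her the positive payoffs of the remaining transition periods.

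The main obstacle is the remaining constraint: a singleton agent $i$ of $r(\bar{\theta})$ must be willing to reveal her type in the experimentation phase although she foresees being unlinked forever. Her only deviation is to withhold a required clique link, which excludes her and yields $0$ from then on (plus a one-period gain of at most $\bar{v}$). If she follows $\hat{s}_{ic}$, her per-period payoff over the remaining experimentation periods is at least the global minimum $W$, and since that phase has a geometrically decaying length the expected discounted total of those payoffs is bounded below by a constant $-A_{0}$ independent of $\gamma$, by the same geometric-tail estimate used in the proof of Lemma~1; moreover in the ensuing transition phase she is a non-singleton of $r'(\bar{\theta})$ and earns a strictly positive per-period payoff $p_{i}=u_{i}(\bar{\theta},r'(\bar{\theta}))$ for $J$ periods. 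Writing $p=\min_{i}p_{i}>0$ over the finitely many singleton agents, the discounted value of this reward evaluated at any point of the (almost surely finite) experimentation phase can be made at least $Jp/2$ for $\gamma$ close to $1$; so I would choose $J$ with $Jp/2>\bar{v}+A_{0}$ and then the corresponding cutoff, which makes following $\hat{s}_{ic}$ strictly better than the deviation in every realization of $\bar{\theta}$ — this also absorbs the agent's belief-dependence during $X_{0}$, since under every realization the prescribed play gives her either an unbounded continuation value (if she turns out to be a non-singleton of $r(\bar{\theta})$) or the bounded-below-plus-reward estimate (if she turns out to be a singleton). Taking $\bar{\gamma}$ to be the largest of the finitely many cutoffs produced and $K\ge\bar{K}$ then yields an equilibrium for all $\gamma\in[\bar{\gamma},1)$, and by the first step its formation process converges weakly to $r(\bar{\theta})$. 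The delicate point throughout is exactly this balancing of the random, unbounded-horizon experimentation phase against a finite compensation for the future singletons, uniformly in $\gamma$.
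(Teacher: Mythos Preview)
Your proposal is correct and follows essentially the same approach as the paper: you verify that $\hat{s}_{ic}$ under $y_{ic}$ is an equilibrium by checking the exploitation phase via Theorem~1/Lemma~1, the transition phase via the two defining properties of a partial equilibrium network, and the experimentation phase by balancing the bounded expected loss against the $J$-period reward for future singletons; you then read off weak convergence from the on-path dynamics. Two minor differences worth noting: the paper is slightly more systematic about off-path histories (it explicitly disposes of solitary agents and the case where the surviving non-solitary set $I'$ is not admissible, whereas you implicitly work with $I'=I$), and it closes with the remark that $J$ must in general be bounded above for a fixed $\gamma$ so as not to destroy the non-singleton agents' incentive in the transition phase---a point you handle correctly but tacitly by fixing $J$ first and then taking $\bar{\gamma}$ as the maximum of the resulting cutoffs; conversely, your treatment is more explicit than the paper's about the quantitative bound $Jp/2>\bar{v}+A_{0}$ and about why the belief-dependence in $X_{0}$ is harmless (both realizations of an agent's eventual role yield a favorable comparison).
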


\begin{proof} Consider the monitoring structure $y_{ic}$ and the
strategy profile $\hat{s}_{ic}$. It suffices to show that for some
$J$ and $K$, there exists $\bar{\gamma}\in(0,1)$ such that for
all $\gamma\in[\bar{\gamma},1)$, $\hat{s}_{ic}$ is an equilibrium.
We need to check for sequential rationality given any possible formation
history. We proceed in the following order:

In the exploitation phase: since information is complete within $I''$,
sequential rationality is given by Theorem 1.

Given any formation history, for every solitary agent: given that
no other agent will agree to form a link with her, her subsequent
action will not affect her payoff, and hence sequential rationality
is satisfied.

Given any formation history such that the set of non-solitary agents
is not admissible, for every non-solitary agent: given that every
other agent is choosing action $0$, her subsequent action will not
affect her payoff, and hence sequential rationality is satisfied.

For every non-solitary agent in an admissible set $I'$ in the transition
phase: we need to discuss two cases. Following the proof of Theorem
1, we argue as follows: 
\begin{itemize}
\item {1.} For every singleton agent $i$ in $r(\bar{\theta}_{I'})$:
since $I'$ is admissible, $i$'s payoff in this phase is positive
(and bounded away from zero, from the assumptions that $\Theta$ is
finite and $I$ is finite) for $J$ periods in $r'(\bar{\theta}_{I'})$
if she follows the prescribed strategy. Also, her maximum expected
loss (negative payoff) before $r'(\bar{\theta}_{I'})$ is formed for
the first time and in the exploitation phase is bounded above regardless
of $\gamma$. Hence, given a sufficiently large $J$ and a sufficiently
large $\gamma$, $i$ does not have the incentive to deviate before
$r'(\bar{\theta}_{I'})$ is formed for the first time and become a
solitary agent (the payoff from which is bounded above regardless
of $\gamma$). After $r'(\bar{\theta}_{I'})$ is formed, since it
is a partial equilibrium network by assumption, there is no incentive
for $i$ to deviate and sever any of her links. 
\item {2.} For every non-singleton agent $j$ in $r(\bar{\theta}_{I'})$:
given $J$, $j$'s maximum expected loss in this phase and before
$r(\bar{\theta}_{I'})$ is formed in the exploitation phase is bounded
above regardless of $\gamma$. By the definition of $r$, $j$'s payoff
in $r(\bar{\theta}_{I'})$ is positive (and bounded away from zero,
by the same argument as above), which realizes every period after
$r(\bar{\theta}_{I'})$ is formed in the exploitation phase. Hence,
given a sufficiently large $\gamma$, $j$ does not have the incentive
to deviate in this phase and become a solitary agent. 
\end{itemize}
For every non-solitary agent in an admissible set $I'$ in the experimentation
phase: we need to discuss two cases. Following the proof of Theorem
1, we argue as follows: 
\begin{itemize}
\item {1.} For every singleton agent $i$ in $r(\bar{\theta}_{I'})$:
$i$'s maximum expected loss in this phase is bounded above regardless
of $\gamma$ since according to the strategy profile, information
becomes complete within finitely many periods almost surely. Then
with a similar argument to part (1) above, given a sufficiently large
$J$ and a sufficiently large $\gamma$, $i$ does not have the incentive
to deviate and become a solitary agent. 
\item {2.} For every non-singleton agent $j$ in $r(\bar{\theta}_{I'})$:
again, $j$'s maximum expected loss in this phase is bounded above
regardless of $\gamma$. Then with a similar argument to part (2)
above, given a sufficiently large $\gamma$, $j$ does not have the
incentive to deviate and become a solitary agent. 
\end{itemize}
This completes the proof. Note that there needs to be an upper bound
on $J$ in generic cases: given $\gamma$, the larger $J$ gets, the
less incentive a non-solitary and non-singleton agent in $r(\bar{\theta}_{I'})$
may have for following the prescribed strategy in the transition phase.
\end{proof}

The reason why we are not able to show strong convergence directly
is due to the incomplete information. Under complete information,
agents' beliefs on the type vector stay constant (on the true types)
over time despite the possibly changing public signals, which guarantees
unanimous knowledge on the payoff structure. Under incomplete information,
however, the beliefs can be heterogeneous and can evolve over time
according to the realization of public signals. The evolution of beliefs,
in turn, leads to each agent forming a belief on others' beliefs,
and hence it is difficult for them to agree on cooperation towards
one network topology. No matter how precise the public signal is (the
least precise being one constant signal, and the most precise being
equal to the formation history), this potential obstacle to coordination
exists as long as there is incomplete information on the type vector
among the agents.

Similar to Corollary 1, we obtain a result on sustaining a strongly
efficient network in equilibrium.

\begin{cor} Assume that $I$ is admissible. If $r(\bar{\theta})$
is strongly efficient for every $\bar{\theta}$, there exists $\bar{\gamma}\in(0,1)$
such that for all $\gamma\in[\bar{\gamma},1)$, there exists an equilibrium
where the formation process always converges to a strongly efficient
network weakly. \end{cor}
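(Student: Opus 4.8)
The plan is to obtain the statement as an immediate consequence of Theorem 3, the only extra ingredient being the finiteness of the type space. First I would fix an admissible function $r$ for $I$ enjoying the stipulated property that $r(\bar{\theta})$ is strongly efficient for every type profile $\bar{\theta}\in\Theta^{N}$; by hypothesis such an $r$ is available, and since $I$ is admissible it can be plugged into the construction of $y_{ic}$ and $\hat{s}_{ic}$ from Section 6.2. Applying Theorem 3 to this particular $r$ produces a cutoff $\bar{\gamma}\in(0,1)$ such that for every $\gamma\in[\bar{\gamma},1)$ the profile $\hat{s}_{ic}$ is an equilibrium.

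The key observation is that $\hat{s}_{ic}$ is a \emph{single} strategy profile, specified through the public signal $y_{ic}$ and each agent's current belief rather than through the realized type vector, so the same profile is the relevant equilibrium simultaneously for all realizations of $\bar{\theta}$. (Should one read the cutoff in Theorem 3 as a priori depending on $\bar{\theta}$, this causes no difficulty: $\Theta$ and $I$ are both finite, hence $\Theta^{N}$ is finite, and one simply takes $\bar{\gamma}$ to be the maximum of the finitely many resulting cutoffs.) Theorem 3 then guarantees that under $\hat{s}_{ic}$, whenever the true profile is $\bar{\theta}$, the formation process converges weakly to $r(\bar{\theta})$; and by the choice of $r$ this network is strongly efficient. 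Thus along every type realization the process converges weakly to a strongly efficient network, which is precisely the assertion, the word ``always'' referring to the quantification over realizations $\bar{\theta}$.

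I do not anticipate a genuine obstacle, as all the work is done inside Theorem 3; the single point deserving a word of care is that the strongly efficient network may itself vary with $\bar{\theta}$, so the conclusion is convergence to \emph{a} strongly efficient network --- the one matching the realized types --- rather than to a fixed one, which is why the statement is phrased with the indefinite article. It is also worth recording, exactly as in the remark following Corollary 1 and in the discussion after the proof of Theorem 3, that ``weakly'' cannot in general be upgraded to ``strongly'' here, because under incomplete information beliefs (and beliefs about beliefs) evolve with the public signal and obstruct the coordination that strong convergence would require.
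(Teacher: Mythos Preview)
Your proposal is correct and matches the paper's approach: the paper states this corollary without proof, treating it as an immediate consequence of Theorem 3 (just as Corollary 1 followed from Theorem 1), and your argument spells out precisely that deduction. Your parenthetical about taking the maximum over the finitely many type profiles is a harmless safeguard, though in fact the cutoff $\bar{\gamma}$ produced in Theorem 3 already depends only on $r$ and not on the realized $\bar{\theta}$, since $\hat{s}_{ic}$ is a single strategy profile shown to be an equilibrium uniformly.
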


\subsection{Connections Model Revisited}

We have discussed how introducing incomplete information into the
strategic environment may have a considerable impact on the set of
sustainable networks. Constraining an agent's knowledge on the type
vector curbs her willingness to form costly links in anticipation
that such attempts may turn out to be futile. As a result, a sustainable
network under complete information may never emerge and persist under
incomplete information. Curiously, the impacts of complete and incomplete
information on the network formation process depend on agents' patience.
When agents are myopic, incomplete information can be a catalyst for
welfare improvement; when agents are very patient, it turns around
to become an impediment.

We use the previously discussed connections model to make the comparison.
Consider a connections model (with an arbitrary set of types) where
the payoff of agent $i$ from connecting to $j$ is $f(\theta_{j})$
(before spatial discount), and the expectation of payoff from a single
agent is larger than the link maintenance cost: $\mathbb{E}[f(\theta_{j})]>c$.
To avoid technical complications, we assume that in every network
the payoff of every connected agent is strictly positive. and that
the formation history is public knowledge: $y(\sigma(t))=\sigma(t)$.

Given a type vector $\bar{\theta}$ and a discount factor $\gamma$,
let $G_{c}^{W}(\bar{\theta},\gamma)$ ($G_{c}^{S}(\bar{\theta},\gamma)$)
denote the set of networks ${\bf {g}}$ such that there exists an
equilibrium where the formation process weakly (strongly) converges
to ${\bf {g}}$. Define $G_{ic}^{W}(\bar{\theta},\gamma)$ and $G_{ic}^{S}(\bar{\theta},\gamma)$
similarly.

\begin{prop} For every $\bar{\theta}$, there exists $\bar{\gamma},\underline{\gamma}\in(0,1)$
such that: 
\begin{itemize}
\item {a.} if $\gamma\in(0,\underline{\gamma})$, then $G_{c}^{S}(\bar{\theta},\gamma)\subset G_{c}^{W}(\bar{\theta},\gamma)\subset G_{ic}^{W}(\bar{\theta},\gamma)$ 
\item {b.} if $\gamma\in(\bar{\gamma},1)$, then $G_{ic}^{S}(\bar{\theta},\gamma)\subset G_{ic}^{W}(\bar{\theta},\gamma)\subset G_{c}^{W}(\bar{\theta},\gamma)=G_{c}^{S}(\bar{\theta},\gamma)$ 
\end{itemize}
\end{prop}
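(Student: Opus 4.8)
The plan is to peel off the trivial containments and concentrate on the two substantive ones. The inclusions $G_{c}^{S}(\bar{\theta},\gamma)\subseteq G_{c}^{W}(\bar{\theta},\gamma)$ and $G_{ic}^{S}(\bar{\theta},\gamma)\subseteq G_{ic}^{W}(\bar{\theta},\gamma)$ hold by definition, since strong convergence is weak convergence following \emph{every} history; and each of these sets is a subset of the finite collection $G(I)$ of all networks. Hence the real content of the proposition is: (a) $G_{c}^{W}(\bar{\theta},\gamma)\subseteq G_{ic}^{W}(\bar{\theta},\gamma)$ when $\gamma$ is small, and (b) $G_{ic}^{W}(\bar{\theta},\gamma)\subseteq G_{c}^{W}(\bar{\theta},\gamma)=G_{c}^{S}(\bar{\theta},\gamma)$ when $\gamma$ is near $1$.

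For part (b) I would obtain everything as a consequence of Theorem 1 together with the standing assumption of this subsection that every connected agent earns a strictly positive one-period payoff in every network. For any ${\bf g}\in G(I)$, its non-singleton agents then have $u_{i}(\bar{\theta},{\bf g})>0$ while its singletons have $u_{i}(\bar{\theta},{\bf g})=0$; the construction of Theorem 1 carries over once one observes that under $\hat{s}_c$ a singleton of ${\bf g}$ is never offered a link and hence has no profitable deviation, so there is a cutoff $\bar{\gamma}({\bf g})<1$ with ${\bf g}\in G_{c}^{S}(\bar{\theta},\gamma)$ for all $\gamma\ge\bar{\gamma}({\bf g})$. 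Since $G(I)$ is finite, $\bar{\gamma}:=\max_{{\bf g}\in G(I)}\bar{\gamma}({\bf g})<1$, so for $\gamma\in(\bar{\gamma},1)$ we get $G_{c}^{S}(\bar{\theta},\gamma)=G(I)$; combined with the trivial inclusions this gives $G_{c}^{W}(\bar{\theta},\gamma)=G_{c}^{S}(\bar{\theta},\gamma)=G(I)$, whence $G_{ic}^{S}(\bar{\theta},\gamma)\subseteq G_{ic}^{W}(\bar{\theta},\gamma)\subseteq G(I)=G_{c}^{W}(\bar{\theta},\gamma)$ is automatic.

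For part (a) I would fix ${\bf g}\in G_{c}^{W}(\bar{\theta},\gamma)$, take a complete-information equilibrium $s^{*}$ that converges weakly to ${\bf g}$ from the empty network, and build an incomplete-information equilibrium converging weakly to ${\bf g}$ by prepending two finite phases to $s^{*}$. First an \emph{experimentation phase}: whenever a pair is matched, each agent agrees to a link with the other precisely when she does not yet know the other's type (a single period of linking suffices for the belief-updating rule to reveal a type), so after finitely many periods almost surely every pair has been matched and information is complete. Then a one-period \emph{reset} in which every agent severs every link, after which the agents restart $s^{*}$ from scratch; the reset guarantees the network is empty when $s^{*}$ takes over, so $s^{*}$ applies verbatim (beliefs are now point masses on the true types) and drives the process to ${\bf g}$. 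Sequential rationality of the restarted $s^{*}$ is inherited from that of $s^{*}$ at every history; the reset period is incentive compatible because no agent can unilaterally preserve a link that others are severing; and the experimentation phase is incentive compatible for small $\gamma$ because forming a link with an agent of unknown type has strictly positive \emph{myopic} value — here the assumption $\mathbb{E}[f(\theta_{j})]>c$ is exactly what is needed, since the expected direct benefit exceeds the cost and adding a link can only shorten distances — so the order-$1$ myopic gain from complying dominates the order-$\gamma$ continuation effect of any one-match deviation, uniformly over the finitely many networks and configurations, which fixes the cutoff $\underline{\gamma}$.

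The main obstacle is precisely the incentive analysis of the experimentation phase. The benign deviations — refusing a link with an unknown partner on a given match, or keeping versus severing already-revealed links differently — are handled by the myopic-dominance argument above, but an agent might instead \emph{permanently} refuse to ever link with some other agent, thereby blocking the phase from terminating and locking the group into myopic play forever; whether this is profitable is not controlled by $\gamma$ being small alone, because the relevant stakes are of order $(1-\gamma)^{-1}$ on both sides. I would resolve this by importing the device from the construction of Theorem 3: designate any agent involved in a persistent failure to form a link during the experimentation phase as ``solitary'', with all other agents refusing her links thereafter, so that such an agent is pinned to payoff $0$, strictly below her payoff from eventually reaching ${\bf g}$ (which is positive by assumption). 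The delicate point — and where the argument must be written carefully — is that $y(\sigma(t))=\sigma(t)$ reveals only that \emph{at least one} of the two matched agents refused, never which one, so the ``solitary'' label must fall on the pair; one must then verify that the non-deviating partner still has no profitable deviation (she cannot avoid the label, and no unilateral action can form the link), exactly as in Section 6. Once this deterrent is in place, the remaining estimates mirror the proof of Theorem 1.
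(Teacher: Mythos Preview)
Your treatment of the trivial inclusions and of part (b) matches the paper's: both invoke Theorem~1 together with the standing assumption that every connected agent has strictly positive one-period payoff, and you are slightly more careful than the paper in handling singletons of the target network.

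For part (a), however, your route diverges substantially from the paper's. You prepend an experimentation phase (link to learn types) and a reset, then replay $s^{*}$ once information is complete; this forces you to confront the ``permanent refusal'' deviation and to import the solitary-agent punishment from Section~6. The paper avoids all of this. Its construction is simply: take the complete-information equilibrium $s_{c}$ and define $s_{ic}$ to replicate $s_{c}$'s on-path actions (as functions of the public formation history) and play all-$0$ off path. No experimentation is needed because information never has to become complete. The incentive check for small $\gamma$ is then purely myopic: if $i$ already knows $j$'s type, the fact that $s_{c}$ is an equilibrium at this $\gamma$ means forming/maintaining the link is myopically non-negative; if $i$ does not know $j$'s type, the strategy profile induces no Bayesian update away from the prior, so the expected direct benefit is $\mathbb{E}[f(\theta_{j})]>c$ and forming is again myopically beneficial; and when the prescription is $0$, mutual refusal is a one-shot Nash equilibrium regardless of beliefs. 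The assumption $\mathbb{E}[f(\theta_{j})]>c$ is thus used not to make experimentation attractive, as in your argument, but to make \emph{replication of $s_{c}$ itself} incentive-compatible under the prior.

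Your approach can likely be made to work, but it is more elaborate and the obstacle you flag is self-inflicted: it arises only because you insist on completing information before invoking $s^{*}$. The paper's insight is that this is unnecessary---incomplete information is harmless here precisely because unknown-type links are myopically attractive, so the complete-information path can be traced out directly.
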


\begin{proof} $G_{c}^{S}(\bar{\theta},\gamma)\subset G_{c}^{W}(\bar{\theta},\gamma)$
and $G_{ic}^{S}(\bar{\theta},\gamma)\subset G_{ic}^{W}(\bar{\theta},\gamma)$
are clear from the definitions of strong and weak convergence. We
can apply Theorem 1 to show that $G_{c}^{W}(\bar{\theta},\gamma)=G_{c}^{S}(\bar{\theta},\gamma)=\{{\bf {g}}:u_{i}(\bar{\theta},{\bf {g}})>0\text{ for all \ensuremath{i}}\}$
when $\gamma$ is close to $1$. It immediately implies that $G_{ic}^{W}(\bar{\theta},\gamma)\subset G_{c}^{W}(\bar{\theta},\gamma)$
when $\gamma$ is close to $1$.

Now we prove that $G_{c}^{W}(\bar{\theta},\gamma)\subset G_{ic}^{W}(\bar{\theta},\gamma)$
when $\gamma$ is close to $0$. For every ${\bf {g}}$, suppose that
there exists an equilibrium $s_{c}$ under complete information where
the formation process converges weakly to ${\bf {g}}$. Consider the
following strategy profile $s_{ic}$ under incomplete information:
following any formation history $\sigma(t)$ which is on the equilibrium
path in $s_{c}$, in period $t+1$ if any link $ij$ is to be formed/maintained
according to $s_{c}$, then $a_{ij}=a_{ji}=1$; otherwise, $a_{ij}=a_{ji}=0$.
Following any formation history that is off the equilibrium path in
$s_{c}$, each agent chooses action $0$ thereafter. Clearly, this
strategy profile is an equilibrium following any formation history
off the equilibrium path in $s_{c}$. For every formation history
on the equilibrium path in $s_{c}^{y}$, it replicates the formation
process according to $s_{c}$. When $\gamma$ is sufficiently small,
each agent is indeed taking a best response. When link $ij$ is to
be formed/maintained according to $s_{c}$, if $i,j$ know each other's
type, the fact that they would have chosen to form/maintain the link
in $s_{c}$ implies that the current payoff from the link outweighs
the cost, so due to myopia $i,j$ will also form/maintain the link.
If $i,j$ do not know each other's type, the prescribed strategy profile
ensures that no Bayes' update occurs and thus their beliefs about
each other's type remains at the prior $H$. By the assumption that
$\mathbb{E}[f(\theta_{j})]>c$, $a_{ij}=a_{ji}=1$ is a best response.
When link $ij$ is to be severed/not formed according to $s_{c}$,
note that $a_{ij}=a_{ji}=0$ is a Nash equilibrium in a one-shot game
and implies mutual best response. Therefore, $s_{ic}$ is an equilibrium
under incomplete information and it produces a formation process identical
to the one under $s_{c}$. We can then conclude that when $\gamma$
is close to $0$, if ${\bf {g}}\in G_{c}^{W}(\bar{\theta},\gamma)$
then ${\bf {g}}\in G_{ic}^{W}(\bar{\theta},\gamma)$, which means
that $G_{c}^{W}(\bar{\theta},\gamma)\subset G_{ic}^{W}(\bar{\theta},\gamma)$.
\end{proof}

When agents are very patient, the set of sustainable networks is largest
when information is complete. To the contrary, when agents are myopic,
the set of sustainable networks can be larger when information is
incomplete. The reason is that when agents are myopic they have less
incentives to form links because they do not take account of future
benefits.

We conclude by showing that more networks are sustainable when agents
are ``more valuable'' whether or not information is complete or
agents are patient. Intuitively, when agents are more valuable, more
networks can be sustained in equilibrium because it is easier to provide
a larger set of agents with a positive payoff.

\begin{prop} Consider any two type vectors $\bar{\theta}$ and $\bar{\theta}'$.
If $f(\theta_{i})\geq f(\theta'_{i})$ for all $i$, then for every
$\gamma\in(0,1)$, $G_{m}^{n}(\bar{\theta},\gamma)\supset G_{m}^{n}(\bar{\theta}',\gamma)$,
$m=c,ic$, $n=S,W$. \end{prop}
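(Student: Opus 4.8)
The plan is to take an equilibrium that witnesses ${\bf g}\in G_m^n(\bar{\theta}',\gamma)$ and carry it over to $\bar{\theta}$, exploiting that in the connections model raising types raises every agent's payoff in every network while leaving all distances and link counts — hence all costs — untouched. So I would first record the elementary bookkeeping that the formula $u_i(\bar{\theta},{\bf g})=\sum_{j:i\overset{{\bf g}}{\leftrightarrow}j}\delta^{d_{ij}-1}f(\theta_j)-\sum_{j:ij\in{\bf g}}c$ yields at once: (i) $u_i(\bar{\theta},{\bf g})\ge u_i(\bar{\theta}',{\bf g})$ for every $i$ and ${\bf g}$; (ii) writing $B_i({\bf g}):=u_i(\bar{\theta},{\bf g})-u_i(\bar{\theta}',{\bf g})=\sum_j\delta^{d_{ij}-1}(f(\theta_j)-f(\theta'_j))\ge0$, the ``type bonus'' $B_i$ is non-increasing under deletion of links incident to $i$, since deletion only weakly increases distances and $\delta\in(0,1)$, $f>0$; and (iii) for $\tilde{\bf g}$ obtained from ${\bf g}$ by deleting links incident to $i$, the one-period gain $u_i(\cdot,\tilde{\bf g})-u_i(\cdot,{\bf g})$ from that deletion is weakly smaller under $\bar{\theta}$ than under $\bar{\theta}'$ (same cost saving, weakly larger benefit loss). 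Then, given a witnessing equilibrium $s'$ under $\bar{\theta}'$, I would ``clean'' it by replacing any prescription in which a link is offered by one party but refused by the other with both parties playing $0$; this changes neither the realized networks nor any payoff, so the cleaned profile is still an equilibrium under $\bar{\theta}'$ with the same on- and off-path network processes, and now the only payoff-relevant unilateral deviation for $i$ at a history $h$ is to sever/refuse a subset $D$ of the links $s'$ tells her to keep, producing the network $\tilde{\bf g}={\bf g}_0\setminus D\subseteq{\bf g}_0$ in that period, where ${\bf g}_0$ is the conforming network at $h$.

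For $n=W$ this closes cleanly. I would additionally replace every off-path continuation of the cleaned profile by ``all agents sever everything forever''; this is weakly harsher than whatever the profile did off path (in any equilibrium an agent's continuation value is $\ge0$, while the all-$0$ profile — an equilibrium by Proposition~1 — delivers exactly $0$), so the profile remains an equilibrium under $\bar{\theta}'$, and on path it is unchanged, hence still converges weakly to ${\bf g}$. To check it is an equilibrium under $\bar{\theta}$ I use the one-step deviation principle: off path everyone plays $0$, a stage-game best response; on path at $h$, a one-step deviation of $i$ to $\tilde{\bf g}$ is detected (the history diverges) and worth $u_i(\bar{\theta},\tilde{\bf g})+\gamma\cdot0$, whereas conforming is worth $V_i^{\bar{\theta}}(h)$ (the expected discounted payoff to $i$ from $h$ on when everyone follows the profile, computed under $\bar{\theta}$), and, with ${\bf g}(h,\tau)$ the random network $\tau$ periods after $h$ on the conforming path (so ${\bf g}(h,0)={\bf g}_0$),
\[
V_i^{\bar{\theta}}(h)-V_i^{\bar{\theta}'}(h)=\mathbb{E}\Big[{\textstyle\sum_{\tau\ge0}}\gamma^{\tau}B_i({\bf g}(h,\tau))\Big]\ \ge\ B_i({\bf g}_0)\ \ge\ B_i(\tilde{\bf g}),
\]
while the original $s'$, being an equilibrium under $\bar{\theta}'$, already forces $V_i^{\bar{\theta}'}(h)\ge u_i(\bar{\theta}',\tilde{\bf g})$ (conforming beats the detected deviation, whose continuation under $s'$ is nonnegative). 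Adding the two gives $V_i^{\bar{\theta}}(h)\ge u_i(\bar{\theta}',\tilde{\bf g})+B_i(\tilde{\bf g})=u_i(\bar{\theta},\tilde{\bf g})$, so the deviation does not pay and ${\bf g}\in G_c^W(\bar{\theta},\gamma)$.

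For $n=S$ the trivial punishment is unavailable, since the off-path play must itself re-converge to ${\bf g}$, so I would instead route through the explicit trigger family $\hat{s}_c$ of Theorem~1. The idea is that strong sustainability of ${\bf g}$ under $\bar{\theta}'$ forces, for each $i$, a Theorem-1-type inequality of the shape (largest one-period payoff increase from severing some of $i$'s links in ${\bf g}$, under $\bar{\theta}'$) $<\gamma\,\underline{\mu}_C^{\bar{\theta}'}(\gamma,\infty)-\gamma^{1+K}A$ to be satisfiable for some $K$ — because the harshest continuation compatible with strong convergence (all-$0$ for arbitrarily many periods, then rebuild) is exactly the one $\hat{s}_c$ implements, and any equilibrium's continuations are dominated by it — and by step~(iii) the left side only shrinks and $\underline{\mu}_C$ only grows when we pass to $\bar{\theta}$, with $A$ taken common to both type profiles, so the same $K$ makes $\hat{s}_c$ an equilibrium converging strongly to ${\bf g}$ under $\bar{\theta}$ via the verbatim estimates in the proof of Theorem~1. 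For $m=ic$ one argues in parallel with $\hat{s}_{ic}$, $y_{ic}$ and the proof of Theorem~3: the prior $H$ is untouched, so the publicly driven belief trajectory and the phase structure of $y_{ic}$ are exactly as before, and since every link is weakly more valuable to every agent, the incentive checks there go through with the same reward length $J$ and punishment length $K$ (the $n=W$, $m=ic$ case being inherited likewise).

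The step I expect to be genuinely delicate is this last reduction: making rigorous that $\hat{s}_c$ (resp.\ $\hat{s}_{ic}$) is deterrence-optimal among equilibria that re-converge to ${\bf g}$, so that strong sustainability is governed by an inequality that is monotone in the type vector — and, in the incomplete-information case, verifying that changing the realized types cannot perturb the equilibrium belief updating. The weak, complete-information case is the clean core, and the rest is engineered to reduce either to it or to Theorems~1 and~3.
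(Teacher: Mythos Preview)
Your proposal and the paper part ways at the strong-convergence case, and the divergence stems from a step you added in the weak case that the paper does not take.

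The paper's proof handles all four cases $(m,n)\in\{c,ic\}\times\{S,W\}$ with one uniform argument. It takes an equilibrium $s$ witnessing ${\bf g}\in G_m^n(\bar\theta',\gamma)$, cleans it exactly as you do (synchronize refusals so that whenever a link is to be severed or not formed, both parties play $0$), and then \emph{uses this cleaned profile unchanged} under $\bar\theta$ --- including its off-path behavior. The equilibrium check is then: playing $0$ against a partner who plays $0$ is always a best reply; and whenever the profile prescribes $1$, the deviation can only delete links, which by your own bookkeeping (iii) is weakly less attractive under $\bar\theta$ than under $\bar\theta'$, while the formation history (and hence, in the incomplete-information case, the entire belief trajectory) generated by any strategy is type-independent. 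Since the cleaned profile has the same on- and off-path network dynamics as $s$, it converges to ${\bf g}$ in the same mode (weak or strong) as $s$ did.

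Your extra step for $n=W$ --- replacing all off-path continuations by ``everyone plays $0$ forever'' --- is what forces you into a separate, delicate argument for $n=S$: it destroys strong convergence, so you can no longer recycle the original profile and instead try to reduce to $\hat s_c$ via a ``deterrence-optimality'' claim. That claim is the genuine gap. The proposition is stated for \emph{every} $\gamma\in(0,1)$, whereas Theorem~1 only shows $\hat s_c$ is an equilibrium above some cutoff $\bar\gamma$; there is no reason a different strongly-convergent equilibrium could not exist under $\bar\theta'$ for small $\gamma$ while $\hat s_c$ fails there, so your reduction need not go through. The paper sidesteps this entirely by keeping the witnessing profile's own off-path play intact. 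Likewise, your concern about belief perturbation under $m=ic$ is handled in the paper simply by observing that Bayesian updates depend only on the formation history, which is the same under both type vectors --- no detour through $\hat s_{ic}$ or Theorem~3 is needed.
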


\begin{proof} Consider a strategy profile $s$ (under either information
structure) such that when the type vector is $\bar{\theta}'$, $s$
is an equilibrium where the formation process converges (either strongly
or weakly) to a network ${\bf {g}}$. Modify it slightly in the following
way: whenever a link $ij$ is supposed to be severed/not formed according
to $s$, agents $i,j$ choose $a_{ij}=a_{ji}=0$. Now consider the
modified strategy profile $s'$ when the type vector is $\bar{\theta}$.
First, it is easy to see that it is indeed a best response to choose
$0$ when a link is supposed to be severed/not formed given that the
other agent is also choosing $0$. Secondly, note that given the same
formation history, the Bayes' update by any agent under $s$ and $s'$
is the same. As a result, whenever the prescribed action in $s'$
is $1$, it is again a best response due to the assumptions that $s$
is an equilibrium and that $f(\theta_{i})\geq f(\theta'_{i})$ for
all $i$. Therefore $s'$ is an equilibrium when the type vector is
$\bar{\theta}'$, and the formation process will converge in exactly
the same way. This completes the proof. \end{proof}

\section{Conclusion}

In this paper, we have studied the problem of dynamic network formation
by foresighted, heterogeneous agents under complete and incomplete
information. A large and growing literature has examined the network
formation process from various aspects, but the impact of agents'
foresight, hetorogeneity and incomplete information which make the
model truly realistic have not been studied. Existing works point
to a limited set of strongly efficient network topologies with homogeneous
agents, and the inability to sustain strongly efficient networks in
equilibrium. We question these results based on two grounds. On one
hand, the assumption of agent homogeneity is hard to justify in most
real-life applications. On the other hand, according to our characterization
of strongly efficient networks with heterogeneous agents and observations
in data collected from existing large networks, networks formed in
practical scenarios appear to be consistent to our predictions. Therefore,
we establish a dynamic network formation model to analyze the network
formation process and explain our findings.

In our model, agents meet randomly over time and voluntarily form
or sever links with each other. Link formation requires bilateral
consent but severance is unilateral. An agent's payoff in a single
period is determined by the network topology, her position in the
network, and the individual characteristics, also referred to as types,
of all agents she connects to (including herself). The agents are
foresighted in the sense that their final payoff is a discounted sum
of payoffs from each period. In every period, the agents observe the
set of their direct neighbors (the agents they link to) and a public
signal which is an indicator of the formation history.

We establish a Network Convergence Theorem under both complete and
incomplete information on the type vector, which characterizes the
set of sustainable networks in equilibrium for patient agents. Under
each environment, we show that a network can be sustained in equilibrium
as long as it provides each agent a positive payoff. As a corollary,
a strongly efficient network is sustainable when every agent's payoff
is positive, which presents a great contrast to the existing literature.
We argue that incomplete information is an important potential source
of inefficiency, which is corroborated by evaluating the lower bound
on agents' patience to sustain strongly efficient networks. Finally,
we use the connections model to fully characterize the set of strongly
efficient networks, whose topologies bear striking resemblance to
networks observed in data. This finding again confirms our theoretical
prediction that strongly efficient networks can be sustained in equilibrium.

We believe that many more problems regarding dynamic network formation
with foresightedness can be analyzed with the framework developed
in this paper. Questions that can be studied in future work include:
(1) how different stochastic meeting processes affect the level of
patience needed for sustaining efficient networks; (2) whether the
signal device can be generalized to transmit information only locally;
(3) in a connections model, how the spatial discount factor affects
the set of sustainable networks and the stability of efficient networks
for arbitrary time discounts.

\newpage{} \bibliographystyle{acm}
\bibliography{reference}

\end{document}